\newtheorem{theorem}{Theorem}
\newtheorem{proposition}[theorem]{Proposition}
\newtheorem{lemma}[theorem]{Lemma}
\newtheorem{definition}[theorem]{Definition}
\newtheorem{remark}[theorem]{Remark}
\newcommand{\epc}{\hspace{1pc}}
\newcommand{\gap}{\vspace{0.1in}}
\newcommand{\onebld}{{\mbox{\boldmath $1$}}}
\newcommand{\wt}{\widetilde}
\newcommand{\thalf}{\tfrac{1}{2}}
\def\ps@pprintTitle{%
  \let\@oddhead\@empty
  \let\@evenhead\@empty
  \let\@oddfoot\@empty
  \let\@evenfoot\@empty
}
\begin{document}

\begin{frontmatter}

\title{Traffic Equilibrium in Mixed-Autonomy 
Network \\
with Capped Customer Waiting}
%

\author[inst1]{Jiaxin Hou}
\ead{jiaxinho@usc.edu}

\author[inst2]{Kexin Wang}
\ead{kwang255@usc.edu}

\author[inst2]{Ruolin Li\corref{cor1}}
\ead{ruolinl@usc.edu}

\author[inst1]{Jong-shi Pang}
\ead{jongship@usc.edu}

\cortext[cor1]{Corresponding author}

\address[inst1]{Daniel J. Epstein Department of Industrial and Systems Engineering, University of Southern California, Los Angeles, California
90089-0193, U.S.A.}
\address[inst2]{Sonny Astani
Department of Civil and
Environmental Engineering, University of Southern California, Los Angeles, California
90089-2531, U.S.A.}

\begin{abstract}
This paper develops a unified modeling framework to capture the equilibrium-state interactions among 
ride-hailing companies, travelers, and traffic of mixed-autonomy transportation networks.
Our framework integrates four interrelated sub-modules: (i) the operational behavior of representative ride-hailing Mixed-Fleet Traffic Network Companies (MiFleet TNCs) managing autonomous vehicle (AV) and human-driven vehicle (HV) fleets, (ii) traveler mode-choice decisions taking into account travel costs and waiting time, (iii) 
capped customer waiting times 
to reflect
the option available to travelers not to wait for
TNCs' service beyond his/her patience and to 
resort to existing travel modes,
and (iv) a flow-dependent traffic congestion model for travel times. A key modeling feature distinguishes AVs and HVs across the pickup and service (customer-on-board) stages: AVs follow Wardrop pickup routes but may deviate during service under company coordination, whereas HVs operate in the reverse manner. The overall framework is formulated as a Nonlinear Complementarity Problem (NCP), which is equivalent to a Variational Inequality (VI) formulation based
on which the existence of a variational equilibrium solution
to the traffic model is established.  Numerical experiments examine how AV penetration and 
Wardrop relaxation factors, which bound route deviation, affect company, traveler, and system performance to various degrees.
The results provide actionable insights for 
policymakers on regulating AV adoption and company 
vehicle deviation behavior in modern-day 
traffic systems that are fast changing due to 
the advances in
technology and information accessibility.

\end{abstract}

\begin{keyword}
Autonomous Vehicles \sep Mixed Autonomy \sep Complementarity \sep Ride-hailing Services \sep Variational Inequality \sep Non-Wardropian Drivers 
\sep Capped customer Waiting

\end{keyword}
\end{frontmatter}

\newpage
\section{Introduction}

The rapid advancement of autonomous vehicle (AV) technology is fundamentally
transforming transportation systems. Recent projections suggest that by 2030, approximately 12\% of new
passenger cars will feature Level 3+ autonomous capabilities, with this figure potentially rising to 37\% by
2035 and generating \$300-\$400 billion in market value \cite{deichmann2023autonomous}.
As this technological transition accelerates, a diverse set of firms has emerged or is actively redefining their roles in the emerging AV ecosystem, each pursuing distinct pathways toward automation. Technology-driven pioneers, such as Waymo, operate fully autonomous ride-hailing services across several U.S. cities, providing over 
250,000 paid trips per week~\cite{WaymoScaling2025}. 
Tesla, as a new entrant, 
 launched robotaxi services with safety drivers in Austin in 2025 and intends to expand to additional cities while progressing toward full autonomous operations\cite{TeslaQ22025}.
Meanwhile, platform incumbents such as Uber and Lyft are approaching automation through strategic partnerships. Uber began integrating Waymo's autonomous vehicles into its ride-hailing platform in Atlanta in 2025 \cite{UberWaymo2025}, while Lyft plans to introduce Waymo's fully autonomous vehicles to its service and expand operations to Nashville in 2026 \cite{LyftWaymo2025}. To examine how traffic network companies coexist in the transportation system, we introduce the concept of \textbf{Mixed-Fleet Traffic Network Companies (MiFleet-TNCs)}, which are service providers that operate both AVs and human-driven vehicles (HVs), and propose a framework in which multiple heterogeneous MiFleet TNCs interact with travelers and traffic. This framework allows us to study the effects of AV adoption on company profitability and system performance, providing insights for MiFleet TNCs on fleet planning and pricing strategies, for travelers on traveling choices, and for regulators on policy design.

The involvement of AVs introduces substantial complexity to both the overall performance of the transportation system and its constituent components. While the advanced coordination and routing capabilities of AVs can enhance vehicle distribution efficiency, improve safety, reduce fuel consumption, and alleviate—or at least avoid exacerbating—traffic congestion \cite{fagnant2015preparing, yang2017impact, olia2016assessing, stern2018dissipation, rossi2018routing}, their profit-driven deployment may intensify traffic imbalances, particularly when AVs are disproportionately allocated to high-demand and high-congestion areas. This behavior can reinforce congestion, reduce fleet efficiency, and ultimately lower long-term profitability \cite{li2020game,li2021employing,mehr2021game}. These dynamics become even more complex in mixed-autonomy environments, where AVs and HVs coexist under differing levels of control and interact with self-interested travelers.
Consequently, the system-wide impacts of automation depend critically on the AV penetration rate. Limited adoption may result in modest efficiency gains and coordination frictions with HVs \cite{zeng2025modeling, zheng2020smoothing, li2020leveraging, huang2019stabilizing}, whereas higher penetration can improve network performance and reduce delays \cite{abdeen2022evaluating, obaid2022autonomous, li2024managing}. However, excessive automation may induce additional travel demand and deadheading, leading to increased vehicle miles traveled (VMT) and vehicle hours traveled (VHT) \cite{childress2015using, auld2017analysis, horl2019fleet, levin2015effects}. Developing a unified framework to understand AV penetration rate is therefore essential for planners and policymakers to determine when adoption delivers net benefits and when it may introduce new challenges.

The objective of this project is to develop a macroscopic unified modeling framework to characterize the \textbf{Mixed-Autonomy General Equilibrium with Customer Waiting Functions
(MAGE-CW)}
assessing the multifaceted impacts of AV adoption on various stakeholders, including MiFleet TNCs with AVs and HVs, travelers, and traffic congestion. The model aims
to describe a transportation system in which profit-making companies operate AV and HV fleets and coordinate their routing strategies, while solo drivers, acting as Nash players, selfishly choose their travel routes.
We develop a unified framework building upon prior works \cite{ban2019general,gu5461575generalized} in e-hailing services, advancing the models therein
to address the operations of multiple competing 
AV and HV fleets 
under traffic congestion, demand-side user preferences, and regulatory constraints within a consistent equilibrium structure.
Our framework is structured into four interrelated sub-modules: (1) \textbf{MiFleet-TNC operation module:} characterizes the operational and economic behavior of ride-hailing companies that manage both AV and HV fleets. (2) \textbf{Traveler choice module:} represents travelers’ decisions in response to service attributes such as travel distance-based, travel time-based, and waiting time cost. (3) \textbf{
Customer waiting time module:} describes 
travelers' waiting times for e-hailing service
that are capped by travelers' option of solo drive
in lieu of the latter service.
(4) \textbf{Traffic congestion module:} describes the macroscopic relationship between vehicular flows and travel times, providing feedback from network conditions to both companies and travelers. Our contributions can be summarized as follows.

\gap

\noindent $\bullet $ \textbf{Modeling:} we develop a unified equilibrium model that captures interactions among MiFleet TNCs, travelers, and traffic. This framework is flexible and can be extended to accommodate multiple heterogeneous fleet types, providing a practical and forward-looking tool for future mixed autonomy transportation system. Key modeling features are as follows:

\noindent --- we explicitly differentiate AV and HV behaviors across pickup and service stages: in the pickup phase, AVs follow Wardropian routing while HVs may deviate due to individual preferences; in the service phase, AVs may deviate under coordinated company control, whereas HVs adhere to shortest-path behavior;

\gap

\noindent --- recognizing the challenge in
prescribing waiting times, we employ an abstraction of these times as continuous functions
of the model variables and illustrate one
such function by an explicit, truncated, queue-based formulation, in which the waiting duration is computed using endogenously determined travel times, dispatch rates, and demand rates;
    
\gap

\noindent $\bullet $ \textbf{Analysis:} we apply
fundamental results from complementarity 
and degree theory to establish the existence 
of an equilibrium solution for the overall traffic model under the sole requirement of continuity of the model functions.
Most importantly, our analysis removes a previous restriction on the availability of TNCs' fleets (the key Lemma~3 in \cite{ban2019general}) and the setting of fixed travel times \cite{gu5461575generalized} in which traffic
congestion is absent.

\gap

\noindent $\bullet $ \textbf{Competitive insights:} we design detailed numerical studies to uncover how AV penetration, pricing strategy, and routing strategy influence equilibrium outcomes. The studies aim to provide insights into different control levels over AVs and deliver actionable guidance for MiFleet operators and regulators on profitability, congestion, and efficiency trade-offs.
    
\gap

\noindent $\bullet $ \textbf{System framework tool:} we provide a general, computational implementable framework for practical scenario analysis, policy evaluation, and adaptive regulation in mixed autonomy transportation networks.

\gap

The remainder of this paper is organized as follows. Section~\ref{sec:related work} reviews the related literature. Section~\ref{sec:model_setting} presents the overall model setup, including the modeling foundation, notation, and the model functions. Section~\ref{sec:math_formulation} presents the mathematical formulation of the submodules. Section~\ref{sec:model_analysis} establishes the existence of an equilibrium solution. Sections~\ref{sec:benchmark_numerical_results} and~\ref{sec:case_study} report numerical experiments that evaluate the effectiveness and reliability of the proposed model. Section~\ref{sec:conclusion} concludes the paper.
\section{Related Work} \label{sec:related work}

Recent studies related to our work can be broadly categorized into three main strands: models of TNC operations, analyses of AV impacts, and multi-agent equilibrium formulations. 

\subsection{TNC Operations}
Extensive research has analyzed the role of TNCs in shaping transportation systems through pricing, fleet management, and matching mechanisms \cite{ban2019general, li2021spatial,ni2021modeling,zha2018geometric,lai2023spatiotemporal,ke2020pricing}. These studies typically represent the interactions among TNCs, travelers, and the traffic network through network equilibrium formulations, where fleet operations, traveler decisions, and congestion outcomes are jointly determined \cite{ban2019general,chen2024network,xu2021equilibrium}. Complementary to this stream, another line of research explicitly captures vehicle-customer matching using queuing-based formulations \cite{braverman2019empty, feng2022approximating,gu5461575generalized,iglesias2019bcmp}.
Most existing studies, however, have focused on homogeneous fleet structures, assuming that all vehicles operate under identical behavioral and coordination rules, with limited attention to the behavioral and operational asymmetries between AV and HV services. This operational-level simplification raises an important question: when fleets become mixed, how does automation reshape the broader transportation system? We address these questions by developing an equilibrium model including competing MiFleet TNCs that operate mixed fleets, incorporating explicit AV-HV differentiation in operational behaviors.

\subsection{Mixed-Autonomy Transportation Systems}

A growing body of literature has explored the implications of AV deployment on transportation system performance. Studies have demonstrated that AVs can improve road safety, dampen wave-and-go phenomenon, and mitigate congestion \cite{fagnant2015preparing,stern2018dissipation,zheng2020smoothing,wu2021flow}. Recent studies on coordinated fleets indicate that these vehicles have the potential to enhance overall network efficiency and lead to system optimum \cite{chen2020path,Battifarano2023TheIO}. Other works, however, suggest potential unintended consequences, including induced travel demand, increased VMT, and spatial imbalances in vehicle distributions \cite{auld2017analysis,childress2015using, castro2024autonomous,chen2024distributional}. 
These contrasting findings highlight the complexity of automation’s system-level impacts. 
While these studies yield valuable insights into local dynamics and AV coordination, they often isolate such effects from traveler demand, fleet heterogeneity, and network congestion. To address this limitation, we develop a mixed-autonomy system that distinguishes AVs from HVs while integrating these critical elements.



\subsection{Multi-Behavior Modeling}

Early research into heterogeneous behavioral models can be traced back to Harker's seminal work~\cite{harker1988multiple}. His model incorporates both price-making and price-taking agents within the same system, allowing different origin-destination (OD) pairs to exhibit diverse strategic behaviors under a unified network structure. 
Subsequent research has acknowledged the complexity inherent in such equilibrium formulations. To manage this, some researchers introduce specific structural restrictions \cite{yang2007stackelberg,yang2017mixed}, while others have adopted simplified assumptions, such as treating demand as exogenously fixed or limiting the structure
of competing fleets. Although some models effectively capture how different travel modes influence congestion and yield useful policy insights, they remain constrained by assuming a single fleet structure across all operators \cite{di2019unified, ban2019general}. 
Moreover, to model traffic delays, prior studies often adopt an assumption that all vehicles on the road behave as Wardrop users, making route choices independently to minimize their own travel costs \cite{ban2019general,di2019unified,xu2015complementarity,yang2011equilibrium}. However, such assumptions lead to inefficiency in a mixed-autonomy system \cite{lazar2020routing,lazar2018price}.
AVs may deviate under coordinated company dispatching and HVs may not follow the shortest path due to personal preferences \cite{feng2022understanding, liu2010uncovering, sirisoma2010empirical,shou2020optimal}.
A related separation of behavioral roles appears in dual-sourcing ride-hailing models, where freelance drivers behave as Wardrop players, while idle contracted drivers follow platform-directed repositioning \cite{dong2024strategic}. In a similar spirit, we model HVs as freelancers who act based on personal preferences and AVs as contractors who always follow coordinated guidance.
These behavioral differences are embedded directly into our framework and reflected across both pickup and in-service phases, allowing the heterogeneity among HVs, AVs, and solo-driving vehicles (SVs) to be fully represented.  [Remark:
We assume SVs are all human
driven and reserve the term HVs for TNCs' human fleets.  We assume that every traveler has access to a 
private car.]  

\section{Problem Statement} \label{sec:model_setting}
 Representing transportation networks that comprise both autonomous and human-driven fleets entails distinct modeling and methodological challenges. 
The heterogeneity in operational coordination and behavioral responses between AVs and HVs gives rise to complex interactions among MiFleet TNCs, travelers, and the traffic environment. To ensure tractability and behavioral consistency, the framework strikes a balance between abstracting key operational features of mixed fleets and retaining a structure suitable for equilibrium analysis and computation.

\subsection{Modeling Foundations}\label{subsec:model foundation}

Each MiFleet-TNC is modeled as a self-interested agent that maximizes its own profit, determined by fare structure, cost decomposition, and vehicle deployment. MiFleet-TNCs make dispatching decisions to match vehicles with customer requests while satisfying flow balance constraints, i.e., vehicle inflows and outflows must balance at every destination node. Vehicle allocation is further subject to regulated AV shares and given total fleet size.
On the demand side, travelers decide between requesting MiFleet-TNC service or driving alone, based on perceived disutility incorporating travel time, fare, and waiting time. Travelers’ mode choices determine the demand faced by each MiFleet-TNC, which influences companies' operational decisions including vehicle dispatching.

Given the coordinated nature of AVs, MiFleet TNCs can strategically route AVs to avoid congestion, improve spatial fleet distribution, or comply with regulatory requirements. AVs do not necessarily follow the shortest paths when serving customers, yet they adhere to the Wardrop principle during the service phase to maintain the shortest possible customer pickup times. In contrast, HVs, behave more like traditional taxis, lack such centralized coordination. Empirical evidence suggests they often deviate from Wardrop behavior during the pickup phase \cite{liu2010uncovering, sirisoma2010empirical,shou2020optimal}. As part of the novelty of our modeling, these deviation behaviors are detailed in~\eqref{eq:traffic model}, which 
captures congestion effects by linking path flows of SVs, AVs, and HVs to network conditions, thereby determining equilibrium travel times. Table~\ref{tab:vehicle_behavior}  provides a summary of vehicle behaviors.
\begin{table}[H]
\centering
\begin{tabular}{lcc}
\toprule
\textbf{Vehicle type} & \textbf{Pickup phase} & \textbf{Service phase} \\
\midrule
SVs & Follow Wardrop principle & Follow Wardrop principle \\[0.15cm]
AVs & Follow Wardrop principle & May deviate under company control \\[0.15cm]
HVs & May deviate & Follow Wardrop principle \\[0.15cm]
\bottomrule
\end{tabular}
\caption{Driving Behaviors across Vehicle Types and Trip Stages}
\label{tab:vehicle_behavior}
\end{table}

The principal system interactions are depicted schematically in Figure \ref{fig:system_interaction}.

\begin{figure}[H]
\centering
\includegraphics[width=5in]{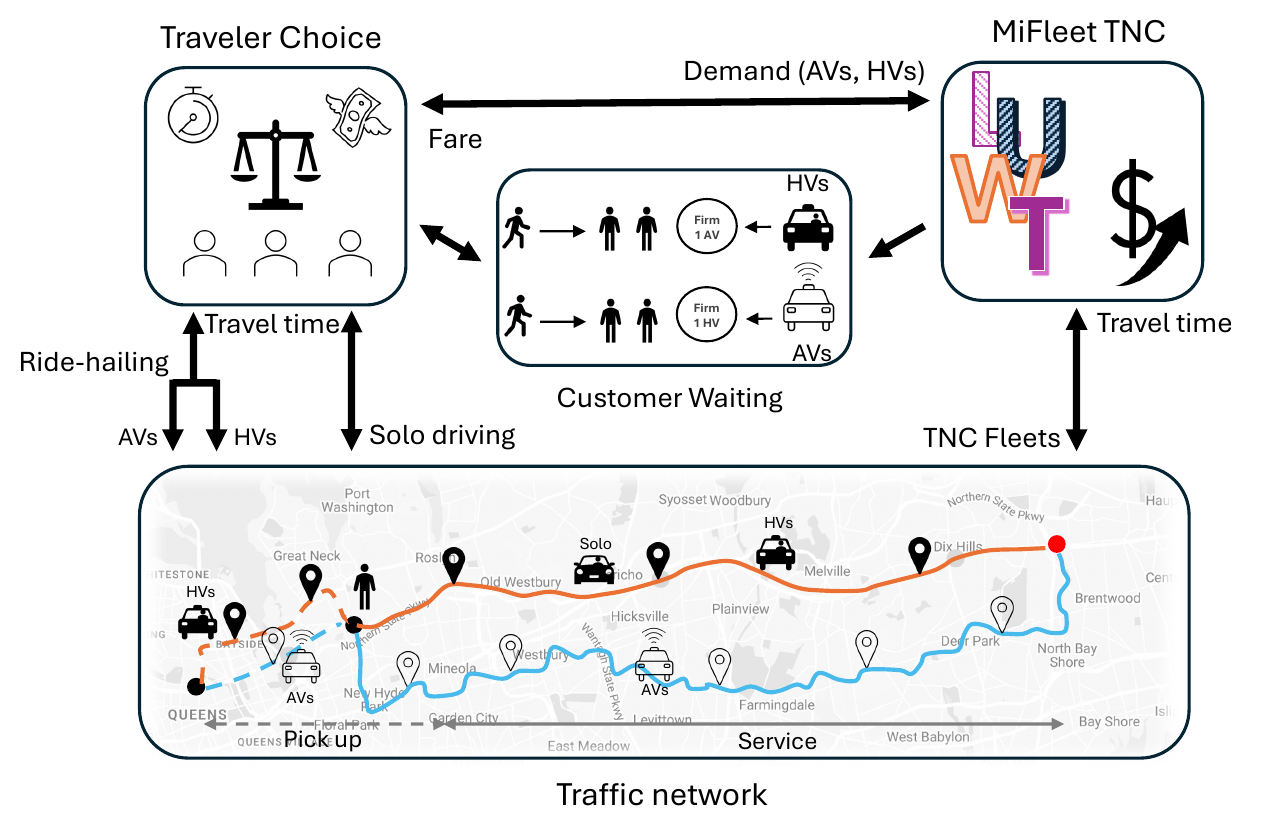}
\caption{Mixed-Autonomy System Overview} 
\label{fig:system_interaction}
\end{figure}

\subsection{Notation}
With the major foundations established, we next define the mathematical notations used throughout this paper. These notations provide a formal representation of indices, parameters, and variables.

\noindent {\bf Sets associated with 
the system} 

\vspace{1\baselineskip}

\noindent \begin{tabular}{ll} 
\(\mathcal{N}\) & Set of nodes in the network 
\\ [0.08cm] 
\(\mathcal{A}\) & Set of arcs in the network 
\\ [0.08cm] 
\(\mathcal{W}\) & Set of Origin-Destination (OD) pairs, 
a subset of \(\mathcal{N}\times\mathcal{N}\)
\\ [0.08cm] 
\(\mathcal{O}\) & Set of origin nodes, 
\(i \in \mathcal{O}\) 
if \(\exists \) some \(j\in\mathcal{N}\) 
such that \((i,j)\in {\cal W}\) \\[0.08cm] 
\(\mathcal{D}\) &  Set of destination nodes, 
\(j \in \mathcal{D}\)
if \(\exists \) some \(i\in\mathcal{N}\) such 
that \((i,j)\in {\cal W}\)\\[0.08cm] 
\(\mathcal{K}\) & Set of Traffic Network Companies (TNCs)\\[0.08cm] 
\(\mathcal{X}\) & \(\triangleq 
\{{\rm AV,HV}\}\). Set of TNC vehicle 
types \\[0.08cm] 
\({\cal W}^{k,x}\) & Set of OD pairs served
by vehicle type
\(x\in\mathcal{X}\) of company 
\(k\in\mathcal{K}\) \\[0.08cm] 
\(\mathcal{O}^{k,x}\) & Set of origin nodes served
by vehicle type
\(x\in\mathcal{X}\) of company 
\(k\in\mathcal{K}\),\\[0.08cm] 
&\(i \in \mathcal{O}^{k,x}\) if \(\exists \) 
some \(s\in\mathcal{N}\) 
such that \((i,s)\in {\cal W}^{k,x}\) 
\\ [0.08cm] 
\(\mathcal{D}^{k,x}\) & Set of destination
nodes served
by vehicle type
\(x\in\mathcal{X}\) of company 
\(k\in\mathcal{K}\),
\\[0.08cm]
&\(s \in \mathcal{D}^{k,x}\) if 
\(\exists \) some \(i\in\mathcal{N}\) 
such that \((i,s)\in {\cal W}^{k,x}\) 
\\ [0.08cm]  
\(\mathcal{K}_{ij}^{x}\) & Set of TNCs that 
provide vehicle type \(x\) to serve OD pair 
\((i,j) \in {\cal W}\) \\ [0.08cm] 
\(\mathcal{P}\) & Set of all paths in the 
network \\ [0.08cm] 
\(\mathcal{P}_{ij}\) & Set of paths 
connecting node \(i\in\mathcal{N}\) to 
node \(j\in\mathcal{N}\) \\
\end{tabular}

\vspace{1\baselineskip}

\noindent {\bf Model Parameters} (all 
positive):

\vspace{0.5 \baselineskip}
\noindent \begin{tabular}{ll} 
\(F_{ij}^{k,x}\) & Fixed fare charged by 
vehicle type \(x\) of 
company \(k\) serving OD pair \((i,j)\in\mathcal{W}^{k,x}\)\\ [0.08cm] 
\(\alpha_1^{k,x}\) & Travel time-based 
fare rates for vehicle type \(x\) of 
company \(k\) \\[0.08cm] 
\(\alpha_2^{k,x}\) & Travel distance-based 
fare rates for vehicle type \(x\) of 
company \(k\) \\[0.2cm]
\(\beta_1^{k,x}\) & Travel time conversion 
factor to monetary costs for vehicle type 
\(x\) of company \(k\)  \\ [0.2cm]
\(\beta_2^{k,x}\) & Travel distance conversion
factor to monetary costs for vehicle type 
\(x\) of company \(k\)  \\ [0.2cm]
\(\beta_3^{k,x}\) & Waiting time conversion 
factor to monetary costs for vehicle type 
\(x\) of company \(k\) \\ [0.08cm] 
$\mu^{k,{\rm AV}}$ & Relaxation 
factor ($\geq 1$) of the Wardrop principle 
for AV operated by company \(k\) in serving \\ [0.2cm]
& an OD pair in ${\cal W}^{k,\rm AV}$; this non-Wardropian 
behavior may be due to regulations that \\ [0.2cm]
& prohibit AVs to take certain 
routes \\ [0.08cm] 
$\mu^{k,{\rm HV}}$ & Relaxation 
factor ($\geq 1$) of the Wardrop principle 
for HV operated by company $k$ in \\ [0.2cm]
& pre-service pick up at a node pair 
in ${\cal D}^{k,\rm HV}$; this
reflects HV's permitted \\ [0.2cm]
& flexibility in their driving behavior 
in empty vehicles \\ [0.08cm] 
\(t_{ij}^0\) & Free-flow travel time 
of the shortest path (in terms of free-flow 
traveling time) \\ [0.2cm]
& from node \( i \in \mathcal{O} \) to node 
\( j \in \mathcal{D} \) \\ [0.2cm]
\end{tabular}

\noindent \begin{tabular}{ll} 
\(t_{s,i}^0\) & Free-flow travel time 
of the shortest path (in terms of free-flow 
traveling time) \\ [0.2cm]
& from node \( s \in \mathcal{D} \) to node 
\( i \in \mathcal{O} \) \\ [0.2cm]
\(d_{ij}^{\, 0}\) & Free-flow travel 
distance of the shortest path (in terms of
free-flow traveling time) \\ [0.2cm]
& from node \( i \in \mathcal{N} \) to node 
\( j \in \mathcal{N} \)
\\[0.08cm] 
\(\mu^{\rm cap}_{\rm AV}\)  &  Maximum allowed fraction of 
AVs in TNC's fleet \\ [0.15cm] 
\(N^k\)  &  Total fleet size of company 
\(k\) \\ [0.08cm] 
\(\gamma_1^{k,x}\) & Travel time conversion 
factor to monetary cost 
for travelers requesting \\ [0.08cm] 
& vehicle \(x\) from company \(k\)\\[0.2cm]
\(\gamma_2^{k,x}\) & Waiting time conversion factor
to monetary cost for 
travelers requesting \\ [0.08cm] 
& vehicle \(x\) form company \(k\)\\[0.08cm] 
\(\alpha_{1}^{\rm SV}\) & Travel time 
conversion factor to monetary cost for solo driving vehicles\\ [0.2cm]
\(\alpha_{2}^{\rm SV}\) & Travel distance 
conversion factor to monetary cost for 
solo driving vehicles\\ [0.2cm]
\(D_{ij}\) & Total demand rate of OD 
pair \((i,j)\in\mathcal{W}\)
\end{tabular}

\vspace{1\baselineskip}

\noindent {\bf Remark:}  We could allow
$\mu^{k,{\rm AV}}$ to depend on the OD pair 
$(i,j) \in {\cal W}^{k,{\rm AV}}$,  
$\mu^{k,{\rm HV}}$ to depend on the pair 
$(s,i) \in {\cal D}^{k,{\rm HV}} \times {\cal O}^{k,{\rm HV}}$,
and $\mu^{\rm cap}_{\rm AV}$ 
to depend on company $k$.  
Our model and its analysis can easily
accommodate these dependencies, albeit at the
expense of complicating the notations and 
analysis.  Interestingly, the positivity
of the TNCs' fleet sizes $N^k$ plays an
important role in the proof of the main
existence 
Theorem~\ref{th:equivalence and existence}.
Practically, a company with zero fleet size
can be dropped from consideration; this is
reflected in the fleet capacity constraint 
in the 
TNC module; see (\ref{eq:TNC module}) in
conjunction with 
(\ref{eq:fleet demand equality}).  Thus
the positivity of $N^k$ is justified 
practically and supported by the
mathematical model.  \hfill $\Box$


\noindent {\bf Primary model variables} 

\vspace{1\baselineskip}

{
\noindent \begin{tabular}{ll}
\(z_{s,ij}^{k,x}\) & Dispatch rate of vehicle 
type \(x\) of 
company \(k\), originating at destination  \(s \in \mathcal{D}^{k,x}\),  
\\ [0.2cm]
&
assigned to serve OD pair 
\((i,j)\in {\cal W}^{k,x}\) \\ [0.08cm] 
\(D_{ij}^{\rm SV}\) & Demand rate of OD pair 
\((i,j)\) with solo driving \\ [0.15cm] 
\(D_{ij}^{k,x}\) & Demand rate of OD pair 
\((i,j)\) requesting 
vehicle type \(x\) from company \(k\) 
\\ [0.15cm] 
\(h_p^{\rm SV}\) & Traffic flow of solo 
driving vehicles on path 
\(p \in \mathcal{P}_{ij}\), for OD pair 
\((i,j)\in {\cal W}\) \\ [0.2cm] 
\(h_p^{k,x}\) & Traffic flow of 
vehicle \(x\) of company \(k\) on path
\(p \in \mathcal{P}_{ij}\), 
for OD pair \((i,j)\in {\cal W}^{k,x}\)
\\ [0.2cm] 
\(h_p^{k,x}\) & Traffic flow of vehicle 
\(x\) of company \(k\) on path 
\(p \in \mathcal{P}_{si}\), for 
\(i\in\mathcal{O}^{k,x}\) and 
\(s\in\mathcal{D}^{k,x}\) \\[0.2cm] 
\(t_{ij}^{\rm SV}\) & Congestion dependent 
travel time of solo 
driving vehicles from \(i\) to \(j\), where 
\((i,j)\in {\cal W}\) \\ [0.2cm] 
\(t_{ij}^{k,x}\) & Congestion dependent travel 
time of vehicle type \(x\) of company \(k\) 
from \(i\) to \(j\)  \\ [0.1in]
& where \((i,j)\in {\cal W}^{k,x}\) \\ [0.08cm] 
\(t_{s,i}^{k,x}\) & Congestion dependent 
travel time of vehicle type \(x\) of company 
\(k\) from \(s\) to \(i\) \\ [0.2cm]
& where \(i\in\mathcal{O}^{k,x}\) and 
\(s\in\mathcal{D}^{k,x}\)
\end{tabular}
}


\noindent {\bf Induced model variables}

\vspace{1\baselineskip}

\noindent \begin{tabular}{ll}
\(R_{s,ij}^{k,x}\) & Per trip revenue for vehicle \(x\) of company \(k\) serving OD pair \((i,j)\in\mathcal{W}^{k,x}\) from \(s\in\mathcal{D}^{k,x}\)\\ [0.15cm] 
\(w_{ij}^{k,x}\) & Waiting time for customers 
requesting
vehicle \(x\) from company \(k\) \\ [0.15cm] 
\( \widehat{w}_{ij}^{k,x}\) & Waiting time for 
TNC $k$'s vehicle type $x$ currently at node 
$s \in {\cal D}^{k,x}$ to \\ [0.2cm]
& serve OD pair $(i,j)\in\mathcal{W}^{k,x}$ \\ [0.08cm] 
\(\phi_{s}^{k,x}\) & Shadow price of the 
flow conservation constraint \\ [0.2cm]
\( \lambda_{ij}^{k,x}\) & Marginal 
price of OD demand \((i,j)\) 
of vehicle type \(x\), perceived by 
company \(k\), \\ [0.2cm]
\( \widehat{\lambda}_{ij}^{k,x}\) & Marginal price
of OD demand \((i,j)\) of vehicle type 
\(x\) of company \(k\), \\ [0.2cm]
& perceived by customer; assumed to be 
proportional to 
\(\lambda_{ij}^{k,x}\) \\ [0.2cm]
$\nu^k_{\rm AV}$ & Marginal price of company 
$k$'s AV capacity  \\ [0.2cm]
$\nu^k$ & Marginal price of company 
$k$'s fleet capacity \\ [0.2cm]
\(\sigma_{ij}\) & Shadow price of the total 
demand satisfaction constraint
\end{tabular}

\vspace{0.5\baselineskip}
\noindent {\bf Auxiliary model variables}

\vspace{1\baselineskip}

\noindent \begin{tabular}{ll}
\(\theta_{s,ij}^{k,x}\), 
\(\zeta_{s,ij}^{k,x}\) & Artificial 
variables employed to handle the ambiguity 
of the undefined \\ [0.2cm]
& fraction 0/0 in customers’ waiting costs.
\end{tabular}

\vspace{1\baselineskip}

\noindent In the model formulation, to be presented
momentarily, we let $\boldsymbol{h}$ be the
tuple consisting of all the path flows:
\[ \begin{array}{l}
\left\{ \, h_p^{\rm SV} \, : \,
p \in {\cal P}_{ij} \, , \, (i,j)\in\mathcal{W} \, \right\} \ 
\mbox{for the solo vehicles}, 
\\ [0.1in]
\left\{ \, h_p^{k,x} \, : \, 
(k,x) \in {\cal K} \times {\cal X}, \, 
(i,j) \in {\cal W}^{k,x},
\, 
p\in\mathcal{P}_{ij}\, \right\}\
 \epc \mbox{and}
\\ [0.1in]
\left\{ \, h_p^{k,x} \, : \, 
(k,x) \in {\cal K} \times {\cal X}, \, 
(s,i) \in {\cal D}^{k,x}\times\mathcal{O}^{k,x},
\,
p\in\mathcal{P}_{si}
\, \right\} \mbox{for the TNC's vehicles}.
\end{array} \]
We also let $C_p( \boldsymbol{h} )$
be the cost on path $p$, which we assume is
continuous and 
satisfies the natural condition:
\[
C_p( \boldsymbol{h}) \, \geq \, C_p(0) 
\, \geq \, 0, \epc \forall \,
p \, \in \, {\cal P} \ \mbox{ and all } 
\ \boldsymbol{h} \geq 0.
\]
Additionally, these path costs are required 
to satisfy three weak positivity conditions
(see \cite[Proposition~1.4.6]{facchinei2003finite}
for background):

\gap

\noindent $\bullet $ for all OD pairs
$(i,j) \in {\cal W}$:
\begin{equation} 
\label{eq:path cost conditions I}
\left[ \, \displaystyle{
\sum_{p \in {\cal P}_{ij}}
} \, h_p^{\rm SV} \, C_p( \boldsymbol{h} ) 
\, = \, 0; \, 
h_p^{\rm SV} \, \geq \, 0 \ \forall \, p \in 
{\cal P}_{ij}\, \right] \ \Rightarrow \ 
h_p^{\rm SV} \, = \, 0 \ \forall \, 
p \, \in \, {\cal P}_{ ij};
\end{equation}
\noindent $\bullet $ for all 
$(k,x) \in {\cal K} \times {\cal X}$, 
and all $(i,j) \in {\cal W}^{k,x}$:
\begin{equation} 
\label{eq:path cost conditions II}
\left[ \, \displaystyle{
\sum_{p \in {\cal P}_{ij}}
} \, h_p^{k,{\rm x}} C_p( \boldsymbol{h} ) 
\, = \, 0; \, 
h_p^{k,{\rm x}} \, \geq \, 0, \ \forall \, 
p \, \in \, {\cal P}_{ij} \, \right] 
\ \Rightarrow \ 
h_p^{k,{\rm AV}} \, = \, 0 \ \forall \, 
p \, \in \, {\cal P}_{ij}; 
\end{equation}
$\bullet $ for all $(k,x) \in {\cal K} \times
{\cal X}$ and all $(s,i) \in 
{\cal D}^{k,x} \times {\cal O}^{k,x}$:
\begin{equation} 
\label{eq:path cost conditions III}
\left[ \, \displaystyle{
\sum_{p \in {\cal P}_{si}}
} \, h_p^{k,x} \, C_p( \boldsymbol{h} ) 
\, = \, 0; \, 
h_p^{k,x} \, \geq \, 0, \ \forall \, 
p \, \in \, {\cal P}_{si} \, \right] 
\ \Rightarrow \ 
h_p^{k,x} \, = \, 0 \ \forall \, 
p \, \in \, {\cal P}_{si}.
\end{equation}
These conditions are trivially satisfied when the 
path cost
functions are positive.  The model and its
analysis do not assume that the path costs
are necessarily derived from an additive model
of the link costs; in particular, the BPR link
cost functions are not needed.  Lastly, 
we define the free-flow path costs:
\[ \begin{array}{l}
t_{ij}^{\, 0} \, \triangleq \, \displaystyle{
\min_{p \in {\cal P}_{ij}}
} \, C_p(0), \epc \forall \, (i,j)  \in {\cal W}
 \ \mbox{ and } p \, \in \, {\cal P}_{ij}
; \\ [0.2in]
t_{s,i}^{\, 0} \, \triangleq \, \displaystyle{
\min_{p \in {\cal P}_{si}}
} \, C_p(0), \epc \forall \, (s,i) \, \in \,
{\cal D}^{k,x} \times {\cal O}^{k,x} \text{ for some }k \in\mathcal{K},\, x\in\mathcal{X}.
\end{array} \]
\section{Mathematical Formulation}\label{sec:math_formulation}
Four interrelated submodules constitute the model: the profit-making TNCs
operating the AVs and HVs, the traveler decision model, the customer waiting model, and traffic conditions. Their interactions are depicted in Figure~\ref{fig:submodule_interaction}.

\begin{figure}[H]
\centering
\includegraphics[width=5in]{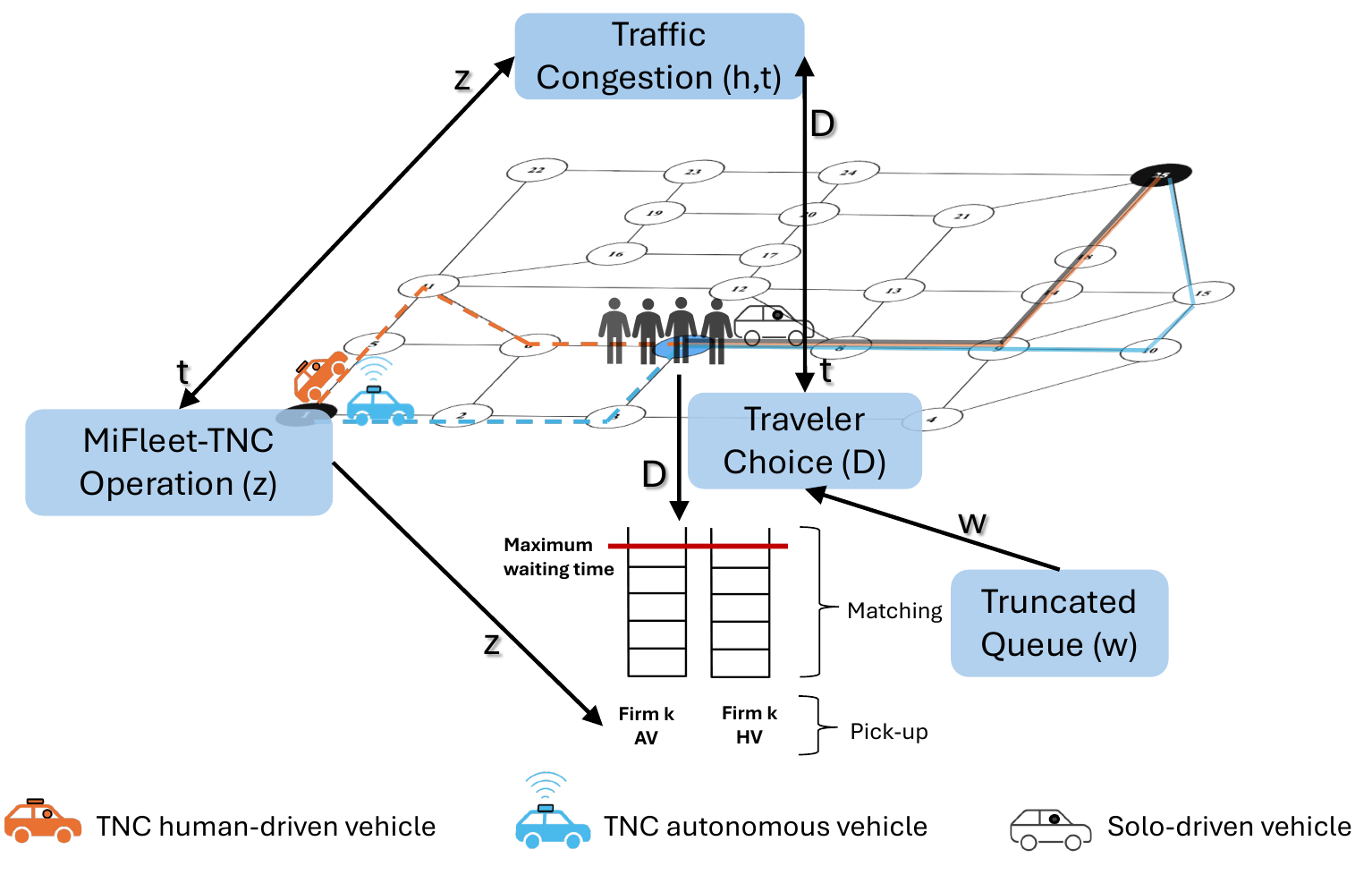}
\caption{System Interactions} 
\label{fig:submodule_interaction}
\end{figure}

{
\begin{center}
\fbox{
\parbox{6in}{
\noindent {\bf The MiFleet TNC operational module.} 
(See \cite{ban2019general} for more details.)
The profit of TNC $k \in {\cal K}$ derived
from vehicle type $x \in {\cal X}$ that 
is currently located at node $s \in {\cal D}^{k,x}$ 
and assigned to serve OD pair 
$(i,j) \in {\cal W}^{k,x}$ is given by
$R_{s,ij}^{k,x}$, which is equal to 
\[ \begin{array}{l}
F_{ij}^{k,x} +
\underbrace{\alpha_2^{k,x} \, 
d_{ij}^{\, 0}}_{\mbox{
dist.\ based revenue}} +
\underbrace{\alpha_1^{k,x} \,
( t_{ij}^{k,x} - t_{ij}^{\, 0} )}_{
\mbox{time-based revenue}} -
\underbrace{\beta_1^{k,x} \, ( t_{ij}^{k,x} 
+ t_{s,i}^{k,x} )}_{\mbox{time-based cost}} - 
\underbrace{\beta_2^{k,x} \, 
( d_{ij}^{\, 0} + 
d_{si}^{\, 0} )}_{\mbox{dist.\ based cost}}
\\[0.3in] 
= \, \underbrace{F_{ij}^{k,x} - 
\alpha_1^{k,x} 
t_{ij}^{\, 0} + \alpha_2^{k,x} d_{ij}^{\, 0}  
- \beta_2^{k,x} ( d_{ij}^{k,x} + d_{si}^{\, 0} 
)}_{\mbox{fixed part, denoted 
$\wt{R}_{s,ij}^{k,x}$}} 
\, + \underbrace{\alpha_1^{k,x} t_{ij}^{k,x}
- \beta_{1}^{k,x} ( t_{s,i}^{k,x} + 
t_{ij}^{k,x} )}_{\mbox{variable part}}.
\end{array}
\]
The TNC vehicles' waiting times satisfy 
the following balancing equation:
\[
\displaystyle{
\sum_{(i,j) \in {\cal W}^{k,x}}
} \, \left[ \, \displaystyle{
\sum_{s \in \mathcal{D}^{k,x}}
} \, z^{k,x}_{sij} \, \right]
\widehat{w}_{ij}^{k,x} \, = \, 
N^{k,x} - \displaystyle{
\sum_{(i,j) \in {\cal W}^{k,x}}
} \, \displaystyle{
\sum_{s\in\mathcal{D}^{k,x}}
} \, z^{k,x}_{s,ij} t_{s,i}^{k,x} - 
\displaystyle{
\sum_{(i,j)\in {\cal W}^{k,x}}
} \, D^{k,x}_{ij}t_{ij}^{k,x}.
\]
Substituting the right-hand side for the left-hand 
side into the objective function,
TNC $k$'s optimization problem is the 
following profit-maximization linear program:
\begin{equation} \label{eq:TNC module}
\left\{ \begin{array}{l}
\displaystyle{
\operatornamewithlimits{
\mbox{\bf maximize}}_{z_{s,ij}^{k,{\rm AV}}, 
\, z_{s,ij}^{k,{\rm HV}}}
} \\
\textcolor{blue}{\mbox{\begin{tabular}{l}
company \\
profit
\end{tabular}}} \ \displaystyle{  
\sum_{x \in \mathcal{X}}
} \, \left\{ \, \displaystyle{
\sum_{(i,j) \in {\cal W}^{k,x}}
} \, \displaystyle{
\sum_{s \in \mathcal{D}^{k,x}}  
} \, \left( \, 
\underbrace{R_{s,ij}^{k,x}}_{
\text{revenue}} - \underbrace{\beta_{3}^{k,x} 
\widehat{w}_{ij}^{k,x}}_{\text{
monetary waiting cost}} \, \right) 
z_{s,ij}^{k,x} \, \right\} \\ [0.3in]
\hspace{0.2in} = \, \displaystyle{  
\sum_{x \in \mathcal{X}}
} \, \displaystyle{
\sum_{(i,j) \in {\cal W}^{k,x}}
} \, \displaystyle{
\sum_{s \in \mathcal{D}^{k,x}}  
} \, \left[ \, \wt{R}_{s,ij}^{\, k,x} +
\alpha_1^{k,x} t_{ij}^{k,x} - 
\beta_{1}^{k,x} ( t_{s,i}^{k,x} + 
t_{ij}^{k,x} ) \, \right] z_{s,ij}^{k,x}
\, + \\ [0.3in]
\hspace{0.3in} \beta_3^{k,x} \, 
\displaystyle{  
\sum_{x \in \mathcal{X}}
} \,  \displaystyle{
\sum_{(i,j) \in {\cal W}^{k,x}}
} \, \displaystyle{
\sum_{s \in \mathcal{D}^{k,x}}  
} \, z^{k,x}_{s,ij} t_{s,i}^{k,x} +
\mbox{constant exogeneous to module} 
\\ [0.2in]
\mbox{\bf subject to}: \\ [0.2in]
\text{\textcolor{blue}{\begin{tabular}{l}
flow \\
conservation
\end{tabular}}} \ \displaystyle{
\sum_{(i,j)\in {\cal W}^{k,x}}
} \, z_{s,ij}^{k,x} \, = \, \displaystyle{
\sum_{i\in\mathcal{O}^{k,x}\mid (i,s)\in\mathcal{W}^{k,x}}
} \, D_{is}^{k,x}, \quad x \in \mathcal{X}, \ 
s \in \mathcal{D}^{k,x} \\ 
\text{\textcolor{blue}{\begin{tabular}{l}
fleet-demand \\
constraint
\end{tabular}}} \epc \displaystyle{
\sum_{s \in \mathcal{D}^{k,x}}
} \, z_{s,ij}^{k,x} \, \geq \, D_{ij}^{k,x}, 
\quad  x \in \mathcal{X}, \ 
(i,j) \in {\cal W}^{k,x} \\ [0.2in]
\text{\textcolor{blue}{AV capacity}} 
\ \displaystyle{
\sum_{(i,j)\in {\cal W}^{k,{\rm AV}}}
} \, \left\{ \, \displaystyle{
\sum_{s\in\mathcal{D}^{k,{\rm AV}}} 
} \, t_{s,i}^{k,{\rm AV}}
z^{k,{\rm AV}}_{s,ij} + 
t_{ij}^{k,{\rm AV}}D^{k,{\rm AV}}_{ij} 
\, \right\} \, \leq \, \mu^{\rm cap}_{\rm AV} 
\, N^k \\ 
\text{\textcolor{blue}{\begin{tabular}{l}
fleet \\
capacity
\end{tabular}
}} \ \underbrace{\displaystyle{
\sum_{x \in {\cal X}}
} \ \displaystyle{
\sum_{(i,j)\in {\cal W}^{k,x}}
} \ \displaystyle{
\sum_{s\in\mathcal{D}^{k,x}}
} \, t_{s,i}^{k,x} z^{k,x}_{s,ij}}_{
\begin{tabular}{l}
vehicles \\
en route to service calls
\end{tabular}} \, + \underbrace{\displaystyle{
\sum_{x \in {\cal X}}
} \, \displaystyle{
\sum_{(i,j)\in {\cal W}^{k,x}}
} \, t_{ij}^{k,x} D^{k,x}_{ij}}_{
\begin{tabular}{l}
vehicles \\
serving travel demands
\end{tabular}} \leq N^k \\ [0.4in]
\text{\textcolor{blue}{nonnegativity}} \
z_{s,ij}^{k,x} \geq 0, \epc \mbox{for } \, 
x \in {\cal X}, \, s \in {\cal D}^{k,x}, \,
(i,j) \in {\cal W}^{k,x}
\end{array} \right\}
\end{equation}
}}
\end{center}
}

{
\begin{center}
\fbox{
\parbox{6in}{
\noindent {\bf The traveler choice module.}
(Similar to \cite{ban2019general})
Each traveler chooses a single travel mode per 
trip over all travel modes that
include solo driving, HV or AV service
from a TNC. For each OD pair 
$(i,j) \in {\cal W}$, let \(V_{ij}^{k,x}\) 
be the traveler's 
disutility for choosing vehicle type \(x\) 
of company 
\(k\) and let \(V_{ij}^0\) be solo 
driver's disutility 
traveling from \(i\) to \(j\).  We have
\[ \begin{array}{l}
V_{ij}^{k,x} \, \triangleq \, F_{ij}^{k,x} + 
\underbrace{\alpha_{1}^{k,x} 
( t_{ij}^{k,x} - t_{ij}^0 )}_{\mbox{
\begin{tabular}{l}
travel time \\
based disutility
\end{tabular}
}} + \underbrace{\alpha_{2}^{k,x} d_{ij}^{\, 0}}
_{\mbox{\begin{tabular}{l}
distance  \\
disutility
\end{tabular}
}} + \underbrace{\gamma_{1}^{k,x} 
t_{ij}^{k,x}}_{\mbox{\begin{tabular}{l}
travel time \\
disutility
\end{tabular}
}} + \underbrace{
\gamma_{2}^{k,x} w_{ij}^{k,x}}_{\mbox{
\begin{tabular}{l} 
waiting disutility \\
to be picked up
\end{tabular}
}} \\ [0.6in]
V^{\rm SV}_{ij} \, \triangleq \, 
\underbrace{\alpha_1^{SV}t_{ij}^{SV}}_{\mbox{
\begin{tabular}{l}
travel time \\
based disutility
\end{tabular}
}} + 
\underbrace{\alpha^{\rm SV}_2 d_{ij}^{\, 0}}_{
\mbox{\begin{tabular}{l}
travel distance \\
based disutility
\end{tabular}
}}.
\end{array} \]
Note that unlike $V_{ij}^{k,x}$ whose sign
is not predetermined, the solo driver's
disutility $V_{ij}^{\rm SD}$ is always 
positive.
Taking the waiting times $w_{ij}^{k,x}$ as 
exogenous variables (see the customer-waiting
module), the traveler choice optimization 
problem is the following disutility 
minimization linear program:
\begin{equation} \label{eq:traveler choice}
\left\{ \begin{array}{l}
\displaystyle{
\operatornamewithlimits{
\mbox{\bf minimize}}_{D_{ij}^{k,x}, \
\, D_{ij}^{\rm SV}}
} \\
\textcolor{blue}{\mbox{traveler disutility}} 
\ \displaystyle{
\sum_{(i,j)\in {\cal W}}
} \, \left( V_{ij}^{\rm SV} D_{ij}^{\rm SV} 
+ \displaystyle{ 
\sum_{x\in\mathcal{X}}
} \, \displaystyle{
\sum_{k\in \mathcal{K}^x_{ij}} 
} \, V_{ij}^{k,x} D_{ij}^{k,x} \right) 
\\ [0.2in]
\mbox{\bf subject to} \\ [0.2in]
\text{\textcolor{blue}{\begin{tabular}{l}
Total demand \\
satisfaction:
\end{tabular}}} \epc 
D_{ij}^{\rm SV} + \displaystyle{
\sum_{x\in\mathcal{X}}
} \, \displaystyle{
\sum_{k\in \mathcal{K}^x_{ij}}
} \ D_{ij}^{k,x} \, = \, D_{ij}, 
\quad \forall \,
(i,j) \in {\cal W}  \\ [0.3in]
\text{\textcolor{blue}{\begin{tabular}{l}
Fleet dictated \\
demand constraint
\end{tabular}}} \ \displaystyle{  
\sum_{s \in \mathcal{D}^{k,x}} 
} \, z_{s,ij}^{k,x} \, \geq \, D_{ij}^{k,x}, 
\quad \forall \,
k \in \mathcal{K}, \, x \, \in\mathcal{X}, 
\, (i, j) \in {\cal W}^{k,x} \\ [0.3in]
\text{\textcolor{blue}{Nonnegativity:}} 
\hspace{0.3in}
\begin{array}{l}
D_{ij}^{\rm SV} \, \geq \, 0, \quad \forall \,(i, j) \in {\cal W}
\\[0.1in]
D_{ij}^{k,x} 
\, \geq \, 0, \quad \forall \,
k \in \mathcal{K}, \, x \, \in\mathcal{X}, 
\, (i, j) \in {\cal W}^{k,x}
\end{array}
\end{array} \right\} 
\end{equation}
}}
\end{center}
}

Pertaining to an optimal solution of the
problem (\ref{eq:traveler choice}), the
following lemma helps to establish a 
reasonable upper bound for customer waiting 
time, serving as a preparatory result for 
the formula to be followed.

\begin{lemma}\label{lemma:demand shifting} \rm
Suppose that $V_{ij}^{k,x} > V_{ij}^{\rm SV}$ 
for some OD pair $(i,j) \in \mathcal{W}$
and company-vehicle type pair 
$(k,x) \in \mathcal{K} \times \mathcal{X}$,
then there exists an optimal solution to the 
traveler choice 
problem (\ref{eq:traveler choice})
satisfying \(D_{ij}^{k,x} = 0\).
\end{lemma}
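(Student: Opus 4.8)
The plan is to exploit the fact that the traveler choice program \eqref{eq:traveler choice} is a linear program that decouples completely across OD pairs: both the objective and every constraint (total demand satisfaction, the fleet-dictated demand upper bounds, and nonnegativity) involve only variables carrying a common index $(i,j)$. Consequently I can argue one OD pair at a time, treating the dispatch rates $z_{s,ij}^{k,x}$ as fixed data supplied by the TNC module. I would first record that an optimal solution exists: the feasible region is nonempty, since assigning all demand to solo driving, $D_{ij}^{\rm SV} = D_{ij}$ with $D_{ij}^{k',x'} = 0$, satisfies every constraint (solo driving carries no capacity bound), and it is compact, since every demand variable lies in $[0, D_{ij}]$ by the total-demand equation together with nonnegativity.

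The core of the argument is a single demand-shifting step. Starting from any optimal solution, denoted with overbars, I would construct a new point by transferring the entire mass $\bar{D}_{ij}^{k,x}$ onto solo driving for the pair $(i,j)$ under consideration, that is, set $D_{ij}^{k,x} \triangleq 0$ and $D_{ij}^{\rm SV} \triangleq \bar{D}_{ij}^{\rm SV} + \bar{D}_{ij}^{k,x}$, while leaving all remaining variables unchanged. The feasibility check is where the structure of the program does all the work: the total-demand equation for $(i,j)$ is preserved because the transferred mass merely moves from one summand to another; the fleet-dictated constraint $\sum_{s \in \mathcal{D}^{k,x}} z_{s,ij}^{k,x} \geq D_{ij}^{k,x}$ is only relaxed, since its right-hand side drops to $0$; the solo-driving variable has no competing upper bound, so enlarging it is harmless; every other constraint is untouched; and nonnegativity is clearly retained.

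It then remains to compare objective values. The only two terms that change are the $(i,j)$-contributions of $(k,x)$ and of solo driving, so the new objective minus the old equals $\left( V_{ij}^{\rm SV} - V_{ij}^{k,x} \right) \bar{D}_{ij}^{k,x}$. Under the hypothesis $V_{ij}^{k,x} > V_{ij}^{\rm SV}$ this quantity is nonpositive, so the shifted point is feasible with objective no larger than the optimum; hence it is itself optimal and satisfies $D_{ij}^{k,x} = 0$, as claimed. I do not expect any genuine obstacle here: the one point deserving care is the feasibility of the shift, which rests entirely on solo driving being uncapacitated and on the fleet-dictated constraint being a one-sided upper bound that can only loosen when $D_{ij}^{k,x}$ is reduced. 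I would also note in passing that the strict inequality in fact forces $\bar{D}_{ij}^{k,x} = 0$ in \emph{every} optimal solution (otherwise the displayed difference would be strictly negative, contradicting optimality), so the existence statement of the lemma is comfortably implied.
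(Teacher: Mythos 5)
Your proposal is correct and follows essentially the same route as the paper's proof: shift the mass $\bar{D}_{ij}^{k,x}$ onto $D_{ij}^{\rm SV}$, observe that feasibility is preserved (the fleet-dictated bound only loosens and solo driving is uncapacitated), and note that the objective changes by $\left( V_{ij}^{\rm SV} - V_{ij}^{k,x} \right)\bar{D}_{ij}^{k,x} \le 0$. Your added remarks---the existence of an optimal solution via compactness, and the observation that strict inequality forces $D_{ij}^{k,x}=0$ in \emph{every} optimal solution---are correct refinements of the same one-line shifting argument the paper uses.
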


\begin{proof}  It suffices to note that if
$D_{ij}^{k,x} > 0$ in an optimal 
solution, then shifting \(D_{ij}^{k,x}\) to 
\(D_{ij}^{\rm SV}\) preserves feasibility 
while strictly improving the objective value. 
\end{proof}

\noindent {\bf Customer waiting.}  Similar to the 
previous approach \cite{ban2019general}, we model
the customer waiting for TNCs' service to be
composed of two components, company's matching
(or dispatching)
plus the mean pickup time, both as perceived 
by the customer.  In the reference,
the former matching time is described by a multiple of
the marginal price of the fleet demand constraint:
$\displaystyle{
\sum_{s\in\mathcal{D}^{k,x}}
} \, z_{s,ij}^{k,x} \geq D_{ij}^{k,x}$.  Since
the multiplier is complementary to the slack of
this constraint, an issue
with this definition of matching is that when
equality holds, the matching time is not well
defined.  An alternative definition of the matching
time is given in~\cite{gu5461575generalized} under
the assumption of fixed travel times that facilitated
the employment of 
a queuing model.  Our matching time below extends
this previous model to incorporate congestion. 
To capture realistic customer behavior under high 
demand scenarios, we incorporate a truncation to the 
queue-based waiting time to reflect limited customer 
patience.

In general, we may model the
matching time as an extended-value function of the
tuple $\{ z_{ij}^{k,x},D_{ij}^{k,x} \}$ of 
company's vehicle allocation and OD demands.
An example of this function is that derived from 
a queuing model such as an M/M/1 queue.  According to 
the this model, the customer waiting for company 
$k$'s vehicle type $x$ 
traveling between OD pair 
\((i,j)\in {\cal W}^{k,x}\) is considered 
as a queue, in particular an M/M/1 model, 
where vehicles are 
the servers and passengers are customers.  
Company $k$'s AVs and HVs
arrive exponentially with mean 
$\left( \displaystyle{
\sum_{s\in\mathcal{D}^{k,x}}z_{{sij}}^{k,x}
} \right)^{-1}$ for \(x \in\mathcal{X}\).  
The interarrival times of customers 
requesting AVs and HVs from company 
\(k\) follow an exponential 
distribution with mean 
\(1/D_{ij}^{k,{\rm AV}}\) and 
\(1/D_{ij}^{k,{\rm HV}}\), respectively.  
Mean matching time is 
modeled as the mean waiting time in the 
queuing system.  Thus we have 
the queuing-based steady-state matching time:
\begin{equation} \label{eq:dispatching}
\displaystyle{
\frac{1}{\displaystyle{
\sum_{s\in\mathcal{D}^{k,x}}
} \, z_{s,ij}^{k,x} - D_{ij}^{k,x}}
}, \quad \forall \, k\in\mathcal{K}, \, 
x \in \mathcal{X}, \, 
(i,j) \in {\cal W}^{k,x},
\end{equation}
where the denominator is nonnegative due 
to the fleet-demand 
constraint; it can equal to zero, leading 
to a infinite matching
time that will be resolved by capping.
To formalize the capping, we note
since the travelers have the option of not waiting
for a TNC service, it is reasonable to postulate
that there is a maximum time
beyond which the traveler will decide not to wait, and 
therefore resort to solo driving.  This maximum
time is in turn derived from the travel choice
module based on the smaller of the disutility 
of solo driving and TNC service.  Based on this
consideration, Lemma~\ref{lemma:demand shifting}
therefore suggests the maximum waiting time to be
\[ \begin{array}{l}
w_{ij}^{\max} \, \triangleq \\ [0.1in] 
\max\left\{ \, 0, \, \displaystyle{
\frac{\alpha_1^{\rm SV}t_{ij}^{\rm SV} 
+ \alpha^{\rm SV}_2 d_{ij}^0 - \displaystyle{
\min_{k\in\mathcal{K}, x\in\mathcal{X}}
} \, \left\{ \, F_{ij}^{k,x} + 
\alpha_{1}^{k,x} (t_{ij}^{k,x}- t_{ij}^0) 
+ \alpha_{2}^{k,x} d_{ij}^{\, 0} +
\gamma_{1}^kt_{ij}^{k,x} \, 
\right\}}{\displaystyle{
\min_{k\in\mathcal{K}, x\in\mathcal{X}}
} \, \gamma_{2}^{k,x}
}} \, \right\},
\end{array} \]
which is a measure of customer's patience.
The mean pickup time is equal to the product:
\[
\left( \, \displaystyle{
\sum_{s\in\mathcal{D}^{k,x}}
} \, t_{s,i}^{k,x} \, \right) \, \left( \, 
\displaystyle{
\frac{z_{s,ij}^{k,x}}{\displaystyle{
\sum_{s^{\prime} \in\mathcal{D}^{k,x}}
} \, z_{s{\prime},ij}^{k,x}}
} \, \right), \quad \forall \, 
k \in \mathcal{K}, \, x \in \mathcal{X}, 
\, (i,j) \in {\cal W}^{k,x}.
\]
Summarizing the above derivations, we therefore
arrive at the waiting time definition:
\begin{equation} \label{eq:customer waiting}
w_{ij}^{k,x} \, \triangleq \, 
\min\left\{ \, w_{ij}^{\max}, \, \wt{w}_{ij}^{k,x}
+ \left( \, \displaystyle{
\sum_{s\in\mathcal{D}^{k,x}}
} \, t_{s,i}^{k,x} \, \right) \, 
\left( \, \displaystyle{
\frac{z_{s,ij}^{k,x}}{\displaystyle{
\sum_{s^{\prime} \in \mathcal{D}^{k,x}}
} \, z_{s^{\prime},ij}^{k,x}} 
} \, \right) \, \right\},
\end{equation}
where $\wt{w}_{ij}^{k,x}$ is company $k$'s 
matching/dispatching time, an example of which is
the (extended-valued) queuing-based 
steady-state matching time (\ref{eq:dispatching}).
The expression (\ref{eq:customer waiting})
essentially stipulates that a traveler will choose 
to drive solo if the disutility due to 
waiting to be picked up
exceeds the disutility of solo driving.  

In order to address the
ambiguity of the fraction $0/0$, we
define $\theta_{s,ij}^{k,x}$ 
as a minimizer of 
\begin{equation}\label{eq:customer_waiting}
\displaystyle{
\operatornamewithlimits{\mbox{minimize}}_{
\theta \in [0,1]}
} \, \left\{ -z_{s,ij}^{k,x} \theta + 
\thalf \, \left\{ \displaystyle{
\sum_{s^{\prime} \in \mathcal{D}^{k,x}}
} \, z_{s^{\prime},ij}^{k,x} \right\} 
\theta^2 \right\},
\end{equation} which
is equal to the fraction $\displaystyle{
\frac{z_{s,ij}^{k,x}}{\displaystyle{
\sum_{s^{\prime} \in \mathcal{D}^{k,x}}
} \, z_{s^{\prime},ij}^{k,x}}
}$ when the 
denominator is positive and
is an arbitrary scalar between 0 and 1 
if the denominator is zero. 
In the latter case, by the demand constraint 
in the traveler choice model, 
\(D_{ij}^{k,x} = 0\) thus the total waiting 
cost of the 
OD pair \((i,j)\in {\cal W}^{k,x}\) of vehicle 
type \(x\) of company \(k\) is zero.  

\begin{center}
\fbox{
\parbox{6in}{
\noindent {\bf The customer waiting time module.}
In the rest of the paper, we take the customer's
waiting time to be
\begin{equation} \label{eq:capped customer waiting}
w_{ij}^{k,x} \, \triangleq \, 
\min\left\{ \, w_{ij}^{\max}, \, \wt{w}_{ij}^{k,x}
+ \left( \, \displaystyle{
\sum_{s\in\mathcal{D}^{k,x}}
} \, t_{s,i}^{k,x} \, \right) \theta_{s,ij}^{k,x}
\, \right\}
\end{equation}
where $\wt{w}_{ij}^{k,x}$ is a nonnegative,
possibly extended-valued function
of the tuple $\{ z_{ij}^{k,x},D_{ij}^{k,x} \}$
that is continuous on its domain of finiteness,
and $\theta_{s,ij}^{k,x}$ is an optimal solution
of (\ref{eq:customer_waiting}).
For the purpose of analysis, the explicit form of
the waiting time function is not important, it is
the continuity of the function in the arguments
$\{ \, z_{ij}^{k,x},D_{ij}^{k,x}, t_{s,i}^{k,x},
\theta_{s, ij}^{k,x} \, \}$ that is needed. 
Nevertheless, for the purpose of computation,
an explicit form of the matching time
$\wt{w}_{ij}^{k,x}$, such as 
(\ref{eq:dispatching}), is needed.
}}
\end{center}

{
\begin{center}
\fbox{
\parbox{6in}{
\noindent {\bf The traffic congestion module.} 
This is 
an extension of the basic Wardrop model; 
the one most distinctive
feature of this extension is that the 
Wardrop's shortest-path principle
is imposed for the AVs when they travel 
to serve the next OD demand but is relaxed 
when they are serving the demand; this
stipulation is reversed for the HVs.  
{\small
\begin{equation}\label{eq:traffic model}
\begin{array}{lll}
\displaystyle{
\sum_{p \in \mathcal{P}_{ij}}
} \, h_p^{\rm SV} & = & D_{ij} - 
\displaystyle{
\sum_{x\in\mathcal{X}}
} \, \displaystyle{
\sum_{k\in\mathcal{K}_{ij}^x}
} \, D_{ij}^{k,x} \, 
( \, = \, D_{ij}^{\rm SV} \, ), \epc 
t_{ij}^{\rm SV} \, \geq \, 0, 
\quad  \forall \, (i,j) \in {\cal W} 
\\ [0.25in]
\displaystyle{
\sum_{p \in\mathcal{P}_{ij}}
} \, h_p^{k,x} & = & D_{ij}^{k,x}, \epc
t_{ij}^{k,x} \, \geq \, 0, \hspace{0.7in} 
\forall \, x \in \mathcal{X}, \, 
k \in \mathcal{K}, 
\, (i,j) \in {\cal W}^{k,x} \\ [0.25in]
\displaystyle{
\sum_{p \in\mathcal{P}_{si}}
} \, h_p^{k,x} & = & \displaystyle{
\sum_{j\in\mathcal{D}^{k,x}}
} \, z_{s,ij}^{k,x}, \epc 
t_{s,i}^{k,x} \, \geq \, 0,
\quad \forall \, 
x \in \mathcal{X}, \, k \in \mathcal{K}, \, 
(s,i) \in \mathcal{D}^{k,x} \times 
\mathcal{O}^{k,x} \\ [0.25in]
0 \leq h_p^{\rm SV} & \perp & 
C_p( \boldsymbol{h} ) - 
t_{ij}^{\rm SV} \geq 0, \hspace{0.5in} 
\forall \, (i,j) \in {\cal W}, \,
\forall \, p \in\mathcal{P}_{ij} \\ [0.1in]
0 \leq h_p^{k,{\rm AV}} & \perp & 
\mu^{k,{\rm AV}} C_p( \boldsymbol{h} ) - 
t_{ij}^{k,{\rm AV}} \geq 0, \hspace{0.3in} 
\forall \, k \in \mathcal{K},\,(i,j) \in {\cal W}^{k,\rm AV}, \, 
p \in \mathcal{P}_{ij}
\\ [0.1in]
0 \, \leq \, h_p^{k,{\rm AV}} & \perp & 
C_p( \boldsymbol{h} ) - 
t_{s,i}^{k,{\rm AV}} \geq 0, \hspace{0.5in}
\forall \, k \in \mathcal{K},(s,i) \in 
\mathcal{D}^{k,\rm AV} \times \mathcal{O}^{k,\rm AV}, 
 p \in\mathcal{P}_{si} \\ [0.1in]
0 \leq h_p^{k,{\rm HV}} & \perp & 
C_p( \boldsymbol{h} ) - t_{ij}^{k,{\rm HV}} 
\, \geq \, 0, \hspace{0.5in} 
\forall \, k \in \mathcal{K},\,(i,j) \in {\cal W}^{k,\rm HV}, \, 
p \in \mathcal{P}_{ij}
\\ [0.1in]
0 \leq h_p^{k,{\rm HV}} & \perp & 
\mu^{k,{\rm HV}} C_p( \boldsymbol{h} ) - 
t_{s,i}^{k,{\rm HV}} \geq 0, \epc
\forall \, k \in \mathcal{K},
(s,i) \in \mathcal{D}^{k,\rm HV} \times 
\mathcal{O}^{k,\rm HV}, 
p \in\mathcal{P}_{si}
\end{array}
\end{equation}}
}}
\end{center}
}

Overall, the mixed-fleet traffic 
equilibrium problem with the coexistence 
of HV, AV, and SV 
is to solve for the primary model variables 
satisfying the conditions in the 4 modules
introduced above.  
The problem can be studied as a 
noncooperative game where the primary agents 
(the TNCs and the travelers) are competing 
for the use of 
the traffic network subject to the Wardrop 
principle (with extensions to allow for their 
relaxations for the TNCs' fleets) and the 
customer waiting times for pick-up vehicles.
The game is of the generalized type as the
fleet-demand constraints appear in both 
the TNC and customer modules and the flow
conservation constraint is of the coupled but
un-shared type.

\gap

\noindent {\bf Variations of the model} 
(per discussion with Yueyue Fan).  To avoid the 
functional approach 
to the modeling of customer waiting times, which
invariably requires some behavioral assumption of
customers' disutility to waiting, one may treat these
waiting times as endogenous variables that induce a 
set of travel demands as a result of waiting.  One then
seeks a set of waiting times so that the induced
travel demands match the given fixed demands. 
Alternatively, one may also postulate that there is
a given demand function of customer waiting and employ
the inverse of this function as the customer waiting
time.  These alternatives are analogous to the 
fixed or elastic models in the 
basic traffic equilibrium problem;
see \cite[Subsection~1.4.5]{facchinei2003finite}.
Details of these variations are not considered in
the rest of the paper.

\section{Existence of
a generalized equilibrium} \label{sec:model_analysis}
Our goal is to establish
the existence of a generalized equilibrium~\ref{defn:mage-cq} of 
the mixed-fleet transportation system
by examining the details and interactions 
of its four submodules.
\begin{framed}
    \begin{definition}[MAGE-CW]\label{defn:mage-cq}
    \rm
    An MAGE-CW equilibrium is a tuple \(\{\mathbf{z}, \mathbf{D}, \boldsymbol{\theta}, \mathbf{h}, \mathbf{t}\}\) such that 
    \begin{enumerate}
        \item For each \(k\in\mathcal{K}\), the tuple \(\mathbf{z}\triangleq\{z_{s,ij}^{k,x}: x\in\mathcal{X}, s\in\mathcal{D}^{k,x}, (i,j)\in\mathcal{W}^{k,x}\}\) solves Problem \eqref{eq:TNC module} for given 
        \(\{\mathbf{D}, \boldsymbol{\theta}, \mathbf{h}, \mathbf{t}\}\);
        \item
        Tuple \(\mathbf{D}\triangleq\{D_{ij}^{k,x}, D_{ij}^{\rm SV}: (i,j)\in\mathcal{W},x\in\mathcal{X}, k\in\mathcal{K}_{ij}^{x}\}\) solves Problem \eqref{eq:traveler choice} for given \(\{\mathbf{z}, \boldsymbol{\theta}, \mathbf{h}, \mathbf{t}\}\), where each $w_{ij}^{k,x}$ 
        is a continuous function of
        $\{ \, z_{ij}^{k,x},D_{ij}^{k,x}, t_{s,i}^{k,x},\theta_{s, ij}^{k,x} \, \}$
        such as that given by 
        (\ref{eq:capped customer waiting});
        \item 
        Tuple \(\boldsymbol{\theta}\triangleq\{\theta_{s,ij}^{k,x}: x\in\mathcal{X}, s\in\mathcal{D}^{k,x}, (i,j)\in\mathcal{W}^{k,x}\}\) solves \eqref{eq:customer_waiting} for given \(\{\mathbf{z}, \mathbf{D}, \mathbf{h}, \mathbf{t}\}\);
        \item 
        Tuple \(\mathbf{h}\triangleq\{h_{p}^{\rm SV}, t_{ij}^{\rm SV}, h_{p}^{k,x}, t_{ij}^{k,x}, t_{s,i}^{k,x}:(i,j)\in\mathcal{W}, x\in\mathcal{X}, k\in\mathcal{K}_{ij}^{x}, (s,i)\in \mathcal{D}^{k,x} \times \mathcal{O}^{k,x}, 
        \mbox{ and } p \in \mathcal{P}_{ij} \cup \mathcal{P}_{si}\}\) satisfies \eqref{eq:traffic model} for given \(\{\mathbf{z}, \mathbf{D}\}\).
    \end{enumerate}
    \end{definition}  
\end{framed}

\noindent Similar to \cite{ban2019general}, 
our analysis 
addresses a special kind of equilibrium
solution known as a {\sl normalized equilibrium}
in which the multipliers of the
shared fleet-demand constraint:
$\displaystyle{  
\sum_{s \in \mathcal{D}^{k,x}} 
} \, z_{s,ij}^{k,x} \, \geq \, D_{ij}^{k,x}$ 
in the TNC and the traveler choice modules
are postulated to be proportional.  This follows Rosen’s \cite{rosen1965existence} classical framework for generalized Nash games with shared constraints.   In what follows, 
we further specialize the normalized equilibrium
to a {\sl variational equilibrium} in which 
the said multiplier are equal.  This specialization
simplifies the notation without affecting the 
analysis and leads to a nonlinear complementarity 
problem (NCP) formulation of the model. The NCP
is then shown to be equivalent a variational 
inequality (VI), to which we apply a 
fundamental theorem---Proposition~\ref{pr:existence VI for traffic}---to 
prove the existence of a variational equilibrium, 
thus the existence of an MAGE-CQ.
Two preliminary steps are needed for this
purpose: one, we derive several lemmas
pertaining to the equality constraints in
the three modules: MiFleet TNC, traveler, and 
traffic congestion, based on
which we write down the said NCP formulation for 
this game in which all equalities are replaced by
their respective inequalities.
Next, we establish some bounds on the primary
variables of the model, which allow us
to obtain the desired variational inequality 
formulation for the
problem where some complementarity conditions
are not explicitly stated. Throughout the 
analysis, we note that the customer waiting
time, including the special case of
(\ref{eq:customer waiting})
is a continuous function of the model's 
primary variables $z_{s,ij}^{k,x}$, $D_{ij}^{k,x}$
and $t_{s,i}^{k,x}$.  As it turns out, 
the denominator in the above expression is
always (in particular, at equilibrium)
equal to zero when all vehicle-trips
are balanced
(see Lemma~\ref{lm:TNC inequalities});
rendering $w_{ij}^{k,x} = w_{ij}^{\max}$ 
and obscuring the effect of matching.

\gap

\noindent {\bf First step--two conversion 
lemmas and
the complementarity formulation:}  Similar
to the basic traffic equilibrium model, (see \cite[Proposition~1.4.6]{facchinei2003finite}), 
we prove a key lemma that shows that some equalities in the MAGE-CQ model
are equivalent to inequalities; in particular, the equations in the
traffic congestion module 
(\ref{eq:traffic model}) are equivalent
to complementarity conditions.  

\begin{lemma} \label{lm:remaining equalities} \rm
Suppose the path cost functions are
nonnegative and satisfy the three conditions:
(\ref{eq:path cost conditions I}),
(\ref{eq:path cost conditions II}), and 
(\ref{eq:path cost conditions III}).  The
following two statements hold.

\gap

\noindent (A) If
$D_{ij}^{\rm SV} + \displaystyle{
\sum_{x\in\mathcal{X}}
} \, \displaystyle{
\sum_{k\in\mathcal{K}_{ij}^x}
} \, D_{ij}^{k,x} \geq D_{ij} > 0$ for all 
$(i,j) \in {\cal W}$, all
$D_{ij}^{\rm SV} \geq 0$, 
$D_{ij}^{k,x} \geq 0$, then the conditions in 
(\ref{eq:traffic model})
are equivalent to 
{\small
\begin{equation}
\label{eq:modified traffic model}
\begin{array}{l}
0 \, \leq \, t_{ij}^{\rm SV} \, \perp \,
\displaystyle{
\sum_{p \in \mathcal{P}_{ij}}
} \, h_p^{\rm SV} - D_{ij}^{\rm SV} 
\, \geq \, 0, \hspace{0.8in}  
\forall \, (i,j) \in {\cal W} \\ [0.25in]
0 \, \leq \, t_{ij}^{k,x} \, \perp \,
\displaystyle{
\sum_{p \in\mathcal{P}_{ij}}
} \, h_p^{k,x} - D_{ij}^{k,x} \, \geq \, 0,
\hspace{0.4in} \forall \, 
x \in \mathcal{X}, \, k \in \mathcal{K}, 
\, (i,j) \in {\cal W}^{k,x} \\ [0.25in]
0 \, \leq \, t_{s,i}^{k,x} \, \perp \,
\displaystyle{
\sum_{p \in\mathcal{P}_{si}}
} \, h_p^{k,x} - \displaystyle{
\sum_{j\in\mathcal{D}^{k,x} |
(i,j)\in\mathcal{W}^{k,x}
} \, z_{s,ij}^{k,x} \, \geq \, 0, 
\quad \forall \, x \in \mathcal{X}, \, 
k \in \mathcal{K}, \, 
(s,i) \in \mathcal{D}^{k,x} \times 
\mathcal{O}^{k,x}} \\ [0.25in]
0 \leq h_p^{\rm SV} \, \perp \, 
C_p( \boldsymbol{h} ) - t_{ij}^{\rm SV} \geq 0, 
\hspace{0.8in} \forall 
\, (i,j) \in {\cal W}, \,
\forall \, p \in\mathcal{P}_{ij} \\ [0.1in]
0 \leq h_p^{k,{\rm AV}} \, \perp \, 
\mu^{k,{\rm AV}} C_p(\boldsymbol{h}) - 
t_{ij}^{k,{\rm AV}} \geq 0,
\quad \forall \, (i,j) \in {\cal W}, \, 
p \in \mathcal{P}_{ij}, \, k\in\mathcal{K} 
\\ [0.1in]
0 \, \leq \, h_p^{k,{\rm AV}} \, \perp \, 
C_p( \boldsymbol{h} ) - 
t_{s,i}^{k,{\rm AV}} \geq 0, \quad \forall \, 
(s,i) \in \mathcal{D}\times\mathcal{O}, 
\, p \in\mathcal{P}_{si}, 
\, k \in \mathcal{K} \\ [0.1in]
0 \leq h_p^{k,{\rm HV}} \, \perp \, 
C_p( \boldsymbol{h} ) - t_{ij}^{k,{\rm HV}} 
\, \geq \, 0, \hspace{0.2in} 
\forall \, (i,j) \in {\cal W}, \, 
p \in \mathcal{P}_{ij}, \, k\in\mathcal{K} 
\\ [0.1in]
0 \leq h_p^{k,{\rm HV}} \, \perp \, 
\mu^{k,{\rm HV}} C_p( \boldsymbol{h} ) - 
t_{s,i}^{k,{\rm HV}} \geq 0, \quad \forall \, 
(s,i) \in \mathcal{D}\times\mathcal{O}, 
\, p \in\mathcal{P}_{si}, \, 
k \in \mathcal{K}.
\end{array}
\end{equation}
}

\noindent (B) Conversely, under the conditions 
in (\ref{eq:modified traffic model}), the 
optimization problem 
(\ref{eq:traveler choice}) in the traveler
choice module is equivalent 
to one in which the demand satisfaction 
equality is formulated as an inequality:
\begin{equation} \label{eq:demand inequality}
D_{ij}^{\rm SV} + \displaystyle{
\sum_{x\in\mathcal{X}}
} \, \displaystyle{
\sum_{k\in \mathcal{K}^x_{ij}}
} \ D_{ij}^{k,x} \, \geq \, D_{ij}.
\end{equation}
\end{lemma}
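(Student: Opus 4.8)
The plan is to handle the two parts separately, and within part~(A) to treat each of the three families of flow-balance equalities (solo, TNC service, TNC pickup) by the same device: the classical conversion between a Wardrop equality formulation and its complementarity form, exactly as in \cite[Proposition~1.4.6]{facchinei2003finite}. The Wardrop complementarity rows of (\ref{eq:traffic model}) are carried over verbatim into (\ref{eq:modified traffic model}), so the entire content of the equivalence concerns trading the flow-conservation equality for the complementarity on $t$.

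For part~(A), fix an OD pair and the solo family. The forward implication is immediate: if $\sum_{p}h_p^{\rm SV}=D_{ij}^{\rm SV}$, then the slack $\sum_p h_p^{\rm SV}-D_{ij}^{\rm SV}$ vanishes, so the complementarity $0\leq t_{ij}^{\rm SV}\perp \sum_p h_p^{\rm SV}-D_{ij}^{\rm SV}\geq 0$ holds trivially for every $t_{ij}^{\rm SV}\geq 0$. For the reverse implication I would read the disjunction encoded in that complementarity: either the slack is zero, which is exactly the equality, or $t_{ij}^{\rm SV}=0$. In the latter case the Wardrop row collapses to $0\leq h_p^{\rm SV}\perp C_p(\boldsymbol{h})\geq 0$, hence $\sum_p h_p^{\rm SV}C_p(\boldsymbol{h})=0$, and the weak positivity condition (\ref{eq:path cost conditions I}) forces $h_p^{\rm SV}=0$ for all $p$, so $\sum_p h_p^{\rm SV}=0$; together with $\sum_p h_p^{\rm SV}\geq D_{ij}^{\rm SV}\geq 0$ this pinches $D_{ij}^{\rm SV}=0=\sum_p h_p^{\rm SV}$, again the equality. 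The two TNC families are identical once one notes that the relaxation factors $\mu^{k,{\rm AV}},\mu^{k,{\rm HV}}\geq 1>0$ are harmless: setting the relevant travel time to $0$ gives $0\leq h_p^{k,x}\perp \mu\,C_p(\boldsymbol{h})\geq 0$, which is equivalent to $h_p^{k,x}C_p(\boldsymbol{h})=0$, so that (\ref{eq:path cost conditions II}) and (\ref{eq:path cost conditions III}) play the role of (\ref{eq:path cost conditions I}); the corresponding right-hand demands $D_{ij}^{k,x}$ and $\sum_{j} z_{s,ij}^{k,x}$ are nonnegative, so the same pinching yields the equality.

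For part~(B), the key observation is that, under (\ref{eq:modified traffic model}), every disutility attached to a positively used mode is strictly positive, so that minimizing a nonnegatively weighted sum of demands against $D_{ij}^{\rm SV}+\sum_{x,k}D_{ij}^{k,x}\geq D_{ij}$ must bind. Indeed $V_{ij}^{\rm SV}>0$ always, and for any TNC mode carrying positive demand the second line of (\ref{eq:modified traffic model}) gives $\sum_p h_p^{k,x}\geq D_{ij}^{k,x}>0$, so a used path exists and $t_{ij}^{k,x}=\mu\,C_p(\boldsymbol{h})\geq C_p(0)\geq t_{ij}^0$ (using $\mu\geq 1$ and $C_p(\boldsymbol{h})\geq C_p(0)$); since all fare, time, distance, and waiting coefficients are positive, this makes $V_{ij}^{k,x}\geq F_{ij}^{k,x}>0$. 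I would then argue by contradiction: at an optimal solution of the inequality-relaxed traveler LP with $D_{ij}^{\rm SV}+\sum_{x,k}D_{ij}^{k,x}>D_{ij}$ for some OD pair, reduce a positively used mode by the excess---the solo mode if $D_{ij}^{\rm SV}>0$, otherwise any TNC mode with positive demand. The demand inequality, the one-sided fleet-demand constraints $\sum_s z_{s,ij}^{k,x}\geq D_{ij}^{k,x}$, and nonnegativity are all preserved, while the objective strictly decreases because the reduced mode has strictly positive disutility, contradicting optimality. Hence every optimum of the relaxed problem satisfies the demand constraints with equality; since the equality-feasible set is contained in the inequality-feasible set, the two programs share the same optimal value and the same optimal solutions.

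I expect the crux to be the reverse implication in part~(A): it is the only place where the weak positivity conditions (\ref{eq:path cost conditions I})--(\ref{eq:path cost conditions III}) are genuinely invoked, and care is needed to confirm both that the relaxation factors $\mu\geq 1$ do not interfere (so that $\mu\,C_p(\boldsymbol{h})=0\iff C_p(\boldsymbol{h})=0$) and that each right-hand demand term is nonnegative, so that the slack inequality can be pinched down to an equality. Part~(B) is then a routine sign-and-monotonicity argument once the strict positivity of the active disutilities has been extracted from the traffic conditions; the only delicate point there is to restrict attention to modes with positive demand, since zero-demand modes contribute nothing to the objective and require no control on the sign of their disutility.
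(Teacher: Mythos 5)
Your proposal is correct and takes essentially the same approach as the paper: part (A) is the identical disjunction-plus-weak-positivity argument (complementarity gives either zero slack or zero travel time, and in the latter case conditions (\ref{eq:path cost conditions I})--(\ref{eq:path cost conditions III}) force all path flows, hence the demand, to zero), and part (B) uses the same ingredients --- the flow-versus-demand row of (\ref{eq:modified traffic model}), the Wardrop complementarity, the free-flow bound with $\mu \geq 1$, and a demand-reduction perturbation at an optimum with slack. The only difference is cosmetic: you derive $V_{ij}^{k,x} > 0$ from the traffic conditions and then contradict optimality, while the paper assumes optimality with slack, deduces $V_{ij}^{k,x} \leq 0$, and contradicts the traffic conditions; the two arrangements are contrapositives of one another.
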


\begin{proof} (A) The implication
(\ref{eq:traffic model}) $\Rightarrow$ 
(\ref{eq:modified traffic model}) is obvious.
To prove the converse
implication: (\ref{eq:modified traffic model})
$\Rightarrow$ (\ref{eq:traffic model}).  we 
need to show that any solution to 
(\ref{eq:modified traffic model}) will satisfy
\[ \begin{array}{l}
\displaystyle{
\sum_{p \in \mathcal{P}_{ij}}
} \, h_p^{\rm SV} \, = \, D_{ij}^{\rm SV}, 
\ \displaystyle{
\sum_{p \in\mathcal{P}_{ij}}
} \, h_p^{k,x} \, = \, D_{ij}^{k,x}, 
\ \mbox{ and } \ \displaystyle{
\sum_{p \in\mathcal{P}_{si}}
} \, h_p^{k,x} \, = \, \displaystyle
\sum_{j\in\mathcal{D}^{k,x} | (i,j)\in\mathcal{W}^{k,x}
} \, z_{s,ij}^{k,x}.
\end{array} \]
We prove only the first equality.  The proof of
the other two equalities is similar.
Suppose otherwise; then $\displaystyle{
\sum_{p \in \mathcal{P}_{ij}}
} \, h_p^{\rm SV} > D_{ij}^{\rm SV}$ for
some $(i,j)$ in ${\cal W}$.  By
complementarity, it follows that
$t_{ij}^{\rm SV} = 0$.  We then have
$\displaystyle{
\sum_{p \in {\cal P}_{ij}}
} \, h_p^{\rm SV} C_p( \boldsymbol{h} ) = 0$.  
By the
assumption (\ref{eq:path cost conditions I}),
we deduce $h_p^{\rm SV} = 0$ for all 
$p \in {\cal P}_{ij}$; this implies
$D_{ij}^{\rm SV} < 0$, which is a 
contradiction.

\gap

\noindent (B)  It suffices to show that an 
optimal solution of the 
problem (\ref{eq:traveler choice}) with the
demand inequality 
(\ref{eq:demand inequality}) must satisfy
this inequality as an equality.  Assume by
way of contradiction that this is false;
i.e., $D_{ij}^{\rm SV} + \displaystyle{
\sum_{x\in\mathcal{X}}
} \, \displaystyle{
\sum_{k\in \mathcal{K}^x_{ij}}
} \ D_{ij}^{k,x} > D_{ij} > 0$.  Then either
$D_{ij}^{\rm SV} > 0$ for some 
$(i,j) \in {\cal W}$ or $D_{ij}^{k,x} > 0$
for some $(k,x) \in {\cal K} \times {\cal X}$
and some $(i,j)$ in ${\cal W}^{k,x}$.  Assume
the former.  Then, by optimality to the
problem (\ref{eq:traveler choice}), 
we must have 
$V_{ij}^{\rm SV} \leq 0$, which contradicts
the positivity of this quantity by its 
definition.  Assume now that $D_{ij}^{k,x} > 0$
for some $(k,x) \in {\cal K} \times {\cal X}$
and some $(i,j) \in {\cal W}^{k,x}$.  Then 
similarly, we must have $V_{ij}^{k,x} \leq 0$,
which implies $t_{ij}^{k,x} < t_{ij}^0$.  By
the definition of the latter free-flow time,
we have $t_{ij}^{k,x} < C_p(\boldsymbol{h})$ for
all $p \in {\cal W}^{k,x}$.  This implies
by complementarity that $h_p^{k,x} = 0$
for all such $p$.  But this contradicts
the inequality 
$\displaystyle{
\sum_{p \in\mathcal{P}_{ij}}
} \, h_p^{k,x} \geq D_{ij}^{k,x} > 0$.
\end{proof}

Next is the flow conservation equality in
the TNC module. The lemma below also shows 
that
there is an equivalent formulation of the
other constraints in (\ref{eq:TNC module})
in which the travel demands $D_{ij}^{k,x}$
in the fleet capacity constraint,
which are exogenous to this module, can be
replaced by the module's primary variables
$z_{s,ij}^{k,x}$.  This substitution
turns out to be
an important maneuver for our existence 
proof of an equilibrium solution to the 
mixed-fleet transportation system.  

\begin{lemma} \label{lm:TNC inequalities} \rm
Given \(D_{ij}^{k,x}\geq 0 \), for any TNC $k$,
the constraints of the optimization problem 
(\ref{eq:TNC module}) are equivalent 
to the following:
\begin{equation} 
\label{eq:TNC constraints inequality}
\left. \begin{array}{l}
\text{\textcolor{blue}{\begin{tabular}{l}
revised flow \\
conservation
\end{tabular}}} \ \displaystyle{
\sum_{(i,j)\in {\cal W}^{k,x}}
} \, z_{s,ij}^{k,x} \, \leq \, \displaystyle{
\sum_{i\in\mathcal{O}^{k,x}|(i,s)\in\mathcal{W}^{k,x}}
} \, D_{is}^{k,x}, \quad x \in \mathcal{X}, \ 
s \in \mathcal{D}^{k,x} \\ [0.2in]
\text{\textcolor{blue}{\begin{tabular}{l}
fleet-demand \\
constraint
\end{tabular}}} \epc \displaystyle{
\sum_{s \in \mathcal{D}^{k,x}}
} \, z_{s,ij}^{k,x} \, \geq \, D_{ij}^{k,x}, 
\quad  x \in \mathcal{X}, \ 
(i,j) \in {\cal W}^{k,x} \\ [0.25in]
\text{\textcolor{blue}{AV capacity}}
\ \displaystyle{
\sum_{(i,j)\in {\cal W}^{k,{\rm AV}}}
} \ \displaystyle{
\sum_{s\in\mathcal{D}^{k,{\rm AV}}} 
} \, ( \, t_{s,i}^{k,{\rm AV}} +
t_{ij}^{k,{\rm AV}} \, ) \,
z^{k,{\rm AV}}_{s,ij} 
\leq \mu^{\rm cap}_{\rm AV} N^k \\ [0.25in]
\text{\textcolor{blue}{\begin{tabular}{l}
fleet \\
capacity
\end{tabular}
}} \ \displaystyle{
\sum_{x \in {\cal X}}
} \ \displaystyle{
\sum_{(i,j)\in {\cal W}^{k,x}}
} \  \displaystyle{
\sum_{s\in\mathcal{D}^{k,x}}
} \, ( t_{s,i}^{k,x} + t_{ij}^{k,x} ) \,
z^{k,x}_{s,ij} 
\, \leq \, N^k \\ [0.3in]
\text{\textcolor{blue}{nonnegativity}} \epc 
z_{s,ij}^{k,x} \, \geq \, 0, \quad 
\text{for }x \in \mathcal{X}, \ 
s \in \mathcal{D}^{k,x}, \ 
(i,j) \in {\cal W}^{k,x}
\end{array} \right\}.
\end{equation}
Moreover under either (\ref{eq:TNC module}) 
or (\ref{eq:TNC constraints inequality}),
it must hold that 
\begin{equation} 
\label{eq:fleet demand equality}
\displaystyle{
\sum_{s \in {\cal D}^{k,x}}
} \, z_{s,ij}^{k,x} = D_{ij}^{k,x}, \epc
\forall \, x \in {\cal X}, \,  
(i,j) \in {\cal W}^{k,x}.
\end{equation}
\end{lemma}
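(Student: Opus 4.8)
The plan is to prove the two assertions in the natural order, extracting the hinge identity \eqref{eq:fleet demand equality} first, since both the constraint equivalence and the capacity-term substitution turn on it. First I would establish \eqref{eq:fleet demand equality} under the original constraints \eqref{eq:TNC module}. The crucial observation is that summing the flow-conservation equalities over all destination nodes $s\in\mathcal{D}^{k,x}$ aggregates the dispatch flows against the served demand: for each fixed $x$,
\[
\sum_{s\in\mathcal{D}^{k,x}}\ \sum_{(i,j)\in\mathcal{W}^{k,x}} z_{s,ij}^{k,x}
\ = \
\sum_{s\in\mathcal{D}^{k,x}}\ \sum_{i\in\mathcal{O}^{k,x}\mid (i,s)\in\mathcal{W}^{k,x}} D_{is}^{k,x}.
\]
Relabeling the destination index $s$ as $j$ on the right identifies that side as $\sum_{(i,j)\in\mathcal{W}^{k,x}} D_{ij}^{k,x}$, while the left side is $\sum_{(i,j)}\big(\sum_{s} z_{s,ij}^{k,x}\big)$. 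Hence $\sum_{(i,j)}\big[\sum_{s} z_{s,ij}^{k,x}-D_{ij}^{k,x}\big]=0$, a sum of nonnegative terms by the fleet-demand constraint, so every summand vanishes, yielding \eqref{eq:fleet demand equality}.

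With \eqref{eq:fleet demand equality} in hand, the AV-capacity and fleet-capacity constraints convert exactly. Substituting $D_{ij}^{k,x}=\sum_{s} z_{s,ij}^{k,x}$ turns each term $t_{ij}^{k,x} D_{ij}^{k,x}$ into $\sum_{s} t_{ij}^{k,x} z_{s,ij}^{k,x}$, merging the two summands of the original constraints into the single coefficient $(t_{s,i}^{k,x}+t_{ij}^{k,x})$ appearing in \eqref{eq:TNC constraints inequality}. This establishes the implication \eqref{eq:TNC module} $\Rightarrow$ \eqref{eq:TNC constraints inequality}, since weakening the flow-conservation equality to the inequality $\leq$ is immediate and the fleet-demand and nonnegativity constraints are unchanged.

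For the converse \eqref{eq:TNC constraints inequality} $\Rightarrow$ \eqref{eq:TNC module}, I would rerun the same aggregation starting from the inequality. Summing the revised flow-conservation inequalities over $s$ gives $\sum_{(i,j)}\sum_{s} z_{s,ij}^{k,x}\leq\sum_{(i,j)} D_{ij}^{k,x}$, while the fleet-demand constraint gives the reverse aggregate inequality; together they again force \eqref{eq:fleet demand equality}, now under the revised constraints. The final step is to promote the per-$s$ flow-conservation inequalities back to equalities: since each obeys $\sum_{(i,j)} z_{s,ij}^{k,x}\leq\sum_{i\mid(i,s)} D_{is}^{k,x}$ and the totals over $s$ of both sides coincide (each equals $\sum_{(i,j)} D_{ij}^{k,x}$), no slack can survive in any single inequality, so each holds with equality, recovering the original flow-conservation equalities. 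The capacity constraints revert by the same exact substitution, completing the equivalence.

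I expect the main obstacle to be purely the index bookkeeping, namely verifying that the destination-aggregated right-hand side $\sum_{s}\sum_{i\mid(i,s)} D_{is}^{k,x}$ equals the total served demand $\sum_{(i,j)} D_{ij}^{k,x}$; this is precisely the identity that lets the nonnegativity of the fleet-demand slacks pin down each term individually. Everything else is a telescoping consequence of this one aggregation applied twice, once with equality and once with inequality.
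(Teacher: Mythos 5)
Your proof is correct and follows essentially the same route as the paper's: both hinge on the aggregation argument that sums the flow-conservation constraints over $s\in\mathcal{D}^{k,x}$ and the fleet-demand constraints over $(i,j)\in\mathcal{W}^{k,x}$, identifies the two right-hand sides via the index relabeling $\sum_{s}\sum_{i\mid(i,s)\in\mathcal{W}^{k,x}}D_{is}^{k,x}=\sum_{(i,j)\in\mathcal{W}^{k,x}}D_{ij}^{k,x}$, and then uses nonnegativity of the slacks to force the per-term equalities \eqref{eq:fleet demand equality}, after which the capacity constraints convert by direct substitution. The only difference is presentational: the paper details the implication \eqref{eq:TNC constraints inequality} $\Rightarrow$ \eqref{eq:TNC module} and dismisses the converse as ``similar,'' whereas you spell out both directions, including the step of promoting the per-$s$ inequalities back to equalities.
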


\begin{proof}  
(\ref{eq:TNC constraints inequality})
$\Rightarrow$ (\ref{eq:TNC module}).
On one hand, summing up the revised flow conservation 
inequalities 
over all $s \in {\cal D}^{k,x}$ yields:
\begin{equation} 
\label{eq:sum flow constraints}
\displaystyle{
\sum_{s \in {\cal D}^{k,x}}
} \, \displaystyle{
\sum_{(i,j)\in {\cal W}^{k,x}}
} \, z_{s,ij}^{k,x} \, \leq \, \displaystyle{
\sum_{s \in {\cal D}^{k,x}}
} \, \displaystyle{
\sum_{i\in\mathcal{O}^{k,x}|(i,s)\in\mathcal{W}^{k,x}}} \, D_{is}^{k,x}.
\end{equation}
On the other hand, 
summing up the fleet constraint over all 
$(i,j) \in {\cal W}^{k,x}$, we have
\begin{equation} 
\label{eq:sum fleet constraints}
\displaystyle{
\sum_{(i,j) \in {\cal W}^{k,x}}
} \, \displaystyle{
\sum_{s \in \mathcal{D}^{k,x}}
} \, z_{s,ij}^{k,x} \, \geq \, 
\displaystyle{
\sum_{(i,j) \in {\cal W}^{k,x}}
} \, D_{ij}^{k,x} \, = \,  \displaystyle{
\sum_{s \in {\cal D}^{k,x}}
} \, \displaystyle{
\sum_{i\in\mathcal{O}^{k,x}|(i,s)\in\mathcal{W}^{k,x}}
} \, D_{is}^{k,x}.
\end{equation}
Combining (\ref{eq:sum flow constraints}) 
and (\ref{eq:sum fleet constraints}), it
follows that the flow conservation and
fleet-demand inequalities in 
(\ref{eq:TNC constraints inequality}) must
hold as equalities. In particular,
(\ref{eq:fleet demand equality}) holds.
Thus, the AV capacity and fleet capacity 
in (\ref{eq:TNC module}) both hold.

\gap

\noindent (\ref{eq:TNC module}) $\Rightarrow$
(\ref{eq:TNC constraints inequality}).  This
can be proved similarly.
\end{proof}

Next, we introduce multipliers for
all the constraints in the MAGE-CQ model and
write down the optimality conditions for each 
of the optimization problems in the modules.
By the above two lemmas, all the constraints
can be formulated as inequalities; thus
their multipliers are all nonnegative.  Most
importantly, we invoke the postulate
of proportional multipliers (denoted 
\textcolor{red}{$\lambda_{ij}^{k,x}$})
for the shared fleet-demand constraint:
$\displaystyle{  
\sum_{s \in \mathcal{D}^{k,x}} 
} \, z_{s,ij}^{k,x} \, \geq \, D_{ij}^{k,x}$
to define the variational equilibrium.  We then
concatenate all the optimality
conditions into a large-scale 
nonlinear complementarity problem, which we
denote as (NCP)$_{\rm main}$:
{\small
\[ \begin{array}{l}
0 \, \leq \, z_{s,ij}^{k,{\rm AV}} 
\, \perp \, - \wt{R}_{s,ij}^{\, k,{\rm AV}} -
\alpha_1^{k,x} t_{ij}^{k,{\rm AV}} + 
\beta_{1}^{k,x} ( t_{s,i}^{k,{\rm AV}} + 
t_{ij}^{k,{\rm AV}} ) - 
\beta_3^{k,{\rm AV}}
t_{s,i}^{k,{\rm AV}}\, + \\ [0.1in]
\epc \phi_s^{k,{\rm AV}} - 
\textcolor{red}{
\lambda_{ij}^{k,{\rm AV}}} + 
( t_{s,i}^{k,{\rm AV}} + 
t_{ij}^{k,{\rm AV}} ) \,
( \nu^k + \nu^k_{\rm AV} )  
\geq 0, \epc \forall \, k \in \mathcal{K}, \, 
s \in {\cal D}^{k,{\rm AV}}, 
\, (i,j) \in \mathcal{W}^{k,{\rm AV}}  
\\ [0.1in]
0 \, \leq \, z_{s,ij}^{k,{\rm HV}} 
\, \perp \, - \wt{R}_{s,ij}^{\, k,{\rm HV}} -
\alpha_1^{k,x} t_{ij}^{k,{\rm HV}} + 
\beta_{1}^{k,x} ( t_{s,i}^{k,{\rm HV}} + 
t_{ij}^{k,{\rm HV}} ) - 
\beta_3^{k,{\rm HV}}
t_{s,i}^{k,{\rm HV}} + \\ [0.1in]
\epc \phi_s^{k,{\rm HV}} - 
\textcolor{red}{
\lambda_{ij}^{k,{\rm HV}}} + 
( t_{s,i}^{k,{\rm HV}} + 
t_{ij}^{k,{\rm HV}} ) \, \nu^k \geq 0,
\epc \forall \, k \in \mathcal{K}, \, 
s \in {\cal D}^{k,{\rm HV}}, 
\, (i,j) \in \mathcal{W}^{k,{\rm HV}}  
\\[0.1in]
0 \, \leq \, \phi_s^{k,x} \, \perp \, 
\displaystyle{
\sum_{i\in\mathcal{O}^{k,x}|(i,s)\in\mathcal{W}^{k,x}}
} \, D_{is}^{k,x} - \displaystyle{
\sum_{(i,j)\in \mathcal{W}^{k,x}}
} \, z_{s,ij}^{k,x} \, \geq \, 0, \ 
\forall \, k\in\mathcal{K}, \, 
x \in \mathcal{X}, 
\, s \in \mathcal{D}^{k,x} \\ [0.3in]
0 \, \leq \, \lambda_{ij}^{k,x} \, \perp \, 
\displaystyle{
\sum_{s \in \mathcal{D}^{k,x}}
} \, z_{s,ij}^{k,x} - D_{ij}^{k,x} 
\, \geq \, 0, \hspace{0.2in} 
\forall \, k \,\in \mathcal{K}, \, 
x \in \mathcal{X}, 
\, (i, j) \in \mathcal{W}^{k,x} \\ [0.3in]
0 \, \leq \, \nu^k \, \perp \, N^k -
\displaystyle{
\sum_{x \in {\cal X}}
} \, \displaystyle{
\sum_{(i,j)\in \mathcal{W}^{k,x}}
} \, \displaystyle{
\sum_{s\in\mathcal{D}^{k,x}}
} \, ( t_{s,i}^{k,x} + t_{ij}^{k,x} )
z^{k,x}_{s,ij} 
\, \geq \, 0, \epc 
\forall \, k \in \mathcal{K} \\ [0.3in]
0 \, \leq \, \nu^k_{\rm AV} \, \perp \,
\mu^{\rm cap}_{\rm AV} N^k -
\displaystyle{
\sum_{(i,j)\in \mathcal{W}^{k,{\rm AV}}}
} \, \displaystyle{
\sum_{s\in\mathcal{D}^{k,{\rm AV}}}
} \, ( t_{s,i}^{k,{\rm AV}} +
t_{ij}^{k,{\rm AV}} ) z^{k,{\rm AV}}_{s,ij} 
\, \geq \, 0, \epc 
\forall \, k \in \mathcal{K} \\ [0.3in]
0 \, \leq \, D_{ij}^{\rm SV} \, \perp \, 
\alpha_1^{\rm SV}t_{ij}^{\rm SV} + 
\alpha^{\rm SV}_2 d_{ij}^{\, 0} - 
\sigma_{ij} \, \geq \, 0, \epc \forall \, 
(i,j) \in \mathcal{W} \\ [0.1in]
0 \, \leq \, D_{ij}^{k,x} \, \perp \, 
F_{ij}^{k,x} 
+ \alpha_{1}^{k,x} (t_{ij}^{k,x}- 
t_{ij}^{\, 0}) + \alpha_{2}^{k,x} 
d_{ij}^{\, 0}
+ \gamma_{1}^{k,x}t_{ij}^{k,x} +
\gamma_{2}^{k,x} w_{ij}^{k,x} \\ [0.1in]
\hspace{1in} - \, \sigma_{ij} + 
\textcolor{red}{
\lambda_{ij}^{k,x}} \, 
\geq \, 0, \hspace{0.3in} 
\forall \, (k, x) \in \mathcal{K} \times 
\mathcal{X}, \, (i,j) \in \mathcal{W}^{k,x} 
\\ [0.1in]
0 \, \leq \, \sigma_{ij} \, \perp \,   
D_{ij}^{\rm SV} + \displaystyle{
\sum_{k\in\mathcal{K}}
} \, \displaystyle{
\sum_{x\in\mathcal{X}}
} \, D_{ij}^{k,x} - D_{ij} \, \geq \, 0, \epc 
\forall \, (i,j) \in \mathcal{W}
\end{array} \]
}
\[ \begin{array}{l}
0 \, \leq \, \theta_{s,ij}^{k,x} \, \perp \, 
\theta_{s,ij}^{k,x} \displaystyle{
\sum_{s^{\prime} \in \mathcal{D}^{k,x}}
} z_{s^{\prime},ij}^{k,x} - 
z_{s,ij}^{k,x} + \zeta_{s,ij}^{k,x} 
\geq 0, \\ [0.2in] 
\hspace{1in} \forall \, k \in \mathcal{K}, \, 
x \in \mathcal{X}, \,
s \in \mathcal{D}^{k,x}, 
\, (i,j) \in \mathcal{W}^{k,x} \\[0.1in]
0 \, \leq \, \zeta_{s,ij}^{k,x} \, \perp \,
1 - \theta_{s,ij}^{k,x} \, \geq \, 0, \epc
\epc \forall \, k \in \mathcal{K}, \, 
x \in \mathcal{X}, \, 
s \in \mathcal{D}^{k,x}, 
\, (i,j) \in \mathcal{W}^{k,x} \\ [0.1in]
0 \, \leq \, t_{ij}^{\rm SV} \, \perp \,
\displaystyle{
\sum_{p \in \mathcal{P}_{ij}}
} \, h_p^{\rm SV} - D_{ij}^{\rm SV} 
\, \geq \, 0, \hspace{0.8in}  
\forall \, (i,j) \in {\cal W} \\ [0.25in]
0 \, \leq \, t_{ij}^{k,x} \, \perp \,
\displaystyle{
\sum_{p \in\mathcal{P}_{ij}}
} \, h_p^{k,x} - D_{ij}^{k,x} \, \geq \, 0,
\hspace{0.4in} \forall \, 
x \in \mathcal{X}, \, k \in \mathcal{K}, 
\, (i,j) \in {\cal W}^{k,x} \\ [0.25in]
0 \, \leq \, t_{s,i}^{k,x} \, \perp \,
\displaystyle{
\sum_{p \in\mathcal{P}_{si}}
} \, h_p^{k,x} - \displaystyle{
\sum_{j\in\mathcal{D}^{k,x}}
} \, z_{s,ij}^{k,x} \, \geq \, 0, \quad \, 
 \forall \, x \in \mathcal{X}, \, 
k \in \mathcal{K}, \, 
(s,i) \in \mathcal{D}^{k,x} \times 
\mathcal{O}^{k,x} \\ [0.25in]
0 \leq h_p^{\rm SV} \, \perp \, 
C_p( \boldsymbol{h} ) - t_{ij}^{\rm SV} 
\, \geq \, 0, 
\hspace{0.8in} \forall \, (i,j) \in {\cal W}, 
\, p \in\mathcal{P}_{ij} \\ [0.1in]
0 \leq h_p^{k,{\rm AV}} \, \perp \, 
\mu^{k,{\rm AV}} C_p(\boldsymbol{h}) - 
t_{ij}^{k,{\rm AV}} \geq 0,
\epc \forall \,  k\in\mathcal{K},\, (i,j) \in {\cal W}^{k,\rm AV}, \, 
p \in \mathcal{P}_{ij}
\\ [0.1in]
0 \, \leq \, h_p^{k,{\rm AV}} \, \perp \,  
C_p( \boldsymbol{h} ) - 
t_{s,i}^{k,{\rm AV}} \geq 0, \epc \forall \, k \in \mathcal{K},\,
(s,i) \in \mathcal{D}^{k,\rm AV}\times\mathcal{O}^{k,\rm AV}, 
\, p \in\mathcal{P}_{si} \\ [0.1in]
0 \, \leq \, h_p^{k,{\rm HV}} \, \perp \, 
C_p( \boldsymbol{h}) 
- t_{ij}^{k,{\rm HV}} \, \geq \, 0, 
\hspace{0.2in} 
\forall \, k\in\mathcal{K} ,\,(i,j) \in {\cal W}^{k,\rm HV}, \, 
p \in \mathcal{P}_{ij}
\\ [0.1in]
0 \, \leq \, h_p^{k,{\rm HV}} \, \perp \, 
\mu^{k,{\rm HV}} C_p( \boldsymbol{h}) - 
t_{s,i}^{k,{\rm HV}} \, \geq \, 0, 
\epc \forall \, k \in \mathcal{K},\,
(s,i) \in \mathcal{D}^{k,{\rm HV}}\times\mathcal{O}^{k,{\rm HV}}, 
\, p \in\mathcal{P}_{si}.
\end{array} \]
We define the
polyhedron ${\cal Z}$ consisting of 
nonnegative tuples 
$\{ z_{s,ij}^{k,x},D_{ij}^{k,x},
D_{ij}^{\rm SV} \}$ for all 
$(k,x)$ in ${\cal K} \times {\cal X}$,
$s \in {\cal D}^{k,x}$, and 
$(i,j) \in {\cal W}$ satisfying the following
flow conservation, fleet demand, and total demand constraints, respectively:
\[ \begin{array}{ll}
\displaystyle{
\sum_{(i,j)\in {\cal W}^{k,x}}
} \, z_{s,ij}^{k,x} \, \leq \, \displaystyle{
\sum_{i\in\mathcal{O}^{k,x}|(i,s)\in\mathcal{W}^{k,x}}
} \, D_{is}^{k,x}, & \forall k\in \mathcal{K},\, x \in \mathcal{X}, \ 
s \in \mathcal{D}^{k,x} \\ [0.3in]
\displaystyle{  
\sum_{s \in \mathcal{D}^{k,x}} 
} \, z_{s,ij}^{k,x} \, \geq \, D_{ij}^{k,x}, 
& \forall \,
k \in \mathcal{K}, \, x \, \in\mathcal{X}, 
\, (i, j) \in {\cal W}^{k,x} \\ [0.3in]
D_{ij}^{\rm SV} + \displaystyle{
\sum_{x\in\mathcal{X}}
} \, \displaystyle{
\sum_{k\in \mathcal{K}^x_{ij}}
} \ D_{ij}^{k,x} \, = \, D_{ij}, &
\forall \, (i,j) \in {\cal W}.
\end{array} \]
Note that the above inequalities must hold
as equalities 
for all tuples $\boldsymbol{z} \in {\cal Z}$
which are obviously bounded because of the 
last constraint.

\gap

\noindent {\bf Second step---bounds and a VI 
formulation:}  We establish upper bounds 
for the primary variables of the model that
will be used in the VI formulation; these
variables are:
\begin{equation} 
\label{eq:primary model variables}
\left\{ \, z_{s,ij}^{k,x}, \, 
D_{ij}^{k,x}, \, D_{ij}^{\rm SV}, \,
h_p^{k,x}, \, h_p^{\rm SV}, \, t_{ij}^{k,x}, 
\, t_{s,i}^{k,x}, \, t_{ij}^{\rm SV} \, 
\right\},
\end{equation}
with $\left\{ z_{s,ij}^{k,x}, 
D_{ij}^{k,x}, D_{ij}^{\rm SV} \right\}$
belonging to the polyhedron $\mathcal{Z}$,
whose elements we denote $\boldsymbol{z}$, and with 

\noindent $\left\{ h_p^{k,x}, h_p^{\rm SV}, 
t_{ij}^{k,x}, t_{s,i}^{k,x}, t_{ij}^{\rm SV}
\right\}$ satisfying 
(\ref{eq:modified traffic model}), or
equivalently, (\ref{eq:traffic model}).  
This
is clear for the former family of variables.
It is also clear for the path flow variables
because
\[
\displaystyle{
\sum_{p \in \mathcal{P}_{ij}}
} \, h_p^{\rm SV} \, = \, D_{ij}^{\rm SV},
\epc \displaystyle{
\sum_{p \in\mathcal{P}_{ij}}
} \, h_p^{k,x} \, = \, D_{ij}^{k,x},
\ \mbox{ and } \ 
\displaystyle{
\sum_{p \in {\cal P}_{si}}
} \, h_p^{k,x} \, = \, \displaystyle{
\sum_{j \in {\cal D}^{k,x}}
} \, z_{s,ij}^{k,x}.
\]
Indeed, by letting
\[
\bar{h} >\max\left\{ \displaystyle{
\max_{(i,j) \in {\cal W}}
} \, D_{ij},\, \displaystyle{
\max_{(k,x) \in {\cal K} \times {\cal X}}
} \ \left( \displaystyle{
\max_{(i,j) \in {\cal W}^{k,x}} 
} \displaystyle{
\max_{p \in {\cal P}_{ij}}
} h_p^{k,x}, \, \displaystyle{
\max_{(s,i) \in {\cal D}^{k,x} \times
{\cal O}^{k,x}}
} \ \displaystyle{
\max_{p \in {\cal P}_{si}}
} h_p^{k,x} \right), \ \displaystyle{
\max_{(i,j) \in {\cal W}}
} \displaystyle{
\max_{p \in {\cal P}_{ij}}
} h_p^{\rm {\rm SV}} \right\},
\]
it then follows that $\bar{h}$
is a (strict) upper bound of all tuples 
$\boldsymbol{z} \in {\cal Z}$ and all path
tuples $\{ h_p^{k,x},h_p^{\rm SV} \}$.
For the travel time variables, we 
have, as an illustration,
\[
t_{i^{\prime} j^{\prime}}^{
k^{\prime},x^{\prime}} \, \leq \, 
\displaystyle{
\max_{(k,x) \in {\cal K} \times {\cal X}}
} \, \displaystyle{
\max_{(i,j) \in {\cal W}^{k,x}}
} \, \displaystyle{
\max_{p \in {\cal P}_{ij}}
} \, \displaystyle{
\max_{0 \leq \boldsymbol{h} \leq \bar{h} \onebld}
} \, \mu^{k,x} \, C_p(\boldsymbol{h}). 
\]
Therefore, letting $\bar{t}$ be a 
(strict) upper bound of all the path costs (i.e., the right-hand 
maxima in the above expression), we obtain
\[
\bar{t} \, > \, \displaystyle{
\max\left\{ \, \displaystyle{
\max_{(k,x) \in {\cal K} \times {\cal X}}
} \, \left( \displaystyle{
\max_{(i,j) \in {\cal W}^{k,x}}
} \, t_{ij}^{k,x}, \, \displaystyle{
\max_{p \in {\cal P}_{si}}
} t_{s,i}^{k,x} \right), \, \displaystyle{
\max_{(i,j) \in {\cal W}}
} \, t_{ij}^{\rm SV} \, \right\}
} \, .
\]
Based on the above bounds, we may now define
the VI that will be shown to be equivalent 
to the NCP formulation of the traffic problem
and which is the cornerstone for proving the
existence of its solution.
The VI variables are the tuples 
(\ref{eq:primary model variables}) along
with the auxiliary variables 
$\boldsymbol{\theta} \triangleq 
\left\{ \, \{ \theta_{ij}^{\, k,x} \}_{
(i,j) \in {\cal W}^{k,x}} \, \right\}_{
k \in {\cal K}}^{x \in {\cal X}}$, and
$\boldsymbol{\nu} \triangleq \left\{
\nu_{\rm AV}^k, \, \nu^k \right\}_{
k \in {\cal K}}$, which altogether 
belong to the feasible set: 
$\boldsymbol{V} \triangleq {\cal Z} 
\times {\cal H} \times [ \, 0,1 \, ]^K \times 
\mathbb{R}_+^{2| {\cal K} |}$, where
${\cal H}$ consists of all nonnegative tuples
$\boldsymbol{h} \triangleq
\left\{ h_p^{k,x}, h_p^{\rm SV}, 
t_{ij}^{k,x}, t_{s,i}^{k,x}, t_{ij}^{\rm SV}
\right\}$ with upper bounds of $\bar{h}$ and
$\bar{t}$, respectively, and $K$ is the total
number of the $\theta$-variables.  Thus,
all the multipliers of the linear constraints
are hidden in the VI formulation, but
the multipliers $\nu^k$ and
$\nu^k_{\rm AV}$ for the (nonlinear) total
fleet capacity and AV capacity constrains
are kept explicitly. [This is a major 
departure from the analysis in 
\cite{ban2019general} where the nonlinear
constraints are penalized.]
We define the 
block partitioned function
$\boldsymbol{F}\left( \boldsymbol{z},
\boldsymbol{h},\boldsymbol{\theta},
\boldsymbol{\nu} \right)$, whose blocks are
arranged in the order consistent with its arguments,

\gap

\noindent $\bullet \left( \begin{array}{l}
\left( \begin{array}{l}
-\wt{R}_{s,ij}^{\, k,{\rm AV}} - 
\alpha_1^{k,{\rm AV}} t_{ij}^{k,{\rm AV}} + 
\beta_{1}^{k,{\rm AV}} ( t_{s,i}^{k,{\rm AV}}
+ t_{ij}^{k,{\rm AV}} ) \\ [0.1in] 
- \, \beta_3^{k,{\rm AV}}t_{s,i}^{k,{\cal AV}}
+ ( t_{s,i}^{k,{\rm AV}} + 
t_{ij}^{k,{\rm AV}} ) ( \nu^k + 
\nu^k_{\rm AV})
\end{array} \right)_{k \in {\cal K}; \, 
s \in {\cal D}^{k,{\, AV}},}^{ 
(i,j) \in {\cal W}^{(k,{\rm AV})}} \\ [0.3in]
\left( \begin{array}{l}
-\wt{R}_{s,ij}^{\, k,{\rm HV}} - 
\alpha_1^{k,{\rm HV}} t_{ij}^{k,{\rm HV}} + 
\beta_{1}^{k,{\rm HV}} ( t_{s,i}^{k,{\rm HV}}
+ t_{ij}^{k,{\rm HV}} ) \\ [0.1in] 
- \, \beta_3^{k,{\rm HV}}t_{s,i}^{k,{\cal HV}}
+ ( t_{s,i}^{k,{\rm HV}} + 
t_{ij}^{k,{\rm HV}} ) \nu^k
\end{array} \right)_{k \in {\cal K}; \, 
s \in {\cal D}^{k,{\, HV}},}^{ 
(i,j) \in {\cal W}^{k,{\rm HV}}}
\end{array} \right) \mbox{\begin{tabular}{l} 
the \\
$\left( z_{ij}^{k,x} \right)$ \\
block of \\
$\boldsymbol{F}\left( \boldsymbol{z},
\boldsymbol{h},\boldsymbol{\theta},
\boldsymbol{\nu} \right)$
\end{tabular}}$

\gap

\noindent $\bullet $ $\left( \begin{array}{l}
\left( \, \begin{array}{l} 
F_{ij}^{k,x} + 
\alpha_{1}^{k,x} (t_{ij}^{k,x}- 
t_{ij}^{\, 0}) + \alpha_{2}^{k,x} 
d_{ij}^{\, 0} \, + \\ [0.1in] 
\gamma_{1}^{k,x}t_{ij}^{k,x} +
\gamma_{2}^{k,x} w_{ij}^{k,x}
\end{array} \right)_{
(i,j) \in \mathcal{W}^{k,x}}^{
(k,x) \in \mathcal{K} \times {\cal X}} 
\\ [0.3in]
\left( \alpha_1^{\rm SV}t_{ij}^{\rm SV} + 
\alpha^{\rm SV}_2 d_{ij}^{\, 0} \right)_{
(i,j) \in {\cal W}}
\end{array} \right) \mbox{\begin{tabular}{l}
the \\ 
$\left( D_{ij}^{k,x} \right)$ \\
block of \\
$\boldsymbol{F}\left( \boldsymbol{z},
\boldsymbol{h},\boldsymbol{\theta},
\boldsymbol{\nu} \right)$
\end{tabular}}$

\gap

\noindent $\bullet \left( \, \begin{array}{l}
\left( \, \mu^{k,{\rm AV}} C_p(\boldsymbol{h}) - 
t_{ij}^{k,{\rm AV}} \, \right)_{
(i,j) \in {\cal W}}^{
p \in \mathcal{P}_{ij} \, k\in\mathcal{K}} 
\\ [0.15in]
\left( \, C_p( \boldsymbol{h} ) - 
t_{s,i}^{k,{\rm AV}} \, \right)_{
(s,i) \in \mathcal{D}\times\mathcal{O}}^{ 
p \in\mathcal{P}_{si}, \, 
k \in \mathcal{K}} \\ [0.15in]
\left( \, C_p( \boldsymbol{h}) 
- t_{ij}^{k,{\rm HV}} \, \right)_{
(i,j) \in {\cal W}}^{
p \in \mathcal{P}_{ij}, \, k\in\mathcal{K}}
\\ [0.15in]
\left( \, \mu^{k,{\rm HV}} C_p( \boldsymbol{h}) - 
t_{s,i}^{k,{\rm HV}} \, \right)_{(s,i) \in
{\cal D} \times {\cal O}}^{k \in {\cal K}} 
\\ [0.15in]
\left( \, C_p( \boldsymbol{h} ) - t_{ij}^{\rm SV} 
\, \right)_{(i,j) \in {\cal W}}^{
p \in {\cal P}_{ij}}
\end{array} \right) \mbox{\begin{tabular}{l}
the \\ 
$\left( h_p^{k,x} \right)$ \\
block of \\
$\boldsymbol{F}\left( \boldsymbol{z},
\boldsymbol{h},\boldsymbol{\theta},
\boldsymbol{\nu} \right)$
\end{tabular}}$

\noindent $\bullet \left( \begin{array}{l}
\left( \, \displaystyle{
\sum_{p \in\mathcal{P}_{ij}}
} \, h_p^{k,x} - D_{ij}^{k,x} \, \right)_{
(i,j) \in {\cal W}^{k,x}}^{
(k,x) \in \mathcal{K} \times \mathcal{X}} 
\\ [0.3in]
\left( \, \displaystyle{
\sum_{p \in\mathcal{P}_{si}}
} \, h_p^{k,x} - \displaystyle{
\sum_{j\in\mathcal{D}^{k,x}}
} \, z_{s,ij}^{k,x} \, \right)_{
(s,i) \in {\cal D}^{k,x} \times 
{\cal O}^{k,x}}^{
(k,x) \in {\mathcal K}\times \mathcal{X}} 
\\ [0.25in]
\left( \, \displaystyle{
\sum_{p \in \mathcal{P}_{ij}}
} \, h_p^{\rm SV} - D_{ij}^{\rm SV} 
\, \right)_{(i,j) \in {\cal W}} 
\end{array} \right) \mbox{\begin{tabular}{l}
the \\
$\left( t_{ij}^{k,x} \right)$ \\
block of \\
$\boldsymbol{F}\left( \boldsymbol{z},
\boldsymbol{h},\boldsymbol{\theta},
\boldsymbol{\nu} \right)$
\end{tabular}}$

\gap

\noindent $\bullet \left( \, \left( \, 
\theta_{s,ij}^{k,x} 
\displaystyle{
\sum_{s^{\prime} \in \mathcal{D}^{k,x}}
} z_{s^{\prime},ij}^{k,x} - z_{s,ij}^{k,x} 
\, \right)_{s \in {\cal D}^{k,x}}^{
(i,j) \in {\cal W}^{k,x}} \, \right)_{
k \in {\cal K}}^{x \in {\cal X}}  \epc
\mbox{\begin{tabular}{l}
the \\
$\left( \theta_{s,ij}^{k,x} \right)$ \\
block of \\
$\boldsymbol{F}\left( \boldsymbol{z},
\boldsymbol{h},\boldsymbol{\theta},
\boldsymbol{\nu} \right)$
\end{tabular}}$

\gap

\noindent $\bullet \left( \, \begin{array}{l}
\left( \, N^k - \displaystyle{
\sum_{x \in {\cal X}}
} \, \displaystyle{
\sum_{(i,j)\in \mathcal{W}^{k,x}}
} \, \displaystyle{
\sum_{s\in\mathcal{D}^{k,x}}
} \, ( t_{s,i}^{k,x} + t_{ij}^{k,x} )
z^{k,x}_{s,ij} \, \right)_{
k \in {\cal K}} \\ [0.25in]
\left( \, \mu^{\rm cap}_{\rm AV} N^k - 
\displaystyle{
\sum_{(i,j)\in \mathcal{W}^{k,{\rm AV}}}
} \, \displaystyle{
\sum_{s\in\mathcal{D}^{k,{\rm AV}}}
} \, ( t_{s,i}^{k,{\rm AV}} + 
t_{ij}^{k,{\rm AV}} ) z^{k,{\rm AV}}_{s,ij} 
\, \right)_{k \in {\cal K}} 
\end{array} \right) \mbox{\begin{tabular}{l}
the \\
$\left( \nu^k, \, \nu^k_{\rm AV} \right)$ \\
block of \\
$\boldsymbol{F}\left( \boldsymbol{z},
\boldsymbol{h},\boldsymbol{\theta},
\boldsymbol{\nu} \right)$
\end{tabular}}$

\gap

\noindent We have the following main 
result, which has two
parts: the first part is the equivalence of
the VI and the mixed NCP; and the second part
is the existence of a solution to the VI, and
thus a normalized equilibrium of the 
mixed-fleet transportation system.  It is
important to point out that this result
requires minimal assumptions; in particular,
there is
no restriction on the TNCs' available fleet
sizes $N^k$ except for their positivity.
This is a significant improvement
of the existence result compared to that of 
the previous model in
\cite{ban2019general} which has a 
restriction on such fleets; see 
Lemma~3 therein. Given in the
Appendix, the proof of the
theorem requires the technical
Proposition~\ref{pr:existence VI for traffic}
which is also proved in the Appendix.

\begin{theorem} 
\label{th:equivalence and existence} \rm
Let the travel demands $D_{ij}$ be positive
for all $(i,j) \in {\cal W}$.
The following two statements hold:

\noindent (A) The VI defined by the pair
$\boldsymbol{F}$ and $\boldsymbol{V}$ is
equivalent to the (NCP)$_{\rm main}$
formulation 
of the mixed-fleet transportation system.

\gap

\noindent (B) Suppose that the path cost 
functions $C_p(\boldsymbol{h})$ are continuous, and 
nonnegative and satisfy the three
conditions: (\ref{eq:path cost conditions I}) 
(\ref{eq:path cost conditions II}), and
(\ref{eq:path cost conditions III}), and that
the customer
waiting times $w_{ij}^{k,x}$ are continuous
functions of the tuples 
$\{ \, z_{ij}^{k,x},D_{ij}^{k,x}, 
t_{s,i}^{k,x}, \theta_{s;ij}^{k,x} \, \}$.  
Then the VI has a solution.
\end{theorem}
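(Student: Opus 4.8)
The plan is to prove Theorem~\ref{th:equivalence and existence} in its two stated parts: first the VI/NCP equivalence in (A) by a block-wise reading of the variational inequality over the product set $\boldsymbol{V}={\cal Z}\times{\cal H}\times[0,1]^K\times\mathbb{R}_+^{2|{\cal K}|}$, and then existence in (B) by a compactness-plus-continuity argument routed through Proposition~\ref{pr:existence VI for traffic}, with the positivity of the fleet sizes $N^k$ doing the essential work.

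For part (A), I would exploit the fact that a variational inequality over a Cartesian product separates into independent variational inequalities on each factor. On the polyhedron ${\cal Z}$ the restricted problem is an optimality condition of a linear program (the $(z,D)$-blocks of $\boldsymbol{F}$ depend only on $t,\nu,w$, hence are constant in the $(z,D)$ directions at a solution); attaching multipliers $\phi_s^{k,x}$, $\lambda_{ij}^{k,x}$, and $\sigma_{ij}$ to the flow-conservation, shared fleet-demand, and total-demand constraints recovers exactly the $z$-, $D^{k,x}$-, and $D^{\rm SV}$-rows of (NCP)$_{\rm main}$. Here the variational (equal-multiplier) specialization is what makes the reduction faithful, since it is the \emph{same} $\lambda_{ij}^{k,x}$ that must appear in the $z$-block and the $D^{k,x}$-block. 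On the box ${\cal H}$ the variational inequality is componentwise; because $\bar h$ and $\bar t$ were chosen as \emph{strict} upper bounds and every solution satisfies $\sum_p h_p=D\le D_{ij}<\bar h$ and $t<\bar t$, the upper faces are never active, so one obtains precisely the complementarity rows of \eqref{eq:modified traffic model}, which Lemmas~\ref{lm:remaining equalities} and~\ref{lm:TNC inequalities} translate back into the original equalities of \eqref{eq:traffic model} and \eqref{eq:TNC module}. On $[0,1]^K$ the variational inequality for each $\theta_{s,ij}^{k,x}$ is the optimality condition of \eqref{eq:customer_waiting}; introducing the multiplier $\zeta_{s,ij}^{k,x}$ for the bound $\theta\le 1$ reproduces the $\theta$- and $\zeta$-rows. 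Finally, on $\mathbb{R}_+^{2|{\cal K}|}$ the variational inequality is directly the orthant complementarity $0\le\boldsymbol{\nu}\perp\boldsymbol{F}_{\nu}\ge 0$. Collecting the blocks yields the equivalence in both directions, the only nonroutine bookkeeping being the consistent appearance of the shared multiplier $\lambda$.

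For part (B), the hypotheses make $\boldsymbol{F}$ continuous: the $h$-block inherits continuity from $C_p$, and the only delicate block, the $D^{k,x}$-block, is continuous because each $w_{ij}^{k,x}$ is assumed continuous in $\{z_{ij}^{k,x},D_{ij}^{k,x},t_{s,i}^{k,x},\theta_{s,ij}^{k,x}\}$. Of the four factors of $\boldsymbol{V}$, the set ${\cal Z}$ is compact (bounded by the total-demand equality), ${\cal H}$ is the compact box with bounds $\bar h,\bar t$, and $[0,1]^K$ is compact; only the multiplier orthant $\mathbb{R}_+^{2|{\cal K}|}$ is unbounded. The plan is to truncate this orthant to $[0,r]^{2|{\cal K}|}$, solve the resulting compact variational inequality by the standard Brouwer/Hartman--Stampacchia argument that underlies Proposition~\ref{pr:existence VI for traffic}, and then produce an a~priori bound on $\boldsymbol{\nu}$ independent of $r$, so that for large $r$ the truncated solution lies in the interior of the $\nu$-box and hence solves the full VI; its image under part (A) is then a variational, and therefore normalized, equilibrium.

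The hard part will be the a~priori bound on the capacity multipliers, and this is exactly where $N^k>0$ enters. A company serving no demand has all dispatches zero, so its fleet-capacity slack equals $N^k>0$, and complementarity forces $\nu^k=\nu^k_{\rm AV}=0$. For a company serving positive demand, the coupling $\sum_{s}z_{s,ij}^{k,x}=D_{ij}^{k,x}$ enforced by Lemma~\ref{lm:TNC inequalities} means that the joint linear program on ${\cal Z}$ sees the \emph{effective} per-unit cost of TNC service inflated by the dispatch term $(t_{s,i}^{k,x}+t_{ij}^{k,x})(\nu^k+\nu^k_{\rm AV})$; on any route with travel time bounded below this effective cost grows without bound in $\nu^k$, so once it exceeds the (bounded) solo-driving disutility $V_{ij}^{\rm SV}$ the LP sets $D_{ij}^{k,x}=0$, collapsing the dispatches and again making the slack $N^k>0$ and thus $\nu^k=0$; and if instead the only dispatched routes have zero travel time, then $\sum(t_{s,i}^{k,x}+t_{ij}^{k,x})z_{s,ij}^{k,x}=0<N^k$ already forces $\nu^k=0$. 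Either way, $\boldsymbol{\nu}$ cannot escape to infinity, and the threshold depends only on the model data ($\bar t$, $\bar h$, the revenue/cost coefficients, the free-flow times, and $N^k$), hence is uniform in $r$. Making this dichotomy quantitative—so as to verify the coercivity-type hypothesis of Proposition~\ref{pr:existence VI for traffic}—is the crux; once it is in hand, continuity over the resulting compact convex truncation delivers a VI solution and, via part (A), the desired equilibrium.
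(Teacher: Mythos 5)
Your part (A) is essentially the paper's own argument: write the KKT conditions of the VI over the polyhedral set $\boldsymbol{V}$ and show they collapse to (NCP)$_{\rm main}$, the only issue being the artificial bounds $\bar h,\bar t$. One caveat: you justify inactivity of the upper faces by saying ``every solution satisfies $\sum_p h_p = D< \bar h$ and $t<\bar t$,'' but those traffic-model equalities are part of what the equivalence is supposed to deliver, so as written this is circular. The paper instead proves the bound multipliers vanish by contradiction from the VI's own complementarity rows: if $u_{ij}^{k,x}>0$ then $t_{ij}^{k,x}=\bar t$, the demand bound $D_{ij}^{k,x}<\bar h$ forces all $h$-bound multipliers $v_p^{k,x}$ to vanish, and then $t_{ij}^{k,x}\le \mu^{k,x}C_p(\boldsymbol h)<\bar t$, a contradiction. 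This repair is routine, so (A) is fine in substance.

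The genuine gap is in part (B), in the a priori bound on the capacity multipliers, and it sits exactly where you yourself locate ``the crux'' without closing it. Your dichotomy---dispatched routes with travel time \emph{bounded below}, versus only dispatched routes with \emph{zero} travel time---is not exhaustive: the travel times $t_{s,i}^{k,x},t_{ij}^{k,x}$ are endogenous variables, and along a sequence of truncated solutions they can be positive yet vanishingly small, so that $(t_{s,i}^{k,x}+t_{ij}^{k,x})(\nu^k+\nu^k_{\rm AV})$ remains bounded even as $\nu^k\to\infty$; in that regime the effective cost never crosses the solo-driving threshold and the capacity slack is not exactly $N^k$, so neither branch fires. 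The missing observation that rescues your route is that $\nu^k>0$ (or $\nu^k$ at the truncation boundary) forces, by complementarity, $\sum_{x}\sum_{(i,j)}\sum_s (t_{s,i}^{k,x}+t_{ij}^{k,x})z_{s,ij}^{k,x}\ge N^k$; since every $z_{s,ij}^{k,x}$ is bounded by the finite total demand, some dispatched route then has $t_{s,i}^{k,x}+t_{ij}^{k,x}$ bounded below by a constant depending only on the data, and your LP/exchange reasoning then goes through (the stationarity of that positive $z_{s,ij}^{k,x}$ forces $\lambda_{ij}^{k,x}\to\infty$, which forces $D_{ij}^{k,x}=0$ because $\sigma_{ij}\le\alpha_1^{\rm SV}\bar t+\alpha_2^{\rm SV}d_{ij}^{\,0}$ is bounded, which collapses $\sum_s z_{s,ij}^{k,x}=D_{ij}^{k,x}=0$, a contradiction). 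The paper avoids the dichotomy altogether by a different mechanism: it regularizes the $\nu$-block by $\tau\nu$ and invokes Proposition~\ref{pr:existence VI for traffic} (whose proof is a degree-theoretic homotopy argument, not the truncation you attribute to it), then \emph{sums} all complementarity rows so that the $\phi$-, $\lambda$-, and $\sigma$-cross terms cancel; boundedness of $\sigma_{ij}^n$ (from the $D_{ij}^{\rm SV}$ row and $t^{\rm SV}\le\bar t$) then makes the nonnegative products $(t_{s,i}^{n}+t_{ij}^{n})\nu^{n;k}z_{s,ij}^{n}$ bounded, so $\nu^{n;k}\to\infty$ would force $\sum(t_{s,i}^{n}+t_{ij}^{n})z_{s,ij}^{n}\to 0$, contradicting the perturbed capacity complementarity $\tau_n\nu^{n;k}+N^k-\sum(t_{s,i}^{n}+t_{ij}^{n})z_{s,ij}^{n}=0$ with $N^k>0$. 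Until you supply the quantitative lower bound (or an aggregation argument of this kind), your existence proof does not stand.
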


\begin{proposition} 
\label{pr:existence VI for traffic} \rm
Let $F(\boldsymbol{x},\boldsymbol{y})
\triangleq \left( \begin{array}{cc}
\Phi(\boldsymbol{x},\boldsymbol{y}) 
\\ [3pt]
\Psi(\boldsymbol{x},\boldsymbol{y})
\end{array} \right)$ be a continuous function
from $\mathbb{R}^{n + m}$ into itself.  Let
$\boldsymbol{X}$ and $\boldsymbol{Y}$ be 
closed convex sets in 
$\mathbb{R}^n$ and $\mathbb{R}^m$, 
respectively, with $\boldsymbol{X}$ being
additionally bounded.  If there exists a 
vector 
$\boldsymbol{y}^{\rm ref} \in \boldsymbol{Y}$
such that the solutions of the VI defined by 
the function
$F^{\, \tau}(\boldsymbol{x},\boldsymbol{y})
\triangleq \left( \begin{array}{cc}
\Phi(\boldsymbol{x},\boldsymbol{y}) 
\\ [3pt]
\Psi(\boldsymbol{x},\boldsymbol{y}) + 
\tau ( \boldsymbol{y} - 
\boldsymbol{y}^{\rm ref} )
\end{array} \right)$ on the set 
$\boldsymbol{X} \times \boldsymbol{Y}$
over all scalars $\tau > 0$ are bounded,
then the VI $(F,\boldsymbol{X} \times 
\boldsymbol{Y})$ has a solution.
\end{proposition}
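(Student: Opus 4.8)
The plan is to prove existence for the target VI$(F,\boldsymbol{X}\times\boldsymbol{Y})$ by a single truncation of the unbounded factor $\boldsymbol{Y}$, followed by an argument that the truncating ball is \emph{inactive} once its radius exceeds the a priori bound supplied by the hypothesis. Concretely, let $B$ be a finite bound such that every solution of VI$(F^{\tau},\boldsymbol{X}\times\boldsymbol{Y})$, over all $\tau>0$, satisfies $\|\boldsymbol{y}-\boldsymbol{y}^{\rm ref}\|\le B$; such a $B$ exists because the displayed solution set is bounded and $\boldsymbol{y}^{\rm ref}$ is fixed. Fix any radius $R>B$ and set $\boldsymbol{Y}_R\triangleq\boldsymbol{Y}\cap\{\boldsymbol{y}:\|\boldsymbol{y}-\boldsymbol{y}^{\rm ref}\|\le R\}$, which is compact and convex, so that $\boldsymbol{X}\times\boldsymbol{Y}_R$ is compact and convex as well.

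First I would invoke the classical Hartman--Stampacchia existence theorem for continuous maps on nonempty compact convex sets (see, e.g., \cite[Corollary~2.2.5]{facchinei2003finite}) to obtain a solution $(\boldsymbol{x}^{R},\boldsymbol{y}^{R})$ of VI$(F,\boldsymbol{X}\times\boldsymbol{Y}_R)$. Because $\boldsymbol{X}$ is left untruncated, the $\boldsymbol{x}$-block of the optimality condition reads $-\Phi(\boldsymbol{x}^{R},\boldsymbol{y}^{R})\in N_{\boldsymbol{X}}(\boldsymbol{x}^{R})$ unchanged, so the only coupling to be resolved is in the $\boldsymbol{y}$-block.

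The central step is to convert this truncated solution into a solution of the \emph{regularized} VI. Writing the ball constraint as $g(\boldsymbol{y})\triangleq\|\boldsymbol{y}-\boldsymbol{y}^{\rm ref}\|^2-R^2\le 0$, I would apply the normal-cone sum rule $N_{\boldsymbol{Y}_R}(\boldsymbol{y}^{R})=N_{\boldsymbol{Y}}(\boldsymbol{y}^{R})+\mathbb{R}_+\,\nabla g(\boldsymbol{y}^{R})$ to the $\boldsymbol{y}$-block condition $-\Psi(\boldsymbol{x}^{R},\boldsymbol{y}^{R})\in N_{\boldsymbol{Y}_R}(\boldsymbol{y}^{R})$; the sum rule is legitimate here since $\boldsymbol{y}^{\rm ref}\in\boldsymbol{Y}$ with $g(\boldsymbol{y}^{\rm ref})<0$ is a Slater point of the ball. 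This produces a multiplier $\lambda\ge 0$ with $-[\Psi(\boldsymbol{x}^{R},\boldsymbol{y}^{R})+\lambda(\boldsymbol{y}^{R}-\boldsymbol{y}^{\rm ref})]\in N_{\boldsymbol{Y}}(\boldsymbol{y}^{R})$ together with complementary slackness $\lambda\,g(\boldsymbol{y}^{R})=0$. Matching this with the $\boldsymbol{x}$-block shows precisely that $(\boldsymbol{x}^{R},\boldsymbol{y}^{R})$ solves VI$(F^{\lambda},\boldsymbol{X}\times\boldsymbol{Y})$ with $\tau=\lambda$ --- exactly the form of the regularization in the hypothesis, which explains why $F^{\tau}$ perturbs only the $\boldsymbol{y}$-block and is centered at $\boldsymbol{y}^{\rm ref}$.

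Finally I would close with a dichotomy on $\lambda$. If $\lambda=0$, then $F^{\lambda}=F$ and $(\boldsymbol{x}^{R},\boldsymbol{y}^{R})$ already solves the target VI$(F,\boldsymbol{X}\times\boldsymbol{Y})$. If $\lambda>0$, then complementary slackness forces $g(\boldsymbol{y}^{R})=0$, i.e. $\|\boldsymbol{y}^{R}-\boldsymbol{y}^{\rm ref}\|=R>B$; but $(\boldsymbol{x}^{R},\boldsymbol{y}^{R})$ is a solution of VI$(F^{\lambda},\cdot)$ with $\lambda>0$ and therefore obeys the hypothesis bound $\|\boldsymbol{y}^{R}-\boldsymbol{y}^{\rm ref}\|\le B$, a contradiction. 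Hence $\lambda=0$ necessarily, and VI$(F,\boldsymbol{X}\times\boldsymbol{Y})$ has a solution. I expect the main obstacle to be the middle step: justifying the normal-cone decomposition under the constraint qualification and verifying that the extracted multiplier reproduces \emph{exactly} the regularized map $F^{\tau}$, so that the boundedness hypothesis can be applied verbatim; the truncated-existence step and the concluding dichotomy are then routine.
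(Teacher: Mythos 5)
Your proposal is correct, and it reaches the conclusion by a genuinely different route than the paper. The paper's proof is degree-theoretic: it forms the homotopy
\[
H(\boldsymbol{x},\boldsymbol{y},t) \,\triangleq\,
\left( \begin{array}{l}
\boldsymbol{x} - \Pi_{\boldsymbol{X}}\bigl( \boldsymbol{x} - \Phi(\boldsymbol{x},\boldsymbol{y}) \bigr) \\ [3pt]
\boldsymbol{y} - \Pi_{\boldsymbol{Y}}\bigl( t\,(\boldsymbol{y} - \Psi(\boldsymbol{x},\boldsymbol{y})) + (1-t)\,\boldsymbol{y}^{\rm ref} \bigr)
\end{array} \right), \qquad t \in [\,0,1\,],
\]
notes that the zeros of $H(\bullet,\bullet,t)$ for $t \in (0,1)$ are precisely the solutions of the VI $(F^{\,\tau},\boldsymbol{X}\times\boldsymbol{Y})$ with $\tau = (1-t)/t$ (and at $t=0$ the zeros have $\boldsymbol{y}=\boldsymbol{y}^{\rm ref}$), so the hypothesis is exactly the boundedness of $\bigcup_{t\in[0,1)} H(\bullet,\bullet,t)^{-1}(0)$ needed for homotopy invariance of the degree, which then delivers a zero at $t=1$, i.e., a solution of the target VI. You instead truncate only the unbounded factor $\boldsymbol{Y}$ by a ball centered at $\boldsymbol{y}^{\rm ref}$ with radius $R$ exceeding the a priori bound, invoke Hartman--Stampacchia on the compact convex product, and use the normal-cone sum rule (justified by the Slater point $\boldsymbol{y}^{\rm ref}$) to exhibit the truncated solution as a solution of $(F^{\,\lambda},\boldsymbol{X}\times\boldsymbol{Y})$; the dichotomy on the ball multiplier $\lambda$ then closes the argument. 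The two proofs exploit the same two structural facts --- the regularization perturbs only the $\boldsymbol{y}$-block and is centered at $\boldsymbol{y}^{\rm ref}$ --- but yours trades degree theory for Brouwer-level existence plus elementary convex analysis, and it makes transparent \emph{why} the hypothesis has that specific form: $\tau(\boldsymbol{y}-\boldsymbol{y}^{\rm ref})$ is literally the KKT term of the truncating ball. The paper's route is shorter given the degree-theoretic toolkit it already cites, and it sidesteps the constraint-qualification discussion your middle step requires. Two cosmetic points in your write-up: the multiplier of $g(\boldsymbol{y})=\|\boldsymbol{y}-\boldsymbol{y}^{\rm ref}\|^2-R^2$ carries a factor $2$ from $\nabla g$, which you silently absorb (harmless, since $\mathbb{R}_+\nabla g(\boldsymbol{y}^R)=\mathbb{R}_+(\boldsymbol{y}^R-\boldsymbol{y}^{\rm ref})$ on the sphere); and, like the paper, you implicitly use nonemptiness of $\boldsymbol{X}$, without which the statement is vacuous.
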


\section{Benchmark Numerical Results}\label{sec:benchmark_numerical_results}

Our model is implemented in GAMS using the PATH solver, which is designed to solve MiCPs~\cite{dirkse1995path}. The PATH algorithm employs a Newton-based approach to efficiently locate solutions that satisfy complementarity and feasibility conditions, making it particularly suitable for large-scale equilibrium problems in transportation systems~\cite{GAMS2025}. To evaluate the proposed model, two networks are used: a small network and the Sioux-Falls network~\cite{leblanc1975efficient}. The small network is used to validate the model by comparing its results with those reported in prior work \cite{ban2019general}. The Sioux-Falls network, representing a realistic urban-scale case, is used to conduct a comprehensive analysis of AV impacts from the perspectives of companies, the overall system, and travelers.

\begin{figure}[H]
\centering
\includegraphics[width=0.3\textwidth]{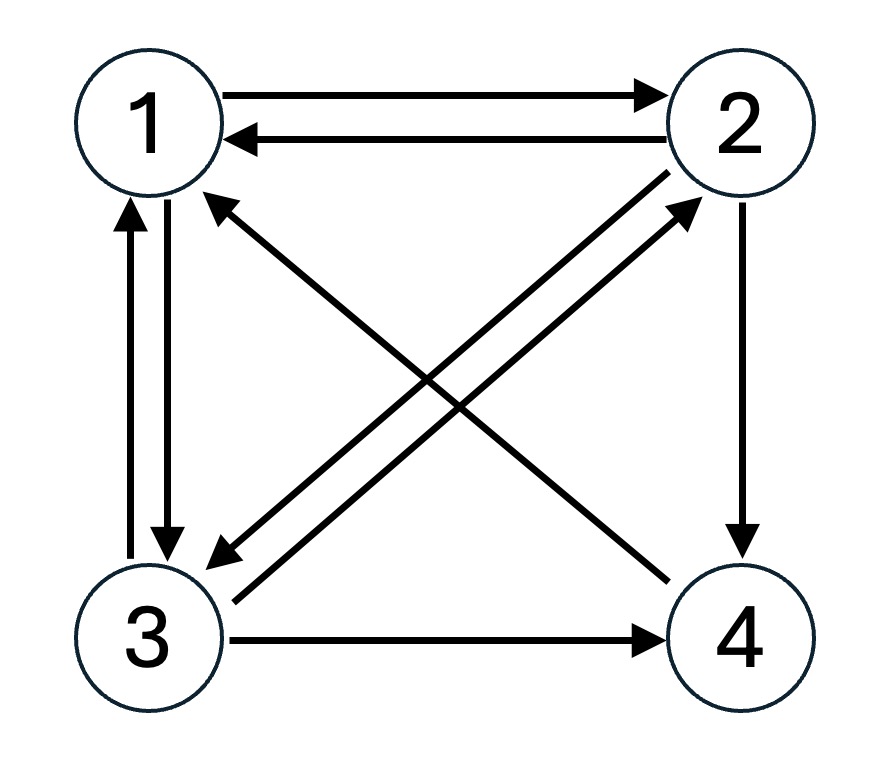}
\caption{Illustration of the Small Network}
\label{fig:geometry}
\end{figure}

The small network is a “4-node–9-link” network, as shown in Fig.~\ref{fig:geometry}. Node 1 serves as the origin, while nodes 2, 3, and 4 are destinations. The total travel demands are 50, 40, and 50, respectively. The free flow travel time, link length and link capacity of the small network are provided in Table~\ref{table:small_geometry}. Parameters related to fees and costs for each company are provided in Table~\ref{table:small_parameter_company}. Both the network geometry parameters and company parameters are the same as in prior work \cite{ban2019general}.

\begin{table}[H]
\small
\centering
\caption{Small Network Geometry Parameters}
\begin{tabular}{|c|c|c|c|c|c|}
\hline
\textbf{Links} & \textbf{From node} & \textbf{To node} & \textbf{Free flow} & \textbf{Length} & \textbf{Capacity} \\
& & & \textbf{travel time (h)} & \textbf{(mile)} & \textbf{(veh/h)}\\
\hline
1 & 1 & 2 & 0.3 & 10 & 40 \\
2 & 1 & 3 & 0.5 & 20 & 40 \\
3 & 2 & 3 & 0.4 & 20 & 60 \\
4 & 2 & 4 & 0.4 & 10 & 40 \\
5 & 3 & 4 & 0.3 & 20 & 40 \\
6 & 4 & 1 & 1.0 & 40 & 60 \\
7 & 2 & 1 & 0.4 & 15 & 50 \\
8 & 3 & 1 & 0.4 & 20 & 60 \\
9 & 3 & 2 & 0.5 & 20 & 40 \\
\hline
\end{tabular}
\label{table:small_geometry}
\end{table}

\begin{table}[H]
\small
\centering
\caption{Small Network Company Parameters}
\begin{tabular}{|c|c|c|c|}
\hline
\textbf{Parameters} & \textbf{Notation(x=HV)} & \textbf{Company 1} & \textbf{Company 2} \\
\hline
Fixed fare (\$) & $F^{k,x}$ & 3 & 2  \\
Time-based fare rate (\$/hr) & $\alpha_1^{k,x} $ & 20 & 15 \\
Distance-based fare rate (\$/mile) & $\alpha_2^{k,x} $ & 2 & 1.5 \\
Time-based conversion factor (\$/hr) & $\beta_1^{k,x}$ & 2 & 2 \\
Distance-based conversion factor (\$/mile) & $\beta_2^{k,x} $ & 0.55 & 0.9 \\
Waiting time conversion factor (\$/hr) & $\beta_3^{k,x} $ & 0.2 & 0.1 \\
Value of travel time of customer (\$/hr) & $\gamma_1^{k,x} $ & 7 & 18\\
Value of waiting time of customer (\$/hr) & $\gamma_2^{k,x} $ & 3 & 2\\
Conscience constraint & $\mu^{k,x}$ & 1.0 & 1.0\\
The number of vehicles & $N^{k}$ & 400 & 400\\
\hline
\end{tabular}
\label{table:small_parameter_company}
\end{table}

To validate our model and check its consistency
with the prior study \cite{ban2019general}, 
we first analyze the HV-only case on the small network. 
Our model is the same as the one in this reference except that it incorporates a different waiting time formulation. 
As such, we focus on comparing customer demand, VMT, VHT, and DHM under the same settings as the prior study, and we also compare the difference in waiting time. Since the prior study does not include AVs, the mixed-fleet part is not discussed here. The comparison results are shown in Table~\ref{table:compare}. For customer demand, VMT, VHT, and DHM, our model closely matches the results reported in \cite{ban2019general}. Regarding average waiting time, our capped formulation produces a shorter time compared with the multiplier-based formulation employed in the prior study. 

\begin{table}[H]
\small
\centering
\caption{Comparison of Our Model with the Prior Model ($x=HV, \mu^{k,x}=1.0$)}
\label{table:ten}
\begin{tabular}{|c|c|c|c|c|c|c|c|}
\hline
Model & Solo(\%) & Company 1(\%) & Company 2(\%) & VMT & VHT & DHM & Avg $w$ (min)\\
\hline

Ban et al. & 25.5 & 69.9 & 4.6 & 4943.307 & 1.456 & 3005.978 & \textbf{1.390} \\
Our Model & 29.8 & 64.3 & 6.7 & 4834.412 & 1.457 & 2940.647 & \textbf{0.714} \\

\hline
\end{tabular}
\label{table:compare}
\end{table}

The first three columns (Solo, Company 1, Company 2) in Table~\ref{table:compare} indicate the customer demand proportions for solo driving, selecting Company 1, and selecting Company 2. Customer demand is highly consistent between the two models, and the differences in total VMT and VHT are both within 10\%, indicating strong alignment in equilibrium outcomes. Importantly, our model achieves a substantially lower average waiting time of 0.714 minutes per order, representing nearly a 50\% reduction compared with prior work (1.39 minutes) \cite{ban2019general}. This improvement demonstrates that the capped customer waiting formulation introduced in our model effectively enhances system efficiency. These findings confirm the validity of our model and provide a reliable foundation for further analysis in mixed-autonomy scenarios.

\section{Case Study}\label{sec:case_study}

For large-scale analysis, we use the Sioux-Falls network shown in Figure~\ref{fig:sioux_falls_geometry}. The network contains 24 nodes and 76 directed links \cite{leblanc1975efficient}, and the dataset is openly available \cite{bstablerGitHub}. We select five nodes (1, 2, 4, 7, 9) as origins and five nodes (13, 19, 20, 23, 24) as destinations, resulting in a total of 25 OD pairs.

\begin{figure}[H]
\centering
\includegraphics[width=0.4\textwidth]{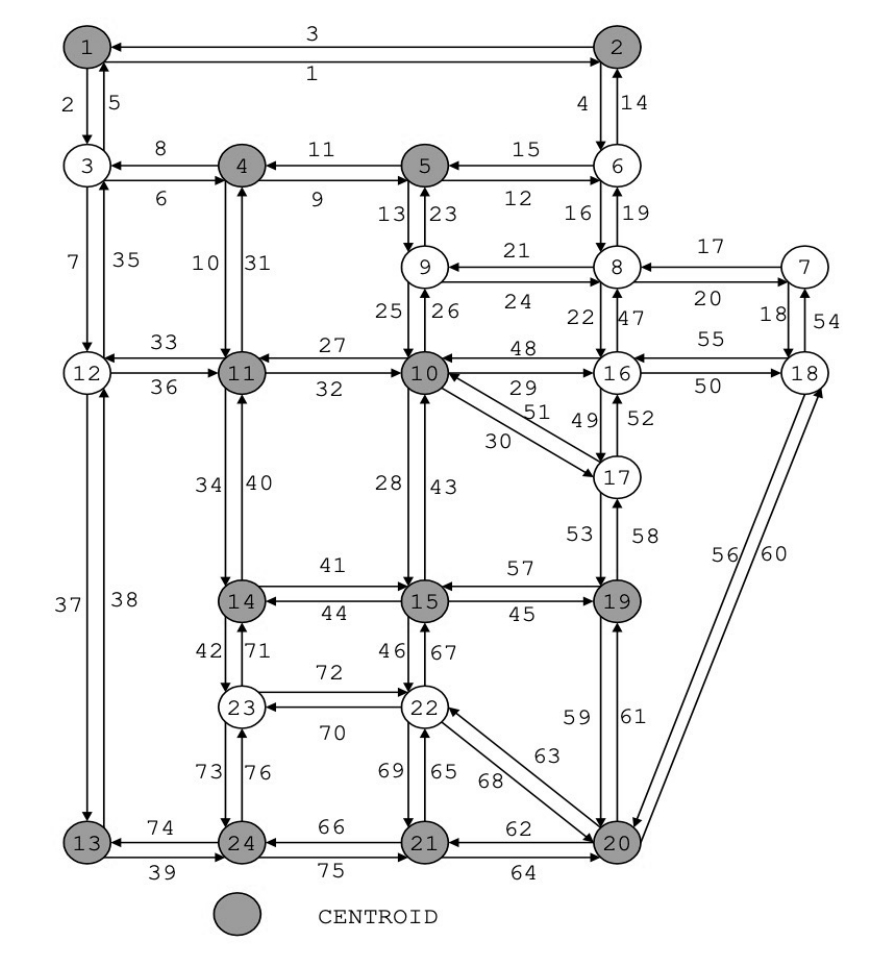}
\caption{Illustration of the Sioux-Falls Network \cite{bstablerGitHub}}
\label{fig:sioux_falls_geometry}
\end{figure}

To capture the heterogeneity in fare and cost strategies across ride-hailing companies, we define four representative company types in Figure~\ref{fig:company_strategy}, following Porter’s Generic Strategies framework~\cite{miller1986porter}. 

\gap 

\noindent $\bullet $ \textbf{Company 1 (Technology driven)}: adopts a differentiation strategy, emphasizing advanced IT capabilities and autonomous driving technologies. It offers medium–low fare levels to attract early adopters while maintaining low operational costs through automation and optimized fleet management. Company 1 is denoted as k1. 

\gap 

\noindent $\bullet $ \textbf{Company 2 (Aggressive entrant)}: follows a cost-leadership strategy with an aggressive market-entry approach. It adopts low fares to rapidly gain market share, despite incurring high operating costs due to substantial capital expenditures and rapid fleet deployment. Company 2 is denoted as k2. 

\gap 

\noindent $\bullet $ \textbf{Company 3 (Market leader)}: represents a scale-based dominance strategy. Leveraging its large user base and operational experience, it sets high prices while benefiting from medium–low costs, owing to economies of scale and optimized routing. Company 3 is denoted as k3. 

\gap 

\noindent $\bullet $ \textbf{Company 4 (Competitive co-player)}: acts as a focus-differentiation competitor, a smaller but strategically adaptive company operating alongside the market leader. It sets medium–high fare levels and experiences medium–high operating costs, reflecting its intermediate market position. Company 4 is denoted as k4. 

\begin{figure}[H]
\centering
\includegraphics[width=0.5\textwidth]{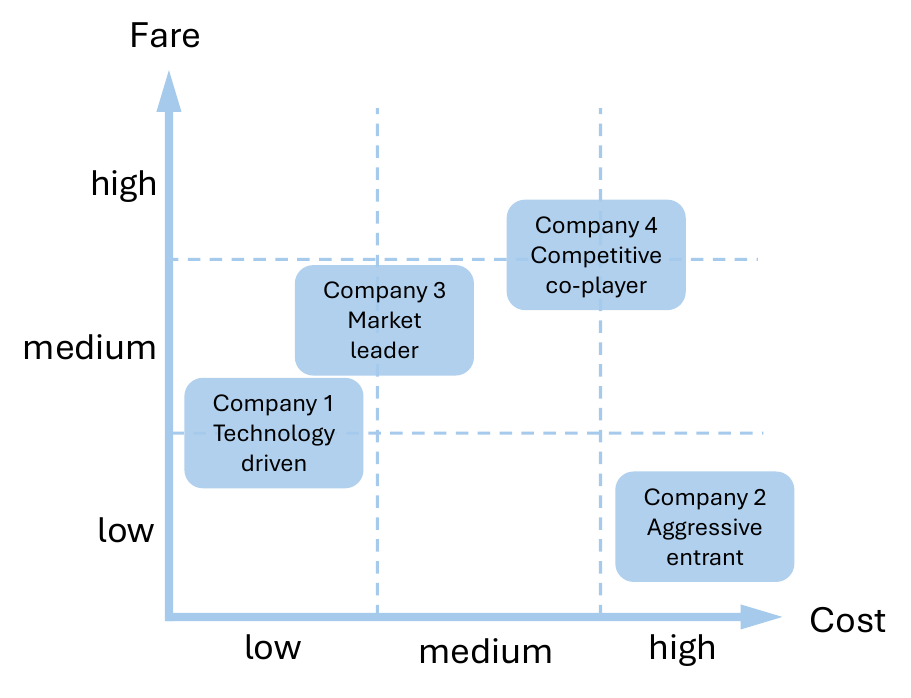}
\caption{Company Strategy}
\label{fig:company_strategy}
\end{figure}

\subsection{Company-focused Analysis}
In this subsection, we investigate how variations in AV penetration rate $\mu^{\rm cap}$ and AV relaxation parameter $\mu^{k,{\rm AV}}$ shape company-focused outcomes. Our analysis primarily examines their effects on company profits and the demand for company vehicles, while also exploring the competitive mechanisms that emerge among companies. In the following analysis, the HV relaxation parameter $\mu^{k,{\rm HV}}$ is set to 1.1, as HVs are not directly controllable.

\subsubsection{Impacts of AV Relaxation Parameter}

For cross-company comparisons of profitability, we analyze AV profitability under two distinct scenarios. \textbf{Homogeneous Relaxation:} all four companies adopt the same \(\mu^{k,{\rm AV}}\), which varies along the x-axis but remains identical across companies in Figure \ref{fig:av_profit_homo}. \textbf{Heterogeneous Relaxation:} company 1 adopts a more aggressive strategy with \(\mu^{1,AV} > 1\), which varies along the x-axis, while the remaining companies maintain \(\mu^{k,AV} = 1\) in Figure \ref{fig:av_profit_hetero}.

\begin{figure}[H]
\centering
\begin{subfigure}{0.32\textwidth}
    \includegraphics[width=\textwidth]{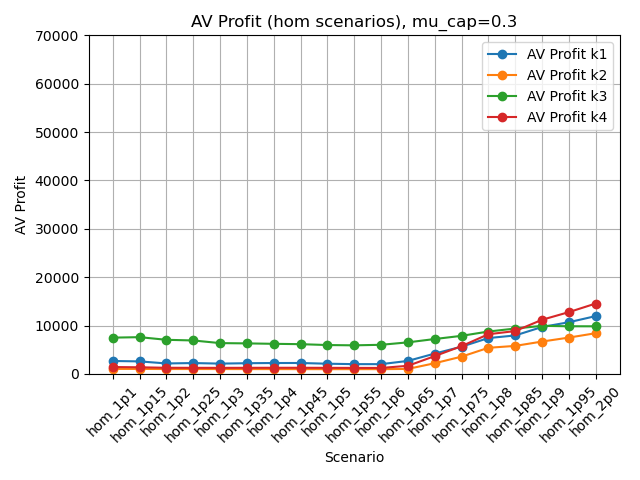}
    \caption{AV Profit ($\mu^{\rm cap}=0.3$)}
\end{subfigure}
\begin{subfigure}{0.32\textwidth}
    \includegraphics[width=\textwidth]{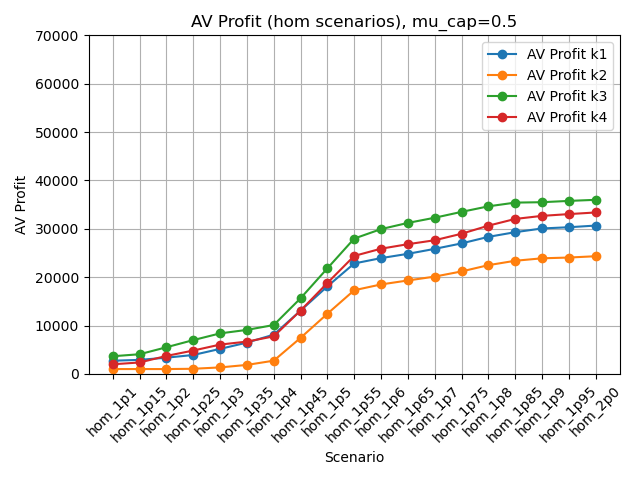}
    \caption{AV Profit ($\mu^{\rm cap}=0.5$)}
\end{subfigure}
\begin{subfigure}{0.32\textwidth}
    \includegraphics[width=\textwidth]{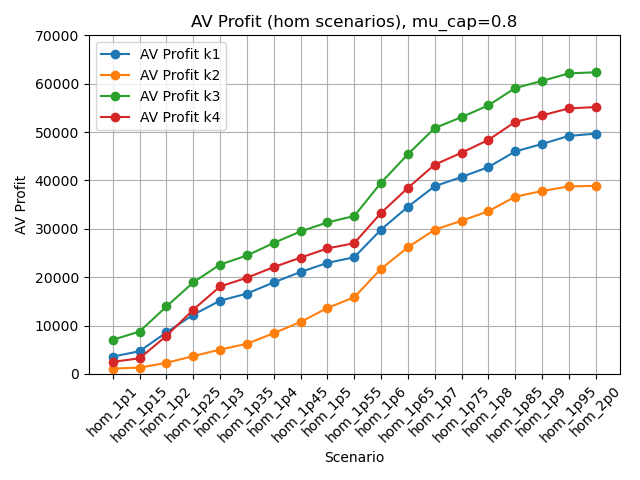}
    \caption{AV Profit ($\mu^{\rm cap}=0.8$)}
\end{subfigure}
\caption{AV Profit under Different $\mu^{\rm cap}$ (Homogeneous Scenarios)} 
\label{fig:av_profit_homo}
\end{figure}

Figure~\ref{fig:av_profit_homo} illustrates the relationship between the relaxation parameter \(\mu^{k,{\rm AV}}\) and companies’ AV profitability under homogeneous scenarios. Across companies, profits exhibit broadly similar increasing trends as \(\mu^{k,{\rm AV}}\) varies. This pattern can be explained by the fact that \(\mu^{k,{\rm AV}}\) reflects the degree of control companies exert over AVs: higher values allow companies to reassign AVs more flexibly, thereby generating additional revenue. 
Notably, the market leader (k3) consistently maintains its dominant position in most cases as \(\mu^{k,{\rm AV}}\) increases, reflecting the persistence of its competitive advantage in this setting. Furthermore, higher levels of AV penetration are associated with a faster growth rate of profits, suggesting that greater adoption of AVs amplifies the benefits realized by companies. This occurs because a larger share of AVs in the network allows companies to exert greater influence over system performance, thereby enhancing their capacity to increase profits.

\begin{figure}[H]
\centering
\begin{subfigure}{0.32\textwidth}
    \includegraphics[width=\textwidth]{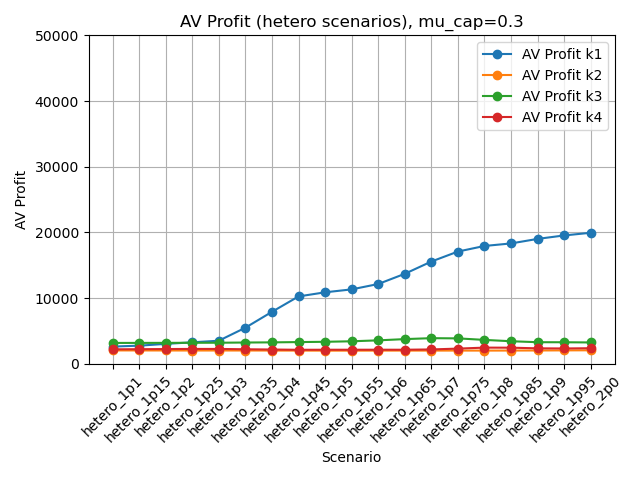}
    \caption{AV Profit ($\mu^{\rm cap}=0.3$)}
\end{subfigure}
\begin{subfigure}{0.32\textwidth}
    \includegraphics[width=\textwidth]{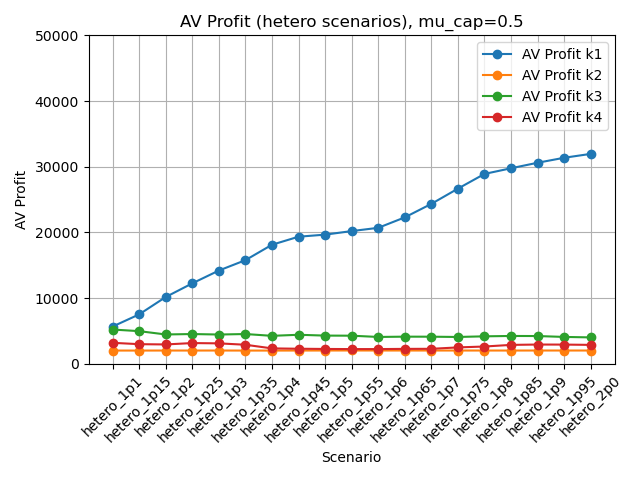}
    \caption{AV Profit ($\mu^{\rm cap}=0.5$)}
\end{subfigure}
\begin{subfigure}{0.32\textwidth}
    \includegraphics[width=\textwidth]{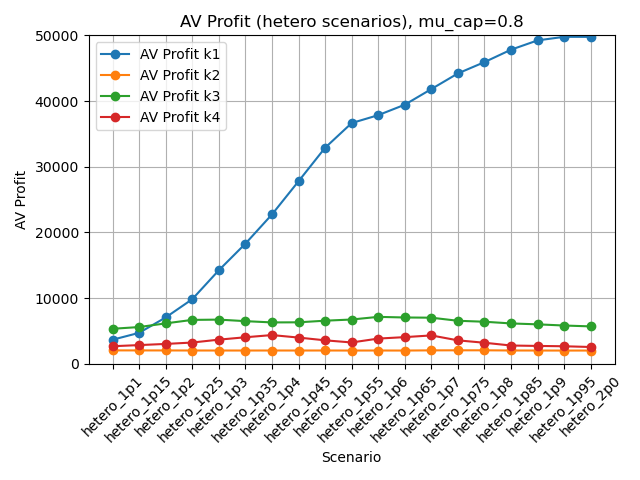}
    \caption{AV Profit ($\mu^{\rm cap}=0.8$)}
\end{subfigure}
\caption{AV Profit under Different $\mu^{\rm cap}$ (Heterogeneous Scenarios)} 
\label{fig:av_profit_hetero}
\end{figure}

Figure~\ref{fig:av_profit_hetero} presents the profitability outcomes under heterogeneous scenarios, where the \(x\)-axis represents \(\mu^{1,AV}\) and \(\mu^{k,{\rm AV}}=1\) for \(k = 2,3,4\). Results indicate that the technology-driven company (k1), which may exert more control on AVs, consistently secures substantially greater profits, producing a pronounced gap relative to its moderate competitors. For each individual company, adopting such a control-focused strategy compared with the rivals can therefore yield significant advantages. However, this also highlights a potential risk
that overly profit-pursuing strategies may undermine fair competition. Accordingly, this framework serves as a tool for system planners in regulating AV deviation behavior and AV adoption.

\begin{figure}[H]
\centering
\begin{subfigure}{0.35\textwidth}
    \includegraphics[width=\textwidth]{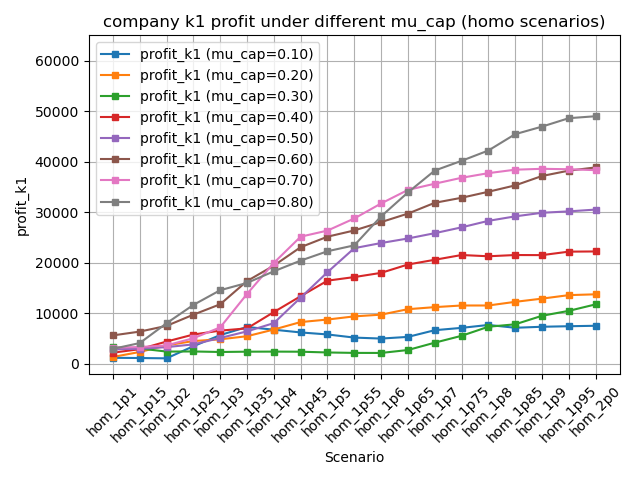}
    \caption{Company 1 Total Profit}
\end{subfigure}
\begin{subfigure}{0.35\textwidth}
    \includegraphics[width=\textwidth]{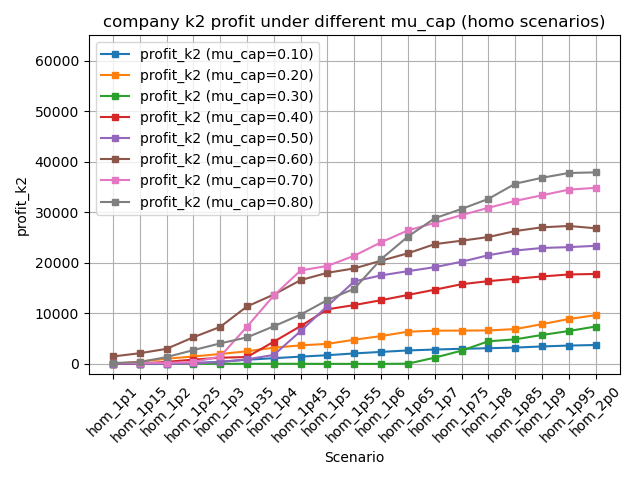}
    \caption{Company 2 Total Profit}
\end{subfigure}
\begin{subfigure}{0.35\textwidth}
    \includegraphics[width=\textwidth]{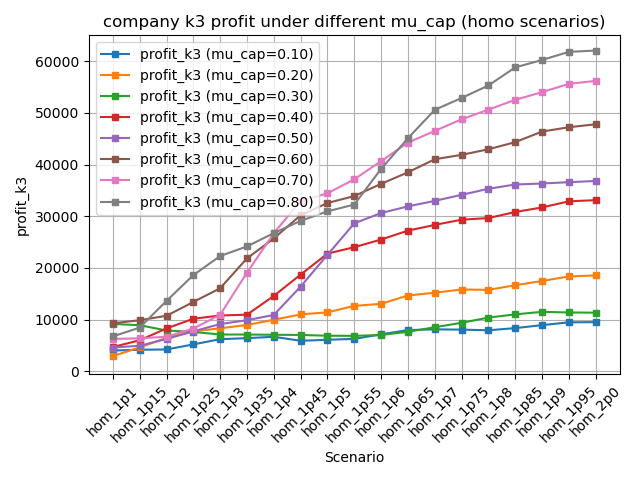}
    \caption{Company 3 Total Profit}
\end{subfigure}
\begin{subfigure}{0.35\textwidth}
    \includegraphics[width=\textwidth]{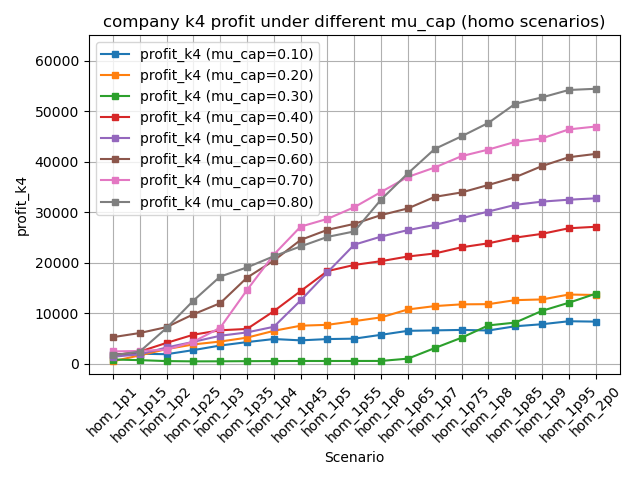}
    \caption{Company 4 Total Profit}
\end{subfigure}
\caption{Total Profit under Different $\mu^{\rm cap}$ (Homogeneous Scenario)} 
\label{fig:profit_diff_mucap}
\end{figure}

For the within-company analysis of profitability in Figure~\ref{fig:profit_diff_mucap}, the horizontal axis denotes \(\mu^{k,{\rm AV}}\). Each subplot corresponds to a single company, while the multiple curves within each plot represent different AV penetration rates \(\mu^{\rm cap}\). As \(\mu^{k,{\rm AV}}\) increases, companies with lower relaxation parameters (below 1.15) benefit most under moderate AV penetration (\(\mu^{\rm cap} = 0.6\)), suggesting that a balanced AV/HV market supports profitability at lower control levels. For moderate to high \(\mu^{k,{\rm AV}}\), profits approach a “saturation point", where further increments contribute little to profits. This indicates that a near-optimal level of AV routing control exists for market development, and excessively high AV shares provide limited additional benefits for companies compared with moderate levels in most cases. 

For the aggregate market share analysis in Figure \ref{figs:market_share}, subfigures (a) and (b) illustrate opposite trends in AV and HV as $\mu^{k,{\rm AV}}$ increases. This indicates that, when companies adopt higher AV coordination, travelers tend to shift from choosing AVs to HVs within ride-hailing services. Furthermore, as shown in subfigures (c) and (d), the companies’ total market share (AV and HV demand) declines as $\mu^{k,{\rm AV}}$ increases. This suggests that higher relaxation parameters generally weaken companies’ competitiveness, causing more travelers to choose solo driving.

\begin{figure}[H]
\centering
\begin{subfigure}{0.35\textwidth}
    \includegraphics[width=\textwidth]{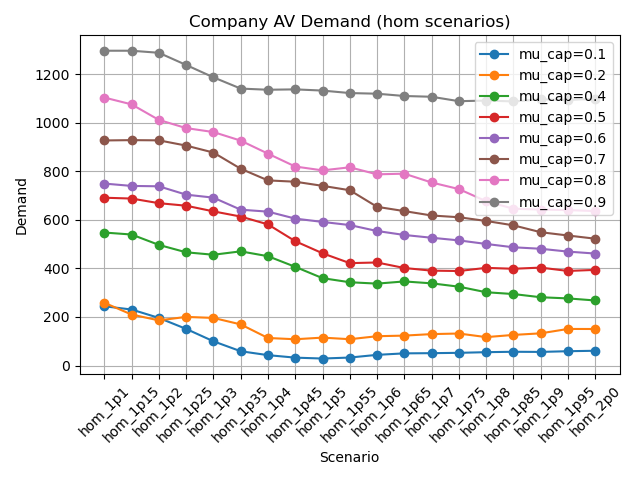}
    \caption{AV Demand for All Company}
\end{subfigure}
\begin{subfigure}{0.35\textwidth}
    \includegraphics[width=\textwidth]{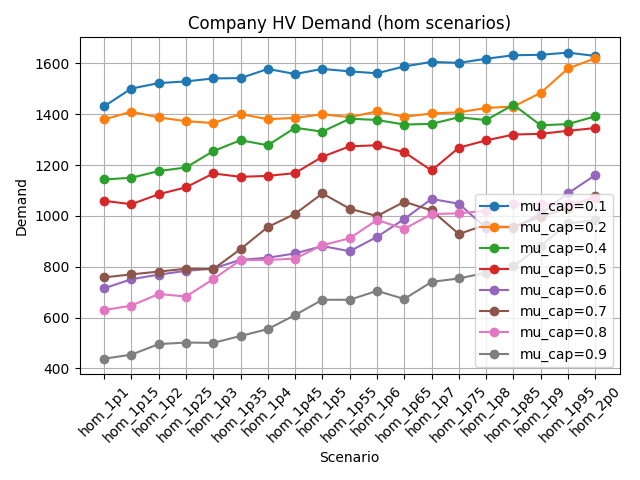}
    \caption{HV Demand for All Company}
\end{subfigure}
\begin{subfigure}{0.35\textwidth}
    \includegraphics[width=\textwidth]{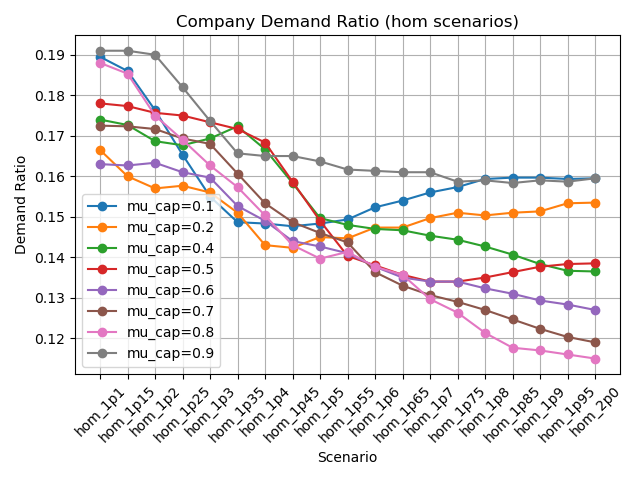}
    \caption{Company Demand (HV and AV) Ratio}
\end{subfigure}
\begin{subfigure}{0.35\textwidth}
    \includegraphics[width=\textwidth]{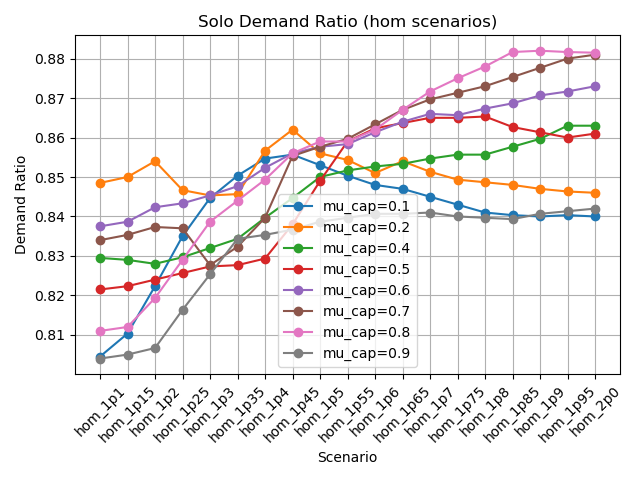}
    \caption{Solo Driver Demand Ratio}
\end{subfigure}
\caption{Market Share under Different $\mu^{k,{\rm AV}}$ (Homogeneous Scenario)}
\label{figs:market_share}
\end{figure}

\subsubsection{Impacts of AV Penetration Rate}

Figure~\ref{fig:profit_diff_muAV} illustrates the impact of the AV penetration rate $\mu^{\rm cap}$ on each company’s total profit. When $\mu^{\rm cap}$ is very low (below 0.2) or very high (above 0.8), the profit growth remains modest. In contrast, intermediate values of $\mu^{\rm cap}$ exhibit a notable increase in profit as the penetration rate rises. These findings suggest that a moderate level of AV penetration is most beneficial for companies, striking a balance between operational efficiency and market demand.
\begin{figure}[H]
\centering
\begin{subfigure}{0.34\textwidth}
    \includegraphics[width=\textwidth]{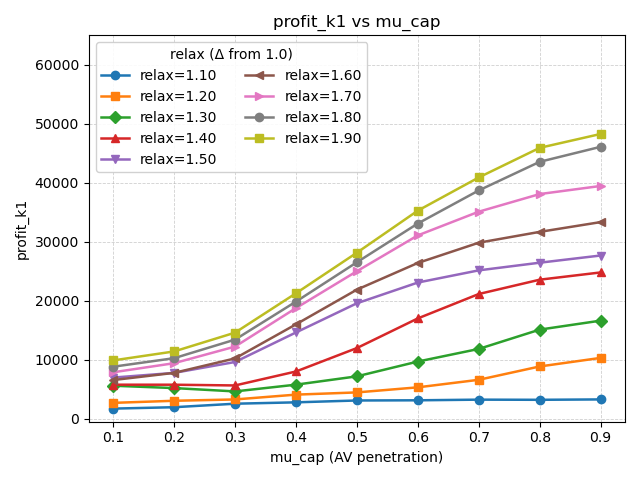}
    \caption{Company 1 Total Profit}
\end{subfigure}
\begin{subfigure}{0.34\textwidth}
    \includegraphics[width=\textwidth]{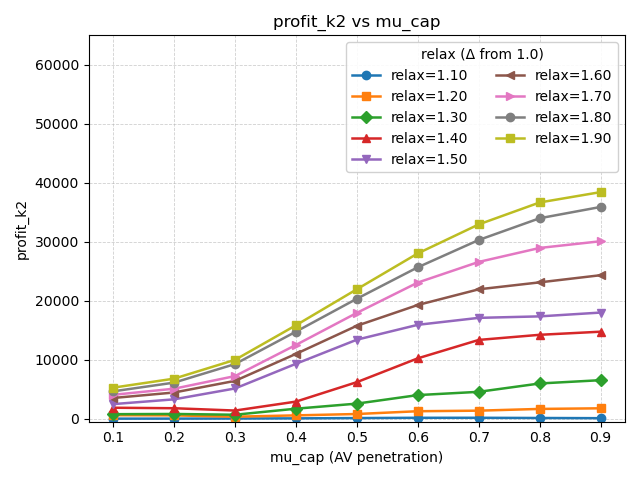}
    \caption{Company 2 Total Profit}
\end{subfigure}
\begin{subfigure}{0.34\textwidth}
    \includegraphics[width=\textwidth]{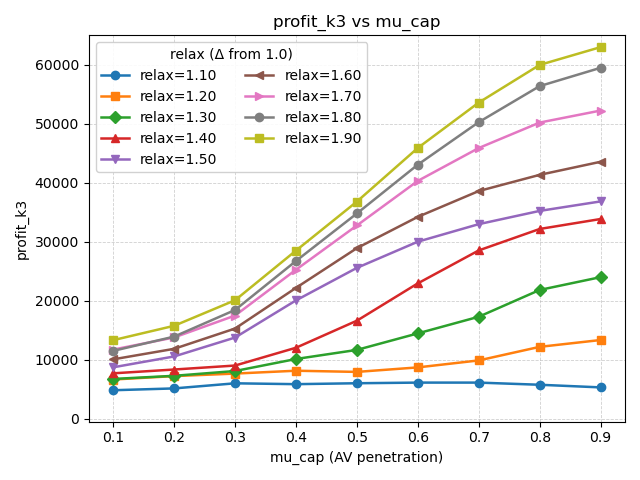}
    \caption{Company 3 Total Profit}
\end{subfigure}
\begin{subfigure}{0.34\textwidth}
    \includegraphics[width=\textwidth]{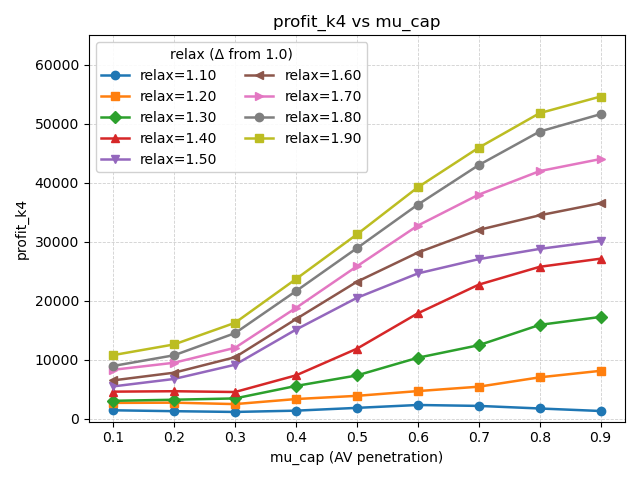}
    \caption{Company 4 Total Profit}
\end{subfigure}
\caption{Total Profit under Different $\mu^{k,{\rm AV}}$} 
\label{fig:profit_diff_muAV}
\end{figure}

\subsection{System and Traveler Analysis}

In the system-related figures, we present key system-level indicators, including VMT, VHT, average Wardrop travel time, and average fare. For VMT and VHT, Figure~\ref{fig:system_impact} shows that their totals generally decrease as the AV relaxation parameter ($\mu^{k,{\rm AV}}$) increases. This indicates that the introduction of AVs can help alleviate system congestion by improving overall network efficiency. Moreover, the results suggest that maintaining $\mu^{k,{\rm AV}}$ at approximately 0.3–0.7 could yield the most balanced system performance. These findings imply that policymakers and transportation planners could promote appropriately regulated AV operations to enhance system-level efficiency.


\begin{figure}[H]
\centering
\begin{subfigure}{0.4\textwidth}
    \includegraphics[width=\textwidth]{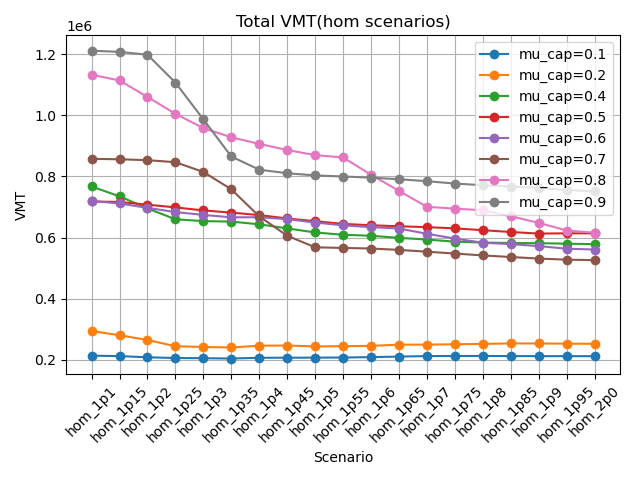}
    \caption{total VMT}
\end{subfigure}
\begin{subfigure}{0.4\textwidth}
    \includegraphics[width=\textwidth]{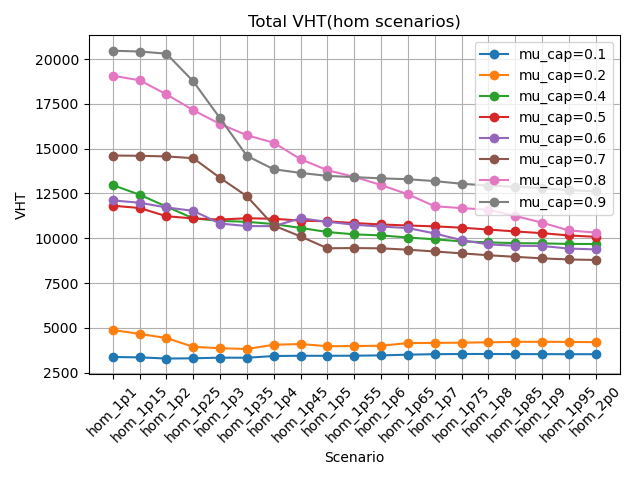}
    \caption{total VHT}
\end{subfigure}
\caption{System-level Impact under Different $\mu^{k,{\rm AV}}$} 
\label{fig:system_impact}
\end{figure}

For average Wardrop time (defined as the average minimum travel time across all OD pairs and travelers), Figure \ref{fig:avg_wardrop} shows that it is not significantly affected by variations in $\mu^{\rm cap}$ and $\mu^{k,{\rm AV}}$. However, subfigure (b) reveals a consistent pattern: the average Wardrop time achieves its minimum at a moderate AV penetration rate ($\mu^{\rm cap} \approx 0.5$) across different relaxation parameters. This suggests that an intermediate level of AV penetration can maximize its positive impact on network congestion.

\begin{figure}[H]
\centering
\begin{subfigure}{0.4\textwidth}
    \includegraphics[width=\textwidth]{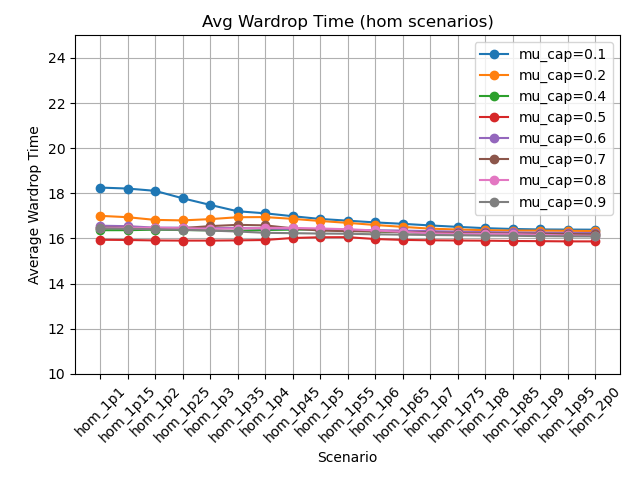}
    \caption{Average Wardrop Time via $\mu^{k,{\rm AV}}$}
\end{subfigure}
\begin{subfigure}{0.4\textwidth}
    \includegraphics[width=\textwidth]{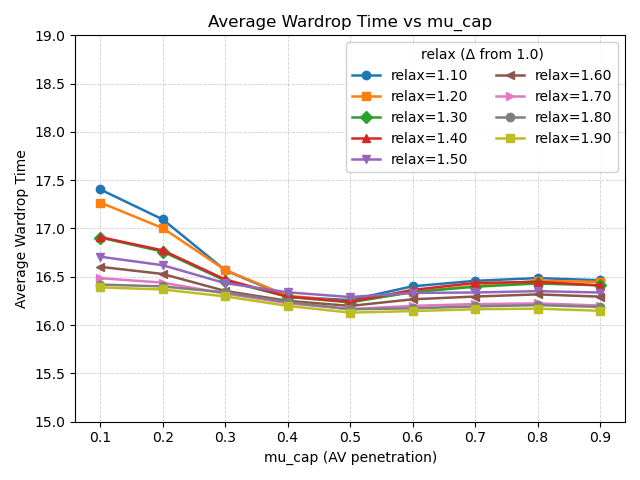}
    \caption{Average Wardrop Time via $\mu^{\rm cap}$}
\end{subfigure}
\caption{System-focus and Traveler-focus Time Impact under Different $\mu^{k,{\rm AV}}$ and $\mu^{\rm cap}$} 
\label{fig:avg_wardrop}
\end{figure}

For average fare, Figure~\ref{fig:avg_fare} shows that the average HV fare remains relatively stable while AV fare increases as $\mu^{k,{\rm AV}}$ rises. When $\mu^{\rm cap}>0.5$ and $\mu^{k,{\rm AV}}<1.4$, the average AV fare becomes lower than the average HV fare. This indicates that, under appropriately managed operations, AVs can provide cost advantages to travelers.

\begin{figure}[H]
\centering
\begin{subfigure}{0.4\textwidth}
    \includegraphics[width=\textwidth]{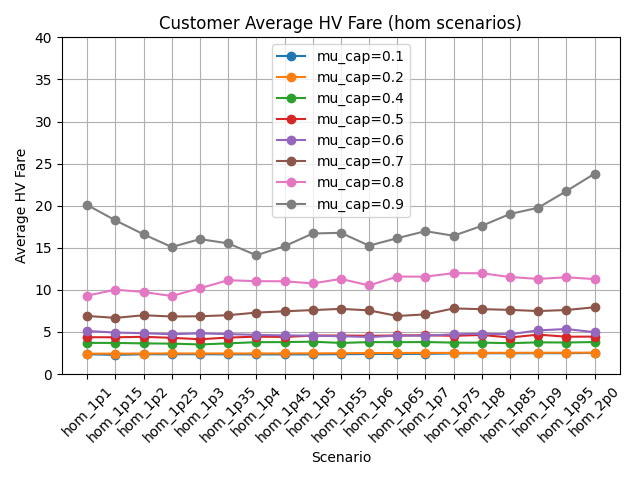}
    \caption{Average HV Fare}
\end{subfigure}
\begin{subfigure}{0.4\textwidth}
    \includegraphics[width=\textwidth]{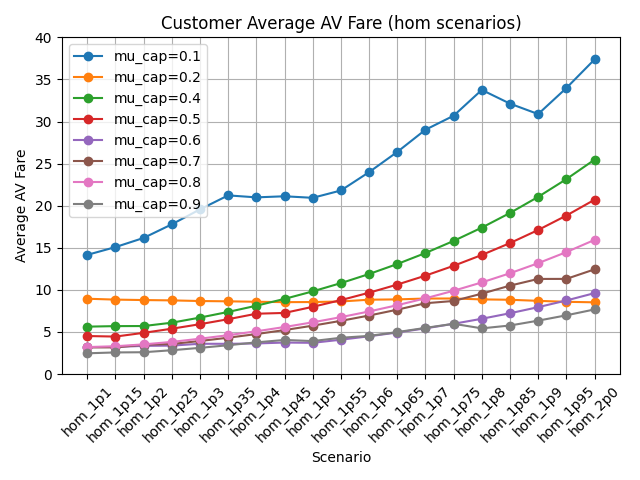}
    \caption{Average AV Fare}
\end{subfigure}
\caption{System-focus and Traveler-focus Fare Impact under Different $\mu^{k,{\rm AV}}$} 
\label{fig:avg_fare}
\end{figure}

For traveler-focused analysis, Figure~\ref{fig:avg_wardrop} shows that when the transportation system reaches a balanced composition of AVs and HVs, solo drivers can experience modest benefits, such as reduced travel times and lower congestion levels. Figure~\ref{fig:avg_fare} suggests that ride-hailing customers’ mode choices are sensitive to the prevailing AV penetration level, highlighting that market composition plays a key role in shaping travelers’ cost-based preferences.

\subsection{Summary}
This section reports the results in the solution
of a nonlinear complementarity problem, formulated from the KKT conditions of each subproblem, using the PATH solver in GAMS. On a small network, the model demonstrates the reliability and effectiveness of the proposed waiting-time formulation. On the Sioux Falls network, we further analyze the roles of $\mu^{\rm cap}$ and $\mu^{k,{\rm AV}}$ from company-level, system-level, and traveler-level perspectives. The results indicate that introducing a moderate proportion of AVs into the fleet can simultaneously enhance company profits, alleviate network congestion, reduce travel times for human drivers, and lower travel costs for passengers. 

\section{Conclusion} \label{sec:conclusion}
This paper proposes a unified equilibrium model 
for mixed-fleet ride-hailing systems, integrating 
the interactions among MiFleet TNCs, travelers, and 
traffic. The framework is flexible and can be 
extended to accommodate multiple heterogeneous fleet 
types, explicitly differentiating AV and HV behaviors 
across operational stages. It further introduces the
idea of customer waiting functions that include as a 
special case of a truncated, congestion-dependent, 
queue-based formulation for customer waiting time 
that endogenously links customer delay to network 
congestion.  We provide a rigorous proof of 
equilibrium existence under the mere continuity
of the model functions and the weak positivity of the
path costs. These theoretical advances enable a more
realistic and analytically tractable representation 
of customer waiting and traffic flow interactions 
in mixed-autonomy environments.

The proposed model effectively captures multilevel 
impacts across company, system, and traveler scales, 
uncovering key patterns in AV deployment. The 
numerical results indicate that a moderate or properly 
tuned AV penetration rate achieves the most balanced 
outcomes, enhancing company profitability, mitigating
congestion, and improving traveler welfare. 
In contrast, excessive automation introduces 
inefficiencies, such as system congestion, 
underscoring the importance of appropriate regulatory
oversight.

Beyond these immediate findings, the proposed
framework provides a foundation for future research 
on equilibrium modeling in intelligent transportation 
systems and ride-hailing company strategies, including
the analysis of market performance and design of
suitable pricing strategies under low and high AV penetration rates. By combining 
analytical rigor with practical interpretability, 
our study contributes to the broader vision of 
sustainable, coordinated, and human-centered 
automation, offering guidance to policymakers and 
platform operators navigating the evolving landscape 
of mixed-autonomy mobility.

\gap

\noindent {\bf Acknowledgements.}  The authors
are very grateful to Professors Yafeng Yin and 
Yueyue Fan for a very stimulating, insightful,
and inspiring discussion on the customer
waiting times and also for the idea of waiting-induced
travel demands as a form of elastic demands
in general.  In particular, Professor Yin shared
his perspective on the capping in the well-known
BPR travel time function as a consequence of a
queueing formalism.  While acknowledging some
similarities, our capped-queue waiting time
is different from the capping in the BPR 
and more general queue-based travel time functions.
Most importantly, the idea of capping in our contexts
captures the possibility for travelers not to wait
for TNC service that is a realistic feature in travel
choices with solo driving.  

\appendix
\section*{Appendix}

\renewcommand{\thesection}{A\arabic{section}}

\setcounter{section}{0}

\section{Proof of Theorem \ref{th:equivalence and existence}}
\begin{proof} We establish statement (A) by 
showing that the KKT conditions of
the VI whose feasible set $\boldsymbol{V}$ 
is polyhedral is equivalent to the said NCP.
The only difference 
between the former conditions and the latter
is in the complementarity conditions of the 
$t$ and $h$-variables.
Specifically, the complementarity
conditions of these travel time and path
flow variables need to be modified to 
include the
multipliers---denoted by $u$ and $v$ 
below---of the bound constraints, 
resulting in the following relevant
modified constraints:
\[
\begin{array}{l}
\left. \begin{array}{l}
0 \, \leq \, t_{ij}^{\rm SV} \, \perp \,
\displaystyle{
\sum_{p \in\mathcal{P}_{ij}}
} \, h_p^{\rm SV} - D_{ij}^{\rm SV} + u_{ij}^{\rm SV}
\, \geq \, 0 \\ [0.2in]
0 \, \leq \, u_{ij}^{\rm SV} \, \perp \,
\bar{t} - t_{ij}^{\rm SV} \, \geq \, 0
\end{array} \right\} 
\epc \forall \, 
x \in \mathcal{X}, \, k \in \mathcal{K}, 
\, (i,j) \in {\cal W}^{k,x} \\ [0.35in]
\left. \begin{array}{l}
0 \, \leq \, t_{ij}^{k,x} \, \perp \,
\displaystyle{
\sum_{p \in\mathcal{P}_{ij}}
} \, h_p^{k,x} - D_{ij}^{k,x} + u_{ij}^{k,x}
\, \geq \, 0 \\ [0.2in]
0 \, \leq \, u_{ij}^{k,x} \, \perp \,
\bar{t} - t_{ij}^{k,x} \, \geq \, 0
\end{array} \right\} 
\epc \forall \, 
x \in \mathcal{X}, \, k \in \mathcal{K}, 
\, (i,j) \in {\cal W}^{k,x} \\ [0.35in]
\left. \begin{array}{l}
0 \, \leq \, t_{s,i}^{k,x} \, \perp \,
\displaystyle{
\sum_{p \in\mathcal{P}_{si}}
} \, h_p^{k,x} - \displaystyle{
\sum_{j\in\mathcal{D}^{k,x}}
} \, z_{s,ij}^{k,x} + u_{s,i}^{k,x} 
\, \geq \, 0 \\ [0.2in]
0 \, \leq \, u_{s,i}^{k,x} \, \perp \, 
\bar{t} - t_{s,i}^{k,x} \, \geq \, 0
\end{array} \! \right\} \,
\forall \, x \in \mathcal{X}, \, 
k \in \mathcal{K}, \, 
(s,i) \in \mathcal{D}^{k,x} \times 
\mathcal{O}^{k,x} \\ [0.35in]
\left. \begin{array}{l}
0 \, \leq \, h_p^{\rm SV} \, \perp \, 
C_p( \boldsymbol{h} ) - t_{ij}^{\rm SV} + 
v_p^{\rm SV} \, \geq \, 0 \\ [0.1in]
0 \, \leq \, v_p^{\rm SV} \, \perp \,
\bar{h} - h_{p}^{\rm SV} \, \geq \, 0
\end{array} \right\} \epc 
\forall \, (i,j) \in {\cal W}, 
\, p \in\mathcal{P}_{ij} \\ [0.3in]
\left. \begin{array}{l}
0 \, \leq \, h_p^{k,{\rm AV}} \, \perp \, 
\mu^{k,{\rm AV}} C_p(\boldsymbol{h}) - 
t_{ij}^{k,{\rm AV}} + v_p^{k,{\rm AV}}
\geq 0 \\ [0.1in]
0 \, \leq \, v_p^{k,{\rm AV}} \, \perp \,
\bar{h} - h_{p}^{k,{\rm AV}} \, \geq \, 0
\end{array} \right\}
\epc \forall \, (i,j) \in {\cal W}, \, 
p \in \mathcal{P}_{ij}, \, k\in\mathcal{K} 
\\ [0.3in]
\left. \begin{array}{l}
0 \, \leq \, h_p^{k,{\rm AV}} \, \perp \,  
C_p( \boldsymbol{h} ) - 
t_{s,i}^{k,{\rm AV}} + v_p^{k,{\rm AV}}
\geq 0 \\ [0.1in]
0 \leq v_p^{k,{\rm AV}} \perp 
\bar{h} - h_{p}^{k,{\rm AV}} \, \geq \, 0
\end{array} \right\} \epc \forall \, 
(s,i) \in \mathcal{D}\times\mathcal{O}, 
\, p \in\mathcal{P}_{si}, \, 
k \in \mathcal{K}
\end{array} \]
\[ \begin{array}{l}
\left. \begin{array}{l}
0 \, \leq \, h_p^{k,{\rm HV}} \, \perp \, 
C_p( \boldsymbol{h}) 
- t_{ij}^{k,{\rm HV}} + v_p^{k,{\rm HV}}
\, \geq \, 0 \\ [0.1in]
0 \, \leq \, v_p^{k,{\rm HV}} \, \perp \, 
\bar{h} - h_p^{k,{\rm HV}} \, \geq \, 0
\end{array} \right\} \epc
\forall \, (i,j) \in {\cal W}, \, 
p \in \mathcal{P}_{ij}, \, k\in\mathcal{K} 
\\ [0.3in]
\left. \begin{array}{l}
0 \, \leq \, h_p^{k,{\rm HV}} \, \perp \, 
\mu^{k,{\rm HV}} C_p( \boldsymbol{h}) - 
t_{s,i}^{k,{\rm HV}} + v_p^{k,{\rm HV}}
\, \geq \, 0 \\ [0.1in]
0 \, \leq \, v_p^{k,{\rm HV}} \, \perp \,
\bar{h} - h_p^{k,{\rm HV}} \, \geq \, 0
\end{array} \! \right\} \forall \, 
(s,i) \in \mathcal{D}\times\mathcal{O}, 
\, p \in\mathcal{P}_{si}, \, 
k \in \mathcal{K}.
\end{array}  \]
We claim that all the multipliers $u$'s and 
$v$'s of the bound constraints are equal 
to zero.
In fact, suppose that $u_{ij}^{k,x} > 0$
for some $(k,x) \in {\cal K} \times {\cal X}$ 
and $(i,j) \in {\cal W}^{k,x}$.  We then have
$t_{ij}^{k,x} = \bar{t} > 0$.  Thus, by
complementarity, it follows that
\[
\bar{h} \, > \, D_{ij}^{k,x} \, = \, 
\displaystyle{
\sum_{p \in\mathcal{P}_{ij}}
} \, h_p^{k,x} + u_{ij}^{k,x} \, > \,
h_p^{k,x}, \epc \forall \, 
p \, \in \, {\cal P}_{ij}.
\]
Hence, by complementarity, $v_p^{k,x} = 0$
for all $p \in {\cal P}_{ij}$. This yields
\[
t_{ij}^{k,x} \, = \, 
\mu^{k,x} \, C_p(\boldsymbol{h}) \, < \, 
\bar{t}
\]
which contradicts the choice of $t_{ij}^{k,x}$.
Similarly, we can prove that
all the other bound multipliers are zero.

Statement (B) is an immediate consequence of 
Proposition~\ref{pr:existence VI for traffic}
under the identifications: 
$\boldsymbol{x}$ being the tuple
$\left( \boldsymbol{z},
\boldsymbol{h},\boldsymbol{\theta} \right)$,
$\boldsymbol{y} = \boldsymbol{\nu}$, 
$\boldsymbol{X} = {\cal Z} \times {\cal H} 
\times [ \, 0, \, 1 \, ]^K$,
$\boldsymbol{Y}$ being 
$\mathbb{R}_+^{2| {\cal K} |}$ and
$\boldsymbol{y}^{\rm ref}$ being the origin,
provided that we can show that the tuples
$\{ \boldsymbol{\nu}^{\tau} \}$ are bounded
for $\tau > 0$, where each 
$\boldsymbol{\nu}^{\tau}$ satisfies
\begin{equation} \label{eq:tau nu}
\begin{array}{l}
0 \, \leq \, \nu^{\tau;k}_{\rm AV} \, \perp \,
\tau \,  \nu^{\tau;k}_{\rm AV} + 
\mu^{\rm cap}_{\rm AV} N^k \, - \\ [0.1in]
\hspace{1in} \displaystyle{
\sum_{(i,j)\in \mathcal{W}^{k,{\rm AV}}}
} \, \displaystyle{
\sum_{s\in\mathcal{D}^{k,{\rm AV}}}
} \, ( t_{s,i}^{\tau;k,{\rm AV}} +
t_{ij}^{\tau;k,{\rm AV}} )
z^{\tau;k,{\rm AV}}_{s,ij}  
\, \geq \, 0, \epc 
\forall \, k \in \mathcal{K} \\ [0.3in]
0 \, \leq \, \nu^{\tau;k} \, \perp \,  
\tau \, \nu^{\tau;k} + N^k \, - \\ [0.1in]
\hspace{1in} \displaystyle{
\sum_{x \in {\cal X}}
} \, \displaystyle{
\sum_{(i,j)\in \mathcal{W}^{k,x}}
} \, \displaystyle{
\sum_{s\in\mathcal{D}^{k,x}}
} \, ( t_{s,i}^{\tau;k,x} + 
t_{ij}^{\tau;k,x} ) z_{ij}^{\tau;k,x}
\, \geq \, 0, \epc 
\forall \, k \in \mathcal{K} 
\end{array} \end{equation}
for some (bounded) tuple
$\left( \boldsymbol{z}^{\tau},
\boldsymbol{h}^{\tau},
\boldsymbol{\theta}^{\, \tau} \right) \in 
{\cal Z} \times {\cal H} \times 
[ \, 0,1 \, ]^K$,
which along with suitable multipliers,
satisfy the NCP$_{\rm main}$.  For convenience
of reference, we
re-write in full this NCP with all
inequalities along with their respective
(nonnegative) multipliers:
\[
\begin{array}{l}
0 \, \leq \, z_{s,ij}^{\tau;k,{\rm AV}} 
\, \perp \, - \wt{R}_{s,ij}^{k,{\rm AV}} -
\alpha_1^{k,x} t_{ij}^{\tau;k,{\rm AV}} + 
\beta_{1}^{k,x} ( t_{s,i}^{\tau;k,{\rm AV}} + 
t_{ij}^{\tau;k,{\rm AV}} ) - 
\beta_3^{k,{\rm AV}}
t_{s,i}^{\tau;k,{\rm AV}}\, + \\ [0.1in]
\hspace{0.4in} \phi_s^{\tau;k,{\rm AV}} - 
\lambda_{ij}^{\tau;k,{\rm AV}} + 
( t_{s,i}^{\tau;k,{\rm AV}} + 
t_{ij}^{\tau;k,{\rm AV}} ) \,
( \nu^{\tau;k} + \nu^{\tau;k}_{\rm AV} )  
\, \geq \, 0, \\ [0.1in]
\hspace{2in} \forall \, k \in \mathcal{K}, \, 
s \in {\cal D}^{k,{\rm AV}}, 
\, (i,j) \in \mathcal{W}^{k,{\rm AV}}  
\\ [0.1in]
0 \, \leq \, z_{s,ij}^{\tau;k,{\rm HV}} 
\, \perp \, - \wt{R}_{s,ij}^{k,{\rm HV}} -
\alpha_1^{k,x} t_{ij}^{\tau;k,{\rm HV}} + 
\beta_{1}^{k,x} ( t_{s,i}^{\tau;k,{\rm HV}} + 
t_{ij}^{\tau;k,{\rm HV}} ) - 
\beta_3^{k,{\rm HV}}
t_{s,i}^{\tau;k,{\rm HV}}\, + \\ [0.1in]
\hspace{0.4in} \phi_s^{\tau;k,{\rm HV}} - 
\lambda_{ij}^{\tau;k,{\rm HV}} + 
( t_{s,i}^{\tau;k,{\rm HV}} + 
t_{ij}^{\tau;k,{\rm HV}} ) \, \nu^{\tau;k} 
\, \geq \, 0, \\ [0.1in]
\hspace{2in} \forall \, k \in \mathcal{K}, \, 
s \in {\cal D}^{k,{\rm HV}}, 
\, (i,j) \in \mathcal{W}^{k,{\rm HV}}  
\\ [0.2in]
0 \, \leq \, \phi_s^{\tau;k,x} \, \perp \, 
\displaystyle{
\sum_{i\in\mathcal{O}^{k,x}}
} \, D_{is}^{\tau;k,x} - \displaystyle{
\sum_{(i,j)\in \mathcal{W}^{k,x}}
} \, z_{s,ij}^{\tau;k,x} \, \geq \, 0, \ 
\forall \, k\in\mathcal{K}, \, 
x \in \mathcal{X}, 
\, s \in \mathcal{D}^{k,x} \\ [0.3in]
0 \, \leq \, \lambda_{ij}^{\tau;k,x} \, \perp \, 
\displaystyle{
\sum_{s \in \mathcal{D}^{k,x}}
} \, z_{s,ij}^{\tau;k,x} - D_{ij}^{\tau;k,x} 
\, \geq \, 0, \hspace{0.2in} 
\forall \, k \,\in \mathcal{K}, \, 
x \in \mathcal{X}, 
\, (i, j) \in \mathcal{W}^{k,x}
\end{array} \]
\[ \begin{array}{l}
0 \, \leq \, D_{ij}^{\tau;k,x} \, \perp \, 
F_{ij}^{k,x} 
+ \alpha_{1}^{k,x} (t_{ij}^{\tau;k,x}- 
t_{ij}^{\, 0}) + \alpha_{2}^{k,x} 
d_{ij}^{\, 0}
+ \gamma_{1}^{k,x}t_{ij}^{\tau;k,x} +
\gamma_{2}^{k,x} w_{ij}^{\tau;k,x} \\ [0.1in]
\hspace{1in} - \, \sigma_{ij}^{\tau} + 
\lambda_{ij}^{\tau;k,x} \, 
\geq \, 0, \hspace{0.3in} 
\forall \, (k, x) \in \mathcal{K} \times 
\mathcal{X}, \, (i,j) \in \mathcal{W}^{k,x} 
\\ [0.1in]
0 \, \leq \, D_{ij}^{\tau;{\rm SV}} 
\, \perp \, 
\alpha_1^{\rm SV}t_{ij}^{\tau;{\rm SV}} + 
\alpha^{\rm SV}_2 d_{ij}^{\, 0} - 
\sigma_{ij}^{\tau} \, \geq \, 0, \epc 
\forall \, (i,j) \in \mathcal{W} \\ [0.2in]
0 \, \leq \sigma_{ij}^{\tau} \, \perp \,
D_{ij}^{\tau;\rm SV} + \displaystyle{
\sum_{k\in\mathcal{K}}
} \, \displaystyle{
\sum_{x\in\mathcal{X}}
} \, D_{ij}^{\tau;k,x} - D_{ij} 
\, \geq \, 0, \epc 
\forall \, (i,j) \in \mathcal{W} \\ [0.2in]
0 \, \leq \, t_{ij}^{\tau;{\rm SV}} 
\, \perp \, \displaystyle{
\sum_{p \in \mathcal{P}_{ij}}
} \, h_p^{\tau;{\rm SV}} - 
D_{ij}^{\tau;{\rm SV}}
\, \geq \, 0, \hspace{0.8in}  
\forall \, (i,j) \in {\cal W} \\ [0.25in]
0 \, \leq \, t_{ij}^{\tau;k,x} \, \perp \,
\displaystyle{
\sum_{p \in\mathcal{P}_{ij}}
} \, h_p^{\tau;k,x} - D_{ij}^{\tau;k,x} 
\, \geq \, 0,
\hspace{0.4in} \forall \, 
x \in \mathcal{X}, \, k \in \mathcal{K}, 
\, (i,j) \in {\cal W}^{k,x} \\ [0.25in]
0 \, \leq \, t_{s,i}^{\tau;k,x} \, \perp \,
\displaystyle{
\sum_{p \in\mathcal{P}_{si}}
} \, h_p^{\tau;k,x} - \displaystyle{
\sum_{j\in\mathcal{D}^{k,x}}
} \, z_{s,ij}^{\tau;k,x} \, \geq \, 0, 
\\ [0.2in]
\hspace{2in} \forall \, x \in \mathcal{X}, \, 
k \in \mathcal{K}, \, 
(s,i) \in \mathcal{D}^{k,x} \times 
\mathcal{O}^{k,x} \\ [0.1in]
0 \leq h_p^{\tau;\rm SV} \, \perp \, 
C_p( \boldsymbol{h}^{\tau} ) - 
t_{ij}^{\tau;{\rm SV}} \, \geq \, 0, 
\hspace{0.8in} \forall \, (i,j) \in {\cal W}, 
\, p \in\mathcal{P}_{ij} \\ [0.1in]
0 \leq h_p^{\tau;k,{\rm AV}} \, \perp \, 
\mu^{k,{\rm AV}} C_p(\boldsymbol{h}^\tau) - 
t_{ij}^{\tau;k,{\rm AV}} \geq 0,
\epc \forall \, (i,j) \in {\cal W}, \, 
p \in \mathcal{P}_{ij}, \, k\in\mathcal{K} 
\\ [0.1in]
0 \, \leq \, h_p^{\tau;k,{\rm AV}} \, 
\perp \,  C_p( \boldsymbol{h}^{\tau} ) - 
t_{s,i}^{\tau;k,{\rm AV}} \geq 0, \epc 
\forall \, 
(s,i) \in \mathcal{D} \times \mathcal{O}, 
\, p \in\mathcal{P}_{si}, \, 
k \in \mathcal{K} \\ [0.1in]
0 \, \leq \, h_p^{\tau;k,{\rm HV}} 
\, \perp \, 
C_p( \boldsymbol{h}^{\tau}) 
- t_{ij}^{\tau;k,{\rm HV}} \, \geq \, 0, 
\hspace{0.2in} 
\forall \, (i,j) \in {\cal W}, \, 
p \in \mathcal{P}_{ij}, \, k\in\mathcal{K} 
\\ [0.1in]
0 \, \leq \, h_p^{\tau;k,{\rm HV}} 
\, \perp \, 
\mu^{k,{\rm HV}} C_p( \boldsymbol{h}^{\tau}) - 
t_{s,i}^{\tau;k,{\rm HV}} \, \geq \, 0, 
\epc \forall \, 
(s,i) \in \mathcal{D}\times\mathcal{O}, 
\, p \in\mathcal{P}_{si}, \, 
k \in \mathcal{K}.
\end{array} \]
We sum up the complementarity 
conditions for a sequence of positive scalars
$\{ \tau_n \}$ up to the 
$\sigma_{ij}$-complementarities and mark 
all the terms that can be canceled when
these equations are added up (in what follows,
we use the superscript ``$n$'' as a short-hand
for $\tau_n$ in the variables):
\[ \begin{array}{l}
\displaystyle{
\sum_{k\in\mathcal{K}}
} \displaystyle{
\sum_{s \in \mathcal{D}^{k,{\rm AV}}}
} \displaystyle
\sum_{(i,j) \in \mathcal{W}^{k,{\rm AV}}
}  
\left\{ \! \begin{array}{l}
- \wt{R}_{s,ij}^{\, k,{\rm AV}}
z_{s,ij}^{n;k,{\rm AV}} + 
(\beta_1^{k,x} - \alpha_1^{k,{\rm AV}}) 
t_{ij}^{n;k,{\rm AV}} z_{s,ij}^{n;k,{\rm AV}} 
+ \\ [0.15in] 
(\beta_1^{k,{\rm AV}} - 
\beta_3^{k,{\rm AV}} ) 
t_{s,i}^{n;k,{\rm AV}}z_{s,ij}^{n;k,{\rm AV}} 
+ \color{green}\cancel{\phi_s^{n;k,{\rm AV}}
z_{s,ij}^{n;k,{\rm AV}}}  \\ [0.15in]
- \, {\color{red}\cancel{\lambda_{ij}^{n;k,{\rm AV}}
z_{s,ij}^{n;k,{\rm AV}}}} 
+ (t_{s,i}^{n;k,{\rm AV}} + 
t_{ij}^{n;k,{\rm AV}} ) ( \nu^{n;k}_{\rm AV} 
+ \nu^{n;k} ) z_{s,ij}^{n;k,{\rm AV}}
\end{array} \right\} + \\ [0.55in]
\displaystyle{
\sum_{k\in\mathcal{K}}
} \, \displaystyle{
\sum_{s \in \mathcal{D}^{k,{\rm HV}}}
} \, \displaystyle
\sum_{(i,j) \in \mathcal{W}^{k,{\rm HV}}
} \,  
\left\{ \begin{array}{l}
- \wt{R}_{s,ij}^{\, k,{\rm HV}}
z_{s,ij}^{n;k,{\rm HV}} + 
(\beta_{1}^{k,{\rm HV}} - 
\alpha_{1}^{k,{\rm HV}}) 
t_{ij}^{n;k,{\rm HV}} z_{s,ij}^{n;k,{\rm HV}} 
+ \\ [0.15in] 
(\beta_1^{k,{\rm HV}} - 
\beta_3^{k,{\rm HV}} ) 
t_{s,i}^{n;k,{\rm HV}}z_{s,ij}^{n;k,{\rm HV}} 
+ \color{green}\cancel{\phi_s^{n;k,{\rm HV}}}
z_{s,ij}^{n;k,{\rm HV}} \\ [0.15in]
- \, {\color{red}\cancel{\lambda_{ij}^{n;k,{\rm HV}}
z_{s,ij}^{n;k,{\rm HV}}}} + 
(t_{s,i}^{n;k,{\rm HV}} + 
t_{ij}^{n;k,{\rm HV}} ) \nu^{n;k}  
z_{s,ij}^{n;k,{\rm HV}}
\end{array} \right\} \, + 
\end{array} \]
\[ \begin{array}{l}  
\displaystyle{
\sum_{k\in\mathcal{K}}
} \, \displaystyle{
\sum_{x\in\mathcal{X}}
} \, \displaystyle{
\sum_{s\in\mathcal{D}^{k,x}}
} \,  
\left\{ \, \underbrace{\displaystyle{
\sum_{i\in\mathcal{O}^{k,x}}
} \, D_{is}^{n;k,x}\phi_s^{n;k,x}}_{\mbox{
nonnegative}} -
{\color{green}\cancel{\displaystyle{
\sum_{(i,j)\in \mathcal{W}^{k,x}} 
} \, z_{s,ij}^{n;k,x}\phi_s^{n;k,x}}} \right\} 
\, + \\ [0.4in]
\displaystyle{
\sum_{k\in\mathcal{K}}
} \, \displaystyle{
\sum_{x\in\mathcal{X}}
} \, \displaystyle{
\sum_{(i,j)\in\mathcal{W}^{k,x}} 
} \,  
\left\{ {\color{red}\cancel{
\displaystyle{
\sum_{s \in \mathcal{D}^{k,x}} 
} \, z_{s,ij}^{n;k,x}\lambda_{ij}^{n;k,x}}} - 
{\color{purple}\cancel{D_{ij}^{n;k,x}\lambda_{ij}^{n;k,x}}} 
\right\} \, + \,  \\ [0.3in]
\displaystyle{
\sum_{(i,j)\in\mathcal{W}}
} \, \displaystyle{
\sum_{x\in\mathcal{X}}
} \, \displaystyle{
\sum_{k\in\mathcal{K}_{ij}^x}
} \, \left\{\begin{array}{l} 
F_{ij}^{k,x}D_{ij}^{n;k,x} 
+ \alpha_{1}^{k,x} (t_{ij}^{n;k,x} - 
t_{{ij}}^0)D_{ij}^{n;k,x} \, + \\ [0.15in]
\alpha_{2}^{k,x} d_{ij}^{\, 0}D_{ij}^{n;k,x}
+ \gamma_{1}^{k,x}t_{ij}^{n;k,x}
D_{ij}^{n;k,x} + 
\gamma_{2}^{k,x} {w}_{ij}^{n;k,x}
D_{ij}^{n;k,x} \, - \\ [0.15in]
\cancel{\sigma_{ij}^nD_{ij}^{n;k,x}} + 
{\color{purple}\cancel{
\lambda_{ij}^{n;k,x} 
D_{ij}^{n;k,x}}} \end{array} \right\} +
\\ [0.5in]
\displaystyle{
\sum_{(i,j)\in\mathcal{W}}
} \, \left\{ 
\alpha_1^{\rm SV}t_{ij}^{n;\rm SV}
D_{ij}^{n;\rm SV} + 
\alpha^{\rm SV}_2 d_{ij}^{\, 0}
D_{ij}^{n;\rm SV} - 
\cancel{\sigma_{ij}^nD_{ij}^{n;{\rm SV}}} 
\right\} + \\ [0.1in] 
\displaystyle{
\sum_{(i,j)\in\mathcal{W}}
} \left\{ 
\cancel{D_{ij}^{n;\rm SV}\sigma_{ij}^n} + 
\cancel{{\displaystyle{
\sum_{k\in\mathcal{K}}
} \, \displaystyle{
\sum_{x\in\mathcal{X}}
} \, D_{ij}^{n;k,x}\sigma_{ij}^n}} - 
D_{ij}\sigma_{ij}^n \right\} = 0
\end{array} \]
After all the cancellations, the
only possibly negative unbounded
terms on the left-hand side is 
 $-D_{ij} \sigma_{ij}^n$. 
Since $\{ t_{ij}^{n;{\rm SV}} \}$ is 
bounded, it follows from
$\alpha_1^{\rm SV}t_{ij}^{n;{\rm SV}} + 
\alpha^{\rm SV}_2 d_{ij}^{\, 0} \geq 
\sigma_{ij}^n$ that 
$\{ \sigma_{ij}^n \}$ is bounded. 
Consequently, since the left-hand side sums up 
to zero, it follows that the (nonnegative) terms
\[
(t_{s,i}^{n;k,{\rm AV}} + 
t_{ij}^{n;k,{\rm AV}} ) \nu^{n;k}_{\rm AV} 
z_{s,ij}^{n;k,{\rm AV}} \ \mbox{ and } \
(t_{s,i}^{n;k,{\rm AV}} + 
t_{ij}^{n;k,{\rm AV}} ) \nu^{n;k} 
z_{s,ij}^{n;k}
\]
are bounded. 
Suppose there is
a sequence of positive scalars $\{ \tau_n \}$
such that $\{ \nu^{n;k} \} \to \infty$ 
as $n \to \infty$ for some 
$k \in {\cal K}$. It follows from the
above summation that 
$\left\{ \, \displaystyle{
\sum_{x \in X}
} \, \displaystyle{
\sum_{(i,j) \in {\cal W}^{k,x}}
} \, \displaystyle{
\sum_{s \in {\cal D}^{k;x}}
} \, z_{s,ij}^{n;k,x} ( t_{s,i}^{n;k,x} +
t_{ij}^{n;k,x} ) \, \right\} \to 0$
as $n \to \infty$.  
Since
$\nu^{n;k} > 0$ for all $n$
sufficiently large, we have
\[
\tau_n \, \nu^{n;k} + N^k - \underbrace{
\displaystyle{
\sum_{x \in X}
} \, \displaystyle{
\sum_{(i,j) \in {\cal W}^{k,x}}
} \, \displaystyle{
\sum_{s \in {\cal D}^{k;x}}
} \, z_{s,ij}^{n;k,x} ( t_{s,i}^{n;k,x} +
t_{ij}^{n;k,x} )}_{\mbox{converges
to zero}} = 0
\]
which is a contradiction because $N^k > 0$.
In a similar way, we can also obtain a contradiction
if $\{ \nu^{n;k}_{\rm AV} \}$ is unbounded for some
$k$.  
\end{proof}

\section{Proof of Proposition \ref{pr:existence VI for traffic}}
\begin{proof}  We apply the homotopy 
invariance principle of the degree 
\cite{facchinei2003finite}[Definition~2.11, part~(A3)]
 of a continuous mapping
to the homotopy
\[
H(\boldsymbol{x},\boldsymbol{y},t) 
\, \triangleq \, \left( 
\begin{array}{l}
\boldsymbol{x} - \Pi_{\boldsymbol{X}}( 
\boldsymbol{x} - \Phi(\boldsymbol{x},
\boldsymbol{y}) )\\ [0.1in]
\boldsymbol{y} - \Pi_{\boldsymbol{Y}}( t 
(\boldsymbol{y} - 
\Psi(\boldsymbol{x},\boldsymbol{y})) + 
( 1 - t ) \boldsymbol{y}^{\rm ref})
\end{array} \right), \epc \mbox{for 
$t \in [ \, 0,1 \, ]$},
\]
where $\Pi_S$ is the Euclidean projector onto
a closed convex set $S$.  It suffices that 
the set: $\displaystyle{
\operatornamewithlimits{\bigcup}_{
t \in [ \, 0, 1 \, )}
} \, H(\bullet,\bullet,t)^{-1}(0)$ is bounded.
It is easy to see that a pair 
$(\boldsymbol{x},\boldsymbol{y}) 
\in H(\bullet,\bullet,t)^{-1}(0)$ for
some $t \in [ \, 0,1 \, )$ if and only if
$(\boldsymbol{x},\boldsymbol{y})$ is a 
solution of the VI $(F^{\, \tau},
\boldsymbol{X} \times \boldsymbol{Y})$ for 
$\tau = \displaystyle{
\frac{1 - t}{t}
}$.  By assumption, such pair 
$(\boldsymbol{x},\boldsymbol{y})$ is bounded.
\end{proof}

\bibliographystyle{elsarticle-harv}
\bibliography{references} 
\end{document}